\tikzstyle{new style 0}=[fill=black, draw=black, shape=circle, scale=0.4]
\tikzstyle{new style 1}=[fill=red, draw=black, shape=rectangle]
\tikzstyle{new style 2}=[fill={rgb,255: red,191; green,255; blue,0}, draw={rgb,255: red,191; green,255; blue,0}, shape=circle]
\tikzstyle{new edge style 0}=[-, draw=red, line width=4pt]
\tikzstyle{new edge style 1}=[-, draw={rgb,255: red,191; green,255; blue,0}, line width=4pt]
\tikzstyle{new edge style 2}=[-, fill={rgb,255: red,255; green,128; blue,0}]
\tikzstyle{new edge style 3}=[-, draw={rgb,255: red,255; green,128; blue,0}, line width=3pt]
\tikzstyle{new edge style 4}=[-, fill=green, draw=green]
\tikzstyle{new edge style 5}=[-, fill=red, draw=red]
\tikzstyle{new edge style 6}=[-, fill=blue, draw=blue]
\tikzstyle{new edge style 7}=[-, dashed]
\tikzstyle{new edge style 8}=[-, fill=none, line width=0.5pt]
\tikzstyle{new edge style 9}=[-, line width=2pt, draw=red]
 \declaretheorem[name={Open Problem}]{problem}
\newcommand\floor[1]{\left\lfloor#1\right\rfloor}
\newcommand\ceil[1]{\left\lceil#1\right\rceil}
\newcommand{\poly}{\mathrm{poly}}
\newcommand{\polylog}{\mathrm{polylog}}
\newcommand{\Oh}[1]{\mathcal{O}{\left(#1\right)}}
\newcommand{\Ot}[1]{\widetilde{\mathcal{O}}{\left(#1\right)}}
\newcommand{\eps}{\varepsilon}
\newcommand{\sdiff}{\bigtriangleup}
\newcommand{\Rr}{\mathcal{R}}
\newcommand{\dist}{\mathrm{dist}}
\newcommand{\diam}{\mathrm{diam}}
\newcommand{\AddNeighbours}{\textnormal{\textsc{AddNeighbours}}}
\newcommand{\ListDifferences}{\textnormal{\textsc{ListDifferences}}}
\newcommand{\ExpandBalls}{\textnormal{\textsc{ExpandBalls}}}
\newcommand{\ExpandBallsRecursively}{\textnormal{\textsc{ExpandBallsRecursively}}}
\newcommand{\InitTree}{\textnormal{\textsc{InitTree}}}
\newcommand{\UpdateBot}{\textnormal{\textsc{UpdateBot}}}
\newcommand{\UpdateTop}{\textnormal{\textsc{UpdateTop}}}
\newcommand{\Push}{\textnormal{\textsc{Push}}}
\newcommand{\PushTop}{\textnormal{\textsc{PushTop}}}
\newcommand{\PushBot}{\textnormal{\textsc{PushBot}}}
\newcommand{\Initialize}{\textnormal{\textsc{Initialize}}}
\newcommand{\Add}{\textnormal{\textsc{Add}}}
\newcommand{\Mark}{\textnormal{\textsc{Mark}}}
\title{Better Diameter Algorithms for Bounded VC-dimension Graphs and Geometric Intersection Graphs} 
\titlerunning{Better Diameter Algorithms for Bounded VC-dimension and Intersection Graphs} 
\author{Lech Duraj}{Faculty of Mathematics and Computer Science, Jagiellonian University in Kraków, Poland}{lech.duraj@uj.edu.pl}{https://orcid.org/0000-0002-0004-3751}{}
\author{Filip Konieczny}{Faculty of Mathematics and Computer Science, Jagiellonian University in Kraków, Poland}{filip.konieczny@student.uj.edu.pl}{https://orcid.org/0009-0001-9397-0289}{During the preparation of this article, Filip Konieczny was a participant of the tutoring programme under the Excellence Initiative at the Jagiellonian University.}
\author{Krzysztof Potępa}{Faculty of Mathematics and Computer Science, Jagiellonian University in Kraków, Poland}{krzysztof.potepa@doctoral.uj.edu.pl}{https://orcid.org/0009-0002-8230-6961}{}
\authorrunning{L. Duraj, F. Konieczny, K. Potępa} 
\keywords{Graph Diameter, Geometric Intersection Graphs, Vapnik-Chervonenkis Dimension} 
\begin{document}

\maketitle

\begin{abstract}
We develop a framework for algorithms finding the diameter in graphs of bounded distance Vapnik-Chervonenkis dimension, in (parameterized) subquadratic time complexity. The class of bounded distance VC-dimension graphs is wide, including, e.g. all minor-free graphs.

We build on the work of Ducoffe et al. [SODA'20, SIGCOMP'22], improving their technique. With our approach the algorithms become simpler and faster, working in $\Oh{k \cdot n^{1-1/d} \cdot m \cdot \polylog(n)}$ time complexity for the graph on $n$ vertices and $m$ edges, where $k$ is the diameter and $d$ is the distance VC-dimension of the graph. Furthermore, it allows us to use the improved technique in more general setting. In particular, we use this framework for geometric intersection graphs, i.e. graphs where vertices are identical geometric objects on a plane and the adjacency is defined by intersection. Applying our approach for these graphs, we partially answer a question posed by Bringmann et al. [SoCG'22], finding an $\Oh{n^{7/4} \cdot \polylog(n)}$ parameterized diameter algorithm for unit square intersection graph of size $n$, as well as a more general algorithm for convex polygon intersection graphs.
\end{abstract}

\thispagestyle{empty}
\clearpage
\setcounter{page}{1}

\section{Introduction}
\label{sec:introduction}
The \emph{diameter} of a graph is the maximum possible distance between a pair of vertices. It~is~believed to be an important graph parameter and as such, it has been extensively studied. Formally, the \textsc{Diameter} and \textsc{$k$-Diameter} problems are  defined as follows\footnote{In this work, we assume graphs to be unweighted and undirected.}:

\begin{itemize}
    \item \textsc{Diameter}: Given a graph $G = (V,E)$, calculate $\diam(G) := \max_{u,v \in V} \dist(u,v)$, where $\dist(u,v)$ is the shortest possible length of any path between $u$ and $v$;
    \item \textsc{$k$-Diameter}: Given a graph $G = (V,E)$ and $k \in \mathbb{Z}$, determine whether $\diam(G) \leq k$.
\end{itemize} 

Both \textsc{Diameter} and \textsc{$k$-Diameter} are easy to solve in $\Oh{n m}$ time complexity, where $n = |V|$, $m = |E|$, by simply invoking BFS from every vertex. This straightforward algorithm, however, turns out to be notoriously hard to improve in terms of time complexity. In 2013, Roditty and Vassilevska-Williams showed \cite{RV13} that any algorithm solving \textsc{2-Diameter} in $\Oh{m^{2-\varepsilon}}$ time complexity would imply the existence of a $(2-\delta)^n$ time algorithm for SAT, thus refuting Strong Exponential Time Hypothesis (SETH, \cite{IP01}). Although conditional, it is an argument for the existence of a quadratic complexity barrier. Furthermore, the hardness of \textsc{2-Diameter} implies that approximating the diameter with ratio better than $3/2$ in subquadratic time would refute SETH as well.
But even if we assume SETH to be true, there is still a lot of open questions about diameter. One long line of research deals with subquadratic approximation in general graphs, and trade-offs between complexity and approximation ratio \cite{RV13,CLRTW14,CGR16,Li20,Li21, DLV21, DW21, Bo22, ADLV23}. 

Another family of questions arises from considering the diameter problem for some restricted graph classes (\cite{AVW16,CFMP01,Ol90,FP80}. A notable example is the case of planar graphs: the first subquadratic algorithm was found by Cabello \cite{Ca18}, and the fastest currently known works in $\Ot{n^{5/3}}$ time\footnote{The $\Ot{}$ notation ignores logarithmic factors, i.e. $\Ot{f(n)}$ means $\Oh{f(n) \cdot \polylog(f(n))}$.} and is due to Gawrychowski et al. \cite{GKMSW21}. Similar problems arise from considering \emph{geometric intersection graphs}: we take a family of objects in $\mathbb{R}^k$, name them the vertices of our graph, and define an edge between a pair of objects to exist if and only if they intersect. There is a natural interpretation of a diameter problem for these graphs, especially if the objects are unit balls or axis-aligned unit squares on the plane: the middle point of each object is a communication node, and the object itself represents its maximal range of communication. The diameter of the graph is the maximal number of hops needed for any two nodes to successfully communicate. Observe that the resulting graph on $n$ objects can easily have $\Theta(n^2)$ edges, as well as very large cliques. This makes a subquadratic algorithm somewhat more tricky, as we cannot ever list the edges of this graph explicitly, but instead we have to rely on its geometric representation. Geometric intersection graphs have also been studied in terms of fine-grained complexity \cite{CS16,CS17}, but there were relatively few subquadratic breakthroughs for the diameter problem. A recent paper by Bringmann et al. \cite{BKKNZ22}, proved (among other results) that:
\begin{itemize}
\item neither the intersection graph of axis-parallel unit cubes nor unit balls in $\mathbb{R}^3$ admits a subquadratic diameter algorithm under SETH;
\item the intersection graph of axis-parallel unit cubes in $\mathbb{R}^{12}$ does not admit, under the Hyperclique Hypothesis, a subquadratic algorithm for 2-\textsc{Diameter};
\item for the intersection graph of axis-parallel unit squares in $\mathbb{R}^2$ there is an algorithm with $\Oh{n \log n}$ time complexity for 2-\textsc{Diameter}.
\end{itemize}

Which other graph classes are non-trivial to consider in this setting? Some good choices are, for example, $K_t$-minor-free graphs, as they forbid using the counterexample from \cite{RV13}.  Ducoffe, Habib and Viennot \cite{DHV20, DHV22} proposed a more general class of graphs to consider: the ones with bounded \emph{distance Vapnik-Chervonenkis dimension}, or \emph{distance VC-dimension} for short. We formally define it in section \ref{sec:preliminaries}, but roughly speaking, graph has distance VC-dimension bounded by $d$, if for every subset $A \subseteq V$ with $|A| > d$ there exists $A' \subseteq A$ which cannot be expressed as a projection of a ball, i.e. in the form $A' = \{x \in A : dist(x,v) \leq k\}$ for some $v \in V$ and $k \in \mathbb{Z}$. The class of bounded distance VC-dimension graphs includes in particular minor-free graphs (with planar graphs), interval graphs, and also geometric intersection graphs.
In their work, the authors of \cite{DHV22} showed a number of important results tying diameter finding to distance VC-dimension, in particular:
\begin{itemize}
\item a subquadratic algorithm for $k$-\textsc{Diameter}, working in $\Ot{k n^{1-\varepsilon_d} m}$ time complexity, where $\varepsilon_d \sim \frac{1}{2^d \poly(d)}$ is some (small) constant\footnote{In fact, the $2^d$ factor in \cite{DHV22} is due to considering directed graphs and other technicalities; we believe that the authors could instead claim $\Oh{d^2}$, albeit with a large multiplicative constant.};
\item a subquadratic algorithm for \textsc{Diameter}, for graphs with bounded VC-dimension which additionally admit sublinear separators (e.g. minor-closed graphs).
\end{itemize}

While preparing this version of our paper, we discovered an independent work by Hsien-Chih Chang, Jie Gao and Hung Le \cite{CGL24}. Their main result is a subquadratic algorithm which computes an additive-constant approximate (+2) diameter of a geometric intersection graph for any family of \emph{pseudo-disks} (the pseudo-disks are shapes bounded by a Jordan curve with a property that two such boundaries can have at most two intersection points; in particular, the graphs considered in this paper fit into that category). In Section \ref{sec:conclusion} we discuss how our contributions are related.

\subsection{Our contribution and paper structure}

An inspiration for this paper was to answer the open questions posed in \cite{BKKNZ22}; especially, to find a (parameterized) subquadratic algorithm for some geometric intersection graphs. In the most appealing cases of planar unit disk and unit square intersection graphs, it is not hard to prove that both these classes have their distance VC-dimension bounded by 4. Therefore, algorithms from \cite{DHV22} could in theory be applied to them, but it is impossible to do it directly, as those algorithms work only for explicitly-given sparse graphs. Therefore, we need to refine this algorithm to work in our setting.

The core idea of \cite{DHV22} is to find a spanning path of a graph $G = (V,E)$ with a low \emph{stabbing number}, i.e. an order $v_1, \ldots, v_n$ on the vertices of the graph such that for every $v \in V$, $k \in \mathbb{Z}$, every ball $N^k[v]$ can be expressed as a sum of $\Oh{n^{1-\varepsilon}}$ intervals $(v_x, v_{x+1},\ldots, v_y)$. The existence of such a path is in turn based on the results of Chazelle and Welzl \cite{CW89}, who provide a Monte Carlo polynomial algorithm for finding such a path. The authors of \cite{DHV22} use this algorithm as a subroutine (``black-box''), employing a neat trick to bring down its polynomial complexity to a subquadratic one.

Interestingly, the Chazelle-Welzl subroutine uses a technique similar to the main construction of \cite{DHV22} -- in particular, the notion of $\varepsilon$-nets, first introduced in \cite{HW86}. In this paper we show that these two constructions can be, in a natural way, replaced by only one argument. This requires going back on the basic definitions, in particular relaxing the conditions on the stabbing number, as well as different complexity analysis. We are, however, rewarded with a simpler and more straightforward algorithm, naturally working in subquadratic time. Furthermore, this also brings down the time complexity of the algorithm, and opens new possibilities of its generalization. 

The high-level concept is as follows: for any $j$ and for a vertex $v$ let us denote by $N^j[v]$ the $j$-neighbourhood of $v$, i.e. all vertices reachable from $v$ via at most $j$ edges. We find a particular order $v_1, \ldots, v_n$ on all the vertices  such that  for every $i$ the neighbourhoods $N^j[v_i]$ and $N^j[v_{i+1}]$ differ relatively little -- to be precise, the total size  of the difference sets $N^j[v_i] \sdiff N^j[v_{i+1}]$ is subquadratic. This order can be found for every $j$ with a randomized algorithm, using the concept of \emph{$\varepsilon$-nets}, and it allows us to encode all the $j$-neighbourhoods in subquadratic space. It is now enough to devise a way to compute this encoding also in subquadratic time; similarly to \cite{DHV22}, we do it incrementally, going from all $N_{j-1}[v]$ sets to all $N_{j}[v]$ sets. We need, however, different constructions for general sparse graphs (when trying to improve \cite{DHV22}) and for implicitly given graphs (like geometric intersections). For the latter, we show that the key ingredient is a data structure, working on vertex subsets of our graph, allowing two particular operations: expanding a subset and computing the symmetric difference of two stored subsets. We devise such a data structure for axis-aligned unit-square graphs, and then generalize it to any convex polygons. Our structure is based on \textit{persistent segment trees}, but to our knowledge, it has not been considered before in this form.

To sum up, we claim the following results:

\begin{itemize}
    \item There is a randomized Las Vegas algorithm which, for any graph $G = (V,E)$ of distance VC-dimension at most $d$, solves $k$-\textsc{Diameter} in $\Ot{k \cdot n^{1-1/d} \cdot m}$ time complexity (Section \ref{sec:algorithm}, Theorem \ref{thm:explicit-diameter-algorithm}); 
    \item The algorithm above can be adapted to any class of implicitly given graphs, if provided an appropriate data structure, working on the graph's neighbour lists (Section \ref{sec:implicit}, Theorem \ref{thm:implicit-diameter-algorithm});
    \item In particular, for the axis-aligned unit square intersection graphs, there is a Monte Carlo algorithm solving $k$-\textsc{Diameter} in $\Ot{k \cdot n^{7/4}}$ time complexity (Section \ref{sec:squares}, Theorem \ref{thm:polygons}a);
    \item This algorithm can be generalized to any convex polygon intersection graphs, with an additional multiplicative constant depending on the polygon's number of sides (Section \ref{sec:squares}, Theorem \ref{thm:polygons}b). 
\end{itemize}

For the general graph algorithm, the new time complexity $\Ot{k \cdot n^{1-1/d} \cdot m}$ is brought down from $\Ot{k \cdot n^{1-\frac{1}{2^d poly(d)}} \cdot m}$ previously achieved in \cite{DHV22}. This is a more practical complexity, and we (tentatively) conjecture that this bound might be a tight one for the class of $K_d$-minor free graphs, or at least for the class of graphs of distance VC-dimension bounded by $d$.

As for the paper structure, Section \ref{sec:preliminaries} of this paper introduces the most important concepts, such as (distance) VC-dimension, $\varepsilon$-nets and related theorems. In Section \ref{sec:algorithm} we introduce the main tools for constructing all the fast algorithms: the low-difference orders on graph vertices, and use them to improve the results for general sparse graphs. In Section \ref{sec:implicit} we show how to use these tools in the case of implicitly given graphs. Finally, in Section \ref{sec:squares} we apply all these concepts to achieve the original goal -- a parameterized subquadratic algorithm for unit-square graphs and then for general convex polygon intersection graphs.

\section{Preliminaries}
\label{sec:preliminaries}
\subsection{Graphs, neighbourhoods and diameters}

We assume that the reader is familiar with the notion of graphs, paths and distances. Throughout the paper, all graphs are undirected and unweighted, as well as connected. We also use the same notation for most graphs: if $G = (V,E)$ is a graph, then let $n = |V|$ and $m = |E|$.

Given a graph $G=(V,E)$ and $S\subseteq V$ let $N[S]$ denote vertices in the (closed) neighbourhood of $S$, i.e. vertices belonging to $S$ or having a neighbour in $S$. If $v\in V$ we let $N[v] = N(\{v\})$.
We also introduce the notion of $k$-neighbourhood for $k \geq 0$, denoted recursively by $N^0[S] = S,~N^k[S] = N[N^{k-1}[S]]$, which is the set of vertices with distance at most $k$ to any vertex in $S$. As stated before, the diameter of $G$ is $\diam(G) = \max_{u, v \in V} \dist(u, v)$. It is easy to see that the graph has diameter at most $k$ if and only if for every $v\in V$ we have $N^{k}[v] = V$.

\subsection{Hypergraphs and VC-dimension}
A \emph{hypergraph} is a pair $(X, \Rr)$, where $X$ is the set of \emph{vertices} and $\Rr \subseteq \mathcal{P}(X)$ is a family of subsets of $X$, the \emph{hyperedges}. Some natural examples of hypergraphs, which are most important for this paper, come from graph neighbourhoods. If $G = (V, E)$ is a graph, then:
\begin{itemize}
    \item For $k \in \mathbb{Z}$, we define $\mathcal{N}^k(G) = \{N^k[v] : v \in V\}$ as the family of all possible $k$-neighbourhoods. The hypergraph $(V, \mathcal{N}^k(G))$ is the \emph{$k$-distance hypergraph} of $G$;
    \item For the family of all balls $\mathcal{B}(G) = \bigcup_{k \geq 0} \mathcal{N}^k(G)$, we will call the hypergraph $(V, \mathcal{B}(G))$ the \emph{ball hypergraph} of $G$.
\end{itemize}

As mentioned in the introduction, the key concept needed for our results is the \emph{Vapnik– Chervonenkis dimension} \cite{VC71} of a hypergraph $(X,\Rr)$. It is defined as follows:

\begin{definition}
A hypergraph $(X,\Rr)$ \textit{shatters} a subset $Y\subseteq X$ if for every $Z\subseteq Y$ there exists $R\in \Rr$ such that $Z = R \cap Y$. In other words $|\{R \cap Y \mid R \in \Rr\}| = 2^{|Y|}$. The \emph{Vapnik-Chervonenkis dimension} (or \emph{VC-dimension}) of a hypergraph is the maximum size of a shattered subset.
\end{definition}

The following theorems recall some well-established properties of hypergraphs with bounded VC-dimension. The first one deals with VC-dimension of sub-hypergraphs and projection hypergraphs, the other one (Sauer-Shelah-Perles Lemma) bounds the number of hyperedges in terms of vertices and the VC-dimension. Most of our complexity bounds throughout the paper stem from this lemma.

\begin{theorem}
\label{thm:vcdim-basic}
Let $(X,\Rr)$ be a hypergraph where $|X| = n$, and let its VC-dimension be bounded by $d$. Then:
\begin{enumerate}
    \item If $\Rr' \subseteq \Rr$, then hypergraph $(X, \Rr')$ also has VC-dimension bounded by $d$,
    \item If $Y\subseteq X$, then hypergraph $(Y, \{Y\cap R \mid R \in \Rr\})$ also has VC-dimension bounded by $d$.
\end{enumerate}
\end{theorem}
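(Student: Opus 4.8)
The plan is to prove each of the two closure properties directly from the definition of shattering, by a ``no new shattered sets can appear'' argument. The key observation in both cases is that if a set $Y$ is shattered by the smaller/restricted hypergraph, then it is already shattered by the original hypergraph, so the maximum size of a shattered set cannot increase beyond $d$.

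\textbf{Part (1).} Suppose $(X, \Rr')$ with $\Rr' \subseteq \Rr$ shatters some $Y \subseteq X$. By definition, for every $Z \subseteq Y$ there is $R \in \Rr'$ with $Z = R \cap Y$. Since $\Rr' \subseteq \Rr$, that same $R$ lies in $\Rr$, so $(X,\Rr)$ also shatters $Y$. Hence every set shattered by $(X,\Rr')$ is shattered by $(X,\Rr)$, and therefore $|Y| \le d$. This shows the VC-dimension of $(X,\Rr')$ is at most $d$.

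\textbf{Part (2).} Write $\Rr|_Y = \{ Y \cap R \mid R \in \Rr\}$ and suppose the hypergraph $(Y, \Rr|_Y)$ shatters some $W \subseteq Y$. Then for every $Z \subseteq W$ there is a hyperedge of the form $Y \cap R$ (with $R \in \Rr$) such that $Z = W \cap (Y \cap R)$. But $W \subseteq Y$, so $W \cap (Y \cap R) = W \cap R$, and thus $Z = W \cap R$ with $R \in \Rr$. Therefore $(X,\Rr)$ shatters $W$, so $|W| \le d$; hence the VC-dimension of $(Y, \Rr|_Y)$ is at most $d$. (Note $W \subseteq Y \subseteq X$, so this is a legitimate candidate shattered set in the original hypergraph.)

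There is essentially no obstacle here; both parts are immediate from unwinding the definition, the only mild subtlety being the set-theoretic identity $W \cap (Y \cap R) = W \cap R$ when $W \subseteq Y$, which makes the restriction to $Y$ ``transparent'' from the point of view of subsets of $W$. I would present the two parts as short displayed chains of equalities and stop there.
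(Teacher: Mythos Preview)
Your proof is correct; both parts follow immediately from the definition of shattering exactly as you describe. The paper states this theorem as a basic, well-established property of VC-dimension and does not supply a proof, so there is nothing to compare against.
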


\begin{theorem}\label{thm:ssp-lemma}\textit(Sauer-Shelah-Perles Lemma, \cite{Sa72, Sh72}). For every integer $d$, there exists a constant $\beta = \beta(d)$ such that every hypergraph $(X,\Rr)$ of VC-dimension at most $d$ satisfies $|\Rr| \leq \beta \cdot |X|^d$.
\end{theorem}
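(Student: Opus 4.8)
The plan is to prove the sharp, classical form of the statement, namely $|\mathcal{R}| \le \sum_{i=0}^{d}\binom{n}{i}$ for $|X| = n$, and then absorb the right-hand side into $\beta(d)\cdot n^{d}$ by an elementary binomial estimate. I would argue by induction on $n$, with $d$ fixed. The case $d = 0$ is handled directly for every $n$: if no singleton is shattered, then every $x \in X$ lies either in all hyperedges or in none of them, which forces all hyperedges to coincide, so $|\mathcal{R}| \le 1 = \binom{n}{0}$; and $n = 0$ is trivial since then $\mathcal{R} \subseteq \{\emptyset\}$.

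For the inductive step (so $n \ge 1$ and $d \ge 1$), I would fix an element $x \in X$, set $X' = X \setminus \{x\}$, and form two derived set systems on $X'$: the trace $\mathcal{R}_1 = \{R \setminus \{x\} : R \in \mathcal{R}\}$ and the ``link'' $\mathcal{R}_2 = \{R \subseteq X' : R \in \mathcal{R} \text{ and } R \cup \{x\} \in \mathcal{R}\}$. Grouping the hyperedges of $\mathcal{R}$ by their intersection with $X'$, each group is either a singleton $\{S\}$ or a pair $\{S, S\cup\{x\}\}$ with $S \subseteq X'$; the singletons each contribute one set to $\mathcal{R}_1$ and none to $\mathcal{R}_2$, while the pairs each contribute one to both. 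This gives the bookkeeping identity $|\mathcal{R}| = |\mathcal{R}_1| + |\mathcal{R}_2|$.

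Next I would bound the two pieces separately using the hypothesis on the VC-dimension. Since $\mathcal{R}_1 = \{R \cap X' : R \in \mathcal{R}\}$ is a restriction of $\mathcal{R}$, Theorem~\ref{thm:vcdim-basic} gives that it has VC-dimension at most $d$, so the induction hypothesis yields $|\mathcal{R}_1| \le \sum_{i=0}^{d}\binom{n-1}{i}$. The decisive point is that $\mathcal{R}_2$ has VC-dimension at most $d-1$: if $\mathcal{R}_2$ shattered a set $Y \subseteq X'$, then for every $Z \subseteq Y$ there would be $R \in \mathcal{R}_2$ with $R \cap Y = Z$, and then the two hyperedges $R, R \cup \{x\} \in \mathcal{R}$ realize $Z$ and $Z \cup \{x\}$ on $Y \cup \{x\}$, so $\mathcal{R}$ would shatter $Y \cup \{x\}$ and hence $|Y| + 1 \le d$. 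Applying the induction hypothesis to $\mathcal{R}_2$ with parameter $d-1$ gives $|\mathcal{R}_2| \le \sum_{i=0}^{d-1}\binom{n-1}{i}$, and adding the two estimates and applying Pascal's rule term by term yields $|\mathcal{R}| \le \sum_{i=0}^{d}\binom{n}{i}$, closing the induction. Finally, for $n \ge 1$ and $i \le d$ one has $\binom{n}{i} \le n^{i} \le n^{d}$, so $\sum_{i=0}^{d}\binom{n}{i} \le (d+1)n^{d}$ and $\beta(d) = d+1$ works (for $n \ge 1$, which is all that is needed in the paper).

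I do not expect a genuine obstacle here, since this is the textbook Sauer--Shelah--Perles argument; the one step worth stating carefully is that passing from $\mathcal{R}$ to its link $\mathcal{R}_2$ drops the VC-dimension by exactly one, everything else being elementary counting. As an alternative I could run essentially the same induction in the form of Pajor's lemma --- bounding $|\mathcal{R}|$ by the number of subsets of $X$ that $\mathcal{R}$ shatters, all of which have size at most $d$ under the hypothesis --- but the direct version above is the more self-contained route.
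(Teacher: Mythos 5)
The paper does not prove this statement; it is cited as a classical result (\cite{Sa72, Sh72}) and used as a black box. Your proof is the standard textbook argument and it is correct: the trace/link decomposition $|\mathcal{R}| = |\mathcal{R}_1| + |\mathcal{R}_2|$, the observation that the link drops the VC-dimension by one (otherwise $\mathcal{R}$ would shatter $Y \cup \{x\}$), and Pascal's rule then yield the sharp bound $\sum_{i=0}^{d}\binom{n}{i}$, which is absorbed into $(d+1)n^d$ for $n \geq 1$ (the paper in fact assumes $n \geq 3$, so this is more than enough). One small wording nit: you describe the argument as ``induction on $n$ with $d$ fixed,'' but the recursive call on $\mathcal{R}_2$ also decrements $d$, so the induction is really a joint one (e.g.\ on $n+d$, or lexicographic on $(n,d)$); your base cases --- $n=0$ for all $d$, and $d=0$ for all $n$ --- already set this up correctly, so the argument goes through once the induction is stated that way.
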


\begin{corollary}\label{cor:ssp-subset}
For every integer $d$, there exists a constant $\beta = \beta(d)$ such that for every hypergraph $(X,\Rr)$ of VC-dimension at most $d$ and for every $S \subset X$, the cardinality of $\{S \cap R \mid R \in \Rr\}$ is at most $\beta |S|^d$.
\end{corollary}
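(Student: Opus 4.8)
The plan is to derive Corollary~\ref{cor:ssp-subset} from the Sauer-Shelah-Perles lemma (Theorem~\ref{thm:ssp-lemma}) together with the restriction property in part~2 of Theorem~\ref{thm:vcdim-basic}. The point is that the quantity $|\{S \cap R \mid R \in \Rr\}|$ is exactly $|\Rr_S|$, where $\Rr_S = \{S \cap R \mid R \in \Rr\}$ is the family of hyperedges of the \emph{induced subhypergraph} $(S, \Rr_S)$.

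First I would invoke Theorem~\ref{thm:vcdim-basic}(2) with $Y = S$: since $(X,\Rr)$ has VC-dimension at most $d$, the induced hypergraph $(S, \Rr_S)$ also has VC-dimension at most $d$. Next I would apply the Sauer-Shelah-Perles lemma (Theorem~\ref{thm:ssp-lemma}) to this induced hypergraph, which has vertex set of size $|S|$; this yields a constant $\beta = \beta(d)$, depending only on $d$, such that $|\Rr_S| \leq \beta \cdot |S|^d$. Since $|\{S \cap R \mid R \in \Rr\}| = |\Rr_S|$ by definition, the claimed bound follows with the same constant $\beta(d)$.

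There is essentially no obstacle here: the corollary is a one-line consequence once the two cited results are in place. The only mild subtlety worth spelling out is the identification of the set $\{S \cap R \mid R \in \Rr\}$ with the hyperedge family of the induced hypergraph, so that ``counting distinct traces on $S$'' and ``counting hyperedges of $(S,\Rr_S)$'' are literally the same thing; after that, the bound is immediate and the constant is inherited verbatim from Theorem~\ref{thm:ssp-lemma} applied at scale $|S|$ rather than $|X|$.
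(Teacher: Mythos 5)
Your proof is correct and is exactly the intended derivation: restrict the hypergraph to $S$ via Theorem~\ref{thm:vcdim-basic}(2), then apply the Sauer-Shelah-Perles lemma (Theorem~\ref{thm:ssp-lemma}) to the induced hypergraph on $|S|$ vertices. The paper states this corollary without proof precisely because this two-step argument is immediate.
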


Let $(X,\Rr)$, $(X,\Rr')$ be two hypergraphs on the same underlying set $X$. Suppose that both of them have VC-dimension $d$. By $\sdiff$ we denote the \emph{symmetric difference} operator on sets, i.e. $A \sdiff B := (A \setminus B) \cup (B \setminus A)$. An important issue for us is bounding the VC-dimension of the hypergraph $(X, \{R \sdiff R' \mid R \in \Rr, R' \in \Rr'\})$. The following lemma provides a bound of $\Oh{d \log d}$. It works for any operator $\circ$ on set such that intersection distributes over $\circ$ (i.e. $A \cap (B \circ B') = (A \cap B) \circ (A \cap B')$ for any sets $A, B, B'$; the union, intersection, set difference and symmetric difference operators all have this property. It is partially based on a similar lemma in \cite{EA07}, see \ref{appendix:omitted} for more details and the proof.

\begin{restatable}{lemma}{lemmaJoinDim}
\label{lemma:join-dim}
Let $(X,\Rr)$, $(X,\Rr')$ be hypergraphs with VC-dimension not greater than $d$, and let $\circ : \mathcal{P}(X) \times \mathcal{P}(X) \to \mathcal{P}(X)$ be a binary set operator such that intersection distributes over $\circ$. Then the VC-dimension of $(X, \Rr^*)$, where $\Rr^* = \{R \circ R' \mid R \in \Rr, R' \in \Rr'\}$ is $\Oh{d \log d}$.
\end{restatable}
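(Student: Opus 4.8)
The plan is to bound the VC-dimension of $(X,\Rr^*)$ by a counting argument via the Sauer–Shelah–Perles lemma (Theorem~\ref{thm:ssp-lemma}): if a set $Y$ of size $t$ is shattered by $\Rr^*$, then $Y$ must witness $2^t$ distinct traces, but we will show that the number of distinct traces of $\Rr^*$ on any $t$-element set is polynomially bounded in $t$, which forces $2^t \le \poly(t)$ and hence $t = \Oh{d\log d}$.

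First I would fix a candidate shattered set $Y \subseteq X$ with $|Y| = t$ and consider the trace hypergraph $(Y, \{Y \cap S \mid S \in \Rr^*\})$. The key observation is that, because intersection distributes over $\circ$, we have $Y \cap (R \circ R') = (Y \cap R) \circ (Y \cap R')$ for every $R \in \Rr$, $R' \in \Rr'$. Therefore every trace of $\Rr^*$ on $Y$ is of the form $A \circ A'$, where $A$ is a trace of $\Rr$ on $Y$ and $A'$ is a trace of $\Rr'$ on $Y$. By Corollary~\ref{cor:ssp-subset} applied to $(X,\Rr)$ and $(X,\Rr')$ (both of VC-dimension at most $d$), the number of such traces $A$ is at most $\beta t^d$, and likewise for $A'$. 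Hence the number of distinct traces of $\Rr^*$ on $Y$ is at most $(\beta t^d)^2 = \beta^2 t^{2d}$.

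Now if $Y$ is shattered by $\Rr^*$, the number of traces on $Y$ must be exactly $2^t$, so $2^t \le \beta^2 t^{2d}$. Taking logarithms gives $t \le 2\log\beta + 2d\log t$. A standard calculation shows that this inequality forces $t = \Oh{d \log d}$: roughly, if $t \ge C d\log d$ for a large enough constant $C$ then $2d\log t \le 2d\log(Cd\log d) < t/2$ (using $\beta = \beta(d)$, which contributes only a lower-order additive term once $d$ is bounded below, or can be absorbed since $\log\beta(d)$ is polynomial in $d$ while $t$ is superlinear in $d$), contradicting $t \le 2\log\beta + 2d\log t$. Hence the VC-dimension of $(X,\Rr^*)$ is $\Oh{d\log d}$.

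The main obstacle is the final arithmetic step: one must be slightly careful that the additive $2\log\beta(d)$ term — where $\beta(d)$ comes from Sauer–Shelah–Perles and is itself some explicit function of $d$, typically on the order of a constant or growing slowly — does not dominate. Since the Sauer–Shelah bound can be taken with $\beta(d)$ at most a fixed constant (indeed $|\Rr|\le \sum_{i=0}^{d}\binom{n}{i} \le (en/d)^d$, so one may even take $\beta=1$ with the slightly weaker exponent bound absorbed), this term is harmless, and the clean bound $t = \Oh{d\log d}$ follows. I would state the SSP consequence in the form $|\{S\cap Y : S\in\Rr\}| \le (e|Y|/d)^d$ to make the last estimate transparent, and everything else is routine.
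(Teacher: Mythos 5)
Your proof is correct and follows essentially the same argument as the paper: use distributivity to write each trace of $\Rr^*$ on a candidate shattered set as a $\circ$-combination of traces of $\Rr$ and $\Rr'$, bound the count by $(\beta t^d)^2$ via Sauer--Shelah--Perles, and compare against $2^t$ to force $t = \Oh{d\log d}$. Your additional remarks on taking $\beta$ explicit via $(e|Y|/d)^d$ are a minor clarification, not a different route.
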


Let us now define another one of this paper's central concepts, linking the notion of VC-dimension with graph diameters: the \textit{distance VC-dimension} of a graph $G=(V,E)$. 

\begin{definition}
    \emph{Distance VC-dimension} of a graph $G=(V,E)$ is the VC-dimension of its ball hypergraph, i.e. the hypergraph $(V, \mathcal{B}(G))$.
\end{definition}

We assume throughout the paper that we only consider graphs with distance VC-dimension at least $2$, as there are no non-trivial connected graphs with distance VC-dimension $1$. Finally, observe that by Theorem \ref{thm:vcdim-basic}, if distance VC-dimension of a graph is bounded by some integer $d$, then for every $k \in \mathbb{N}$, the VC-dimension of $(V,\mathcal{N}^k(G))$ is also bounded by $d$.


Among others, the following classes of graphs have bounded distance VC-dimension: interval graphs, $K_t$-minor free graphs, and in general any minor-closed class of graphs \cite{BS15,CEV07}. The next section is devoted to establishing similar bounds for geometric intersection graphs.

\subsection{Geometric intersection graphs}

In this section we introduce the notion of intersection graphs and discuss their distance VC-dimension. Throughout the paper, the symbol $\oplus$ denotes the \emph{Minkowski sum} of subsets of $\mathbb{R}^2$: for any $A, B \subseteq \mathbb{R}^2$, $A \oplus B := \{(a_1 + b_1, a_2 + b_2) \mid (a_1, a_2) \in A, (b_1, b_2) \in B\}$. If $a\in \mathbb{R}^2, B \subseteq \mathbb{R}^2$ then by $a \oplus B$ we mean $\{a\} \oplus B$. We also use natural scalar multiplication: for any $\lambda \in \mathbb{R}$ and $A \subseteq \mathbb{R}^2$, $\lambda \cdot A = \{(\lambda a_1, \lambda a_2) \mid (a_1, a_2) \in A \}$.

\begin{definition}
    For a shape $\mathcal{F}\subseteq \mathbb{R}^2$, an \emph{intersection graph} $I(V,\mathcal{F})$ is a simple undirected graph with vertices $V\subseteq \mathbb{R}^2$ being points on a plane, where an edge $\{v_1,v_2\}$ for $v_1\neq v_2\in V$ exists if and only if shapes $\mathcal{F}$ centered at $v_1$ and $v_2$ have a nonempty intersection, i.e. $\left(v_1\oplus F\right) \cap \left(v_2\oplus F\right) \neq \varnothing$.
\end{definition}

Throughout this paper we assume that $\mathcal{F}$ is closed, bounded and convex.
It turns out, we can additionally assume that $\mathcal{F}$ has a center of symmetry at $(0,0)$.

\begin{restatable}{lemma}{lemmaCenterSymmetry}
    \label{lemma:symmetric}
    The graph $I(V,\mathcal{F})$ is isomorphic to $I(V, \mathcal{H})$, where $H = \frac{1}{2} \cdot\left[ \mathcal{F} \oplus (-\mathcal{F}) \right]$.
\end{restatable}
\begin{proof}
    See Appendix \ref{appendix:omitted}.
\end{proof}

Our main focus will be on the case where $\mathcal{F}$ is an $s$-sided polygon, however, even without this assumption, the geometric intersection graphs have bounded distance VC-dimension. The following lemma formally states that, and will be another crucial tool for our results.

\begin{restatable}{lemma}{lemmaBounded}
    \label{lemma:bounded}
    For any intersection graph $I(V, \mathcal{F})$, its distance VC-dimension is at most 4.
\end{restatable}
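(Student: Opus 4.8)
The plan is to bound the VC-dimension of the ball hypergraph $(V, \mathcal{B}(G))$ of $G = I(V,\mathcal{F})$ directly, by showing that any shattered set must have size at most $4$. By Lemma~\ref{lemma:symmetric} we may assume $\mathcal{F}$ is convex, closed, bounded and center-symmetric at the origin, and then the adjacency of $v_1,v_2$ is governed purely by whether $v_2 - v_1$ lies in the convex center-symmetric region $2\mathcal{F}$; this means that the closed neighbourhood $N[v]$ of a point $v$ is exactly $V \cap (v \oplus 2\mathcal{F})$, i.e. $N[v]$ is the trace on $V$ of a translate of a fixed convex body. So the $1$-neighbourhood hypergraph is a sub-hypergraph of the hypergraph of translates of $2\mathcal{F}$, whose VC-dimension is well understood.

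First I would handle the case $k=1$: translates of a fixed convex set in $\mathbb{R}^2$ form a hypergraph of VC-dimension at most $3$ (this is classical — a family of translates of a convex body in the plane cannot shatter $4$ points, essentially because of a Radon/Helly-type argument on which of the four points can be simultaneously covered and excluded). Actually, to be safe I would argue it gives dimension at most $4$, which is all we need, or cite the standard bound. The more substantial point is the general $k$: I want to show that $N^k[v]$ is \emph{also} the trace on $V$ of a translate of a convex set. The key geometric fact is that distances in $I(V,\mathcal{F})$ behave like a ``convex'' metric: if we could show $N^k[v] = V \cap (v \oplus 2k\mathcal{F})$, we would be done, since $2k\mathcal{F}$ is again convex and center-symmetric, and the whole family $\{v \oplus 2k\mathcal{F} : v \in V, k \geq 0\}$ still consists of translates of homothets of one convex body — but homothety changes the set, so I would instead argue that for \emph{fixed} $k$ the hypergraph $\mathcal{N}^k(G)$ has bounded dimension, and then observe that $\mathcal{B}(G) = \bigcup_k \mathcal{N}^k(G)$; however this union is over infinitely many $k$, so I cannot just apply Theorem~\ref{thm:vcdim-basic}. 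The cleanest route: prove $N^k[v] = V \cap (v \oplus 2k\mathcal{F})$ by induction on $k$ (the inclusion $\subseteq$ follows from the triangle-inequality-style observation that a path of length $k$ moves you by a sum of $k$ vectors each in $2\mathcal{F}$, hence total displacement in $2k\mathcal{F}$ by convexity; the inclusion $\supseteq$ is the real work and uses that $V$ might be sparse, so one must show a witness point can be reached hop-by-hop — this may actually fail in general!).

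Given that subtlety, I expect the actual argument to instead directly bound the dimension of $\mathcal{B}(G)$ without identifying $N^k[v]$ with a geometric region. One robust approach: show that for any $5$ points $A = \{a_1,\dots,a_5\} \subseteq V$, the set system $\{A \cap N^k[v] : v \in V, k \geq 0\}$ has fewer than $2^5$ members, using a planar separation / ordering argument on $A$ together with the convexity of $2k\mathcal{F}$ — for a convex center-symmetric body, if it contains points $a_i$ and $a_j$ then it contains every point ``between'' them in a suitable sense, which forbids certain dichotomies of $A$. Concretely, order the candidate ball centers and radii and use that ``$a_i \in N^k[v]$'' is a threshold condition $\dist(a_i,v)\le k$, so for fixed $v$ the trace on $A$ is determined by the ranking of $\dist(a_1,v),\dots,\dist(a_5,v)$; then bound the number of distinct such rankings realizable across all $v$ using the bounded complexity of the arrangement of the ``distance-$=$-constant'' curves, which are boundaries of scaled convex bodies. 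The main obstacle, and the step I would spend the most care on, is precisely controlling $N^k[v]$ for large $k$ when $V$ is not dense: the graph metric is only an approximation of the ``Minkowski gauge'' metric $\|v-w\|_{2\mathcal{F}}$, so I must either prove they are close enough that the VC bound transfers, or find a purely combinatorial reason (e.g. a forbidden configuration of $5$ shattered points) that survives the passage to the graph metric. I would likely do the latter: assume $5$ points are shattered, pick the $\binom{5}{2}$ pairwise graph-distances, and derive a contradiction from a Radon-partition-style argument applied to these distances and the convexity of balls $2k\mathcal{F}$.
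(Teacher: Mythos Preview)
Your diagnosis of the difficulty is exactly right: $N^k[v]$ need \emph{not} equal $V\cap(v\oplus 2k\mathcal{F})$ when $V$ is sparse, so the problem cannot be reduced to VC-dimension of homothets of a convex body. But none of your proposed workarounds closes the gap. The arrangement-of-level-curves idea founders for the same reason: the sets $\{x:\dist_G(x,v)=k\}$ are discrete and carry no useful geometry. The ``Radon-partition on distances'' sketch is too vague to evaluate, and I do not see a way to make it work: Radon's theorem lives in the affine world, while the obstruction here is fundamentally \emph{topological} (planarity of $\mathbb{R}^2$).

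The missing idea, and what the paper actually does, is a $K_5$ non-planarity argument on \emph{shortest paths drawn in the plane}. Assume $Y=\{v_1,\dots,v_5\}$ is shattered. For each pair $i<j$ pick $x_{ij},k_{ij}$ with $N^{k_{ij}}[x_{ij}]\cap Y=\{v_i,v_j\}$, and draw the two shortest paths $x_{ij}\to v_i$ and $x_{ij}\to v_j$ as polylines through their vertex sequences. This is a planar drawing of $K_5$, so two such polylines cross, say on edges $a'a''$ and $b'b''$ belonging to paths from $a=x_{12}$ and $b=x_{34}$. Now comes the only geometric input: $\mathcal{F}$ defines a Minkowski norm $m_\mathcal{F}$ that is \emph{additive along straight segments}, and an edge of $G$ means $m_\mathcal{F}\le 1$. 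Applying the triangle inequality at the crossing point $p$ gives $m_\mathcal{F}(a',b'')+m_\mathcal{F}(b',a'')\le m_\mathcal{F}(a',a'')+m_\mathcal{F}(b',b'')\le 2$, so at least one of $a'b''$, $b'a''$ is an edge of $G$ (and symmetrically for the other diagonal). This ``shortcut'' edge lets you reroute one shortest path through the other, forcing one of the two balls $N^{k_a}[a]$, $N^{k_b}[b]$ to swallow a third point of $Y$ --- contradiction. The crux you were missing is that only the \emph{pair} dichotomies are used, and the contradiction comes from planarity of the ambient space rather than from convexity of balls per se.
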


\begin{proof}
    A more general version of this lemma was elegantly proven in \cite{CGL24} using different approach, making our proof redundant. We include our proof in \ref{appendix:omitted} for the sake of completeness.
\end{proof}

Please note that the definition of $I(V,\mathcal{F})$ allows the copies of the shape $\mathcal{F}$ to be translated, but not rotated. If rotation is allowed, the lemma above does not work. Moreover, even in the case of rotated triangles, there is a construction proving a conditional quadratic lower bound for the \textsc{3-Diameter} problem as well as unbounded distance VC-dimension \cite{BKKNZ22, CGL24}.

\subsection{\texorpdfstring{$\eps$}{eps}-nets}

The concept of $\eps$-nets was introduced in \cite{HW86} and applied to bounded VC-dimension hypergraphs in \cite{CW89}. In this section, we recall the definitions and basic facts from these works, that we will need later.

\begin{definition}
    Let $(X, \Rr)$ be a hypergraph. For any $\eps >0$, a set $S \subset X$ is an \emph{$\eps$-net} if for every edge $R \in \Rr$, $|R| \geq \eps \cdot |X| \implies R \cap S \neq \varnothing$. 
\end{definition}

It turns out that this concept synergizes well with VC-dimension, as bounded VC-dimension implies any sufficiently large random subset to be an $\eps$-net with high probability:

\begin{lemma} \cite{CW89}
\label{lemma:eps-net-original}
There exists a constant $\alpha$ such that for any hypergraph $(X, \mathcal{R})$ of VC-dimension at most $d$ and for any $\delta, \varepsilon > 0$, a random set $S \subseteq X$ with $|S| \geq \alpha \cdot \frac{d}{\varepsilon} \log \frac{1}{\delta \varepsilon}$ is an $\varepsilon$-net with probability at least $1 - \delta$.
\end{lemma}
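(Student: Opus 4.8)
\section*{Proof plan}

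The plan is to prove this via the classical \emph{double-sampling} (symmetrization) argument of Haussler and Welzl, whose only combinatorial input is the Sauer--Shelah--Perles bound (Theorem~\ref{thm:ssp-lemma}, via Corollary~\ref{cor:ssp-subset}). Write $n = |X|$ and $m = |S|$, and call $R \in \Rr$ \emph{heavy} if $|R| \geq \eps n$. Let $B$ be the bad event that some heavy $R$ is missed by $S$, i.e.\ $R \cap S = \varnothing$; we must show $\Pr[B] \leq \delta$ whenever $m \geq \alpha \tfrac{d}{\eps}\log\tfrac{1}{\delta\eps}$. It is cleanest to take $S$ to be a sequence of $m$ independent uniform draws from $X$ (with replacement): a missed heavy set in this model implies a missed heavy set when $S$ is a uniform $m$-subset, and the bound for general $|S|$ follows by monotonicity, so nothing is lost by working in the with-replacement model.

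First I would introduce a ``ghost sample'' $T$ of $m$ further independent uniform draws and define the event $B'$: some heavy $R$ satisfies $R \cap S = \varnothing$ \emph{and} $|R \cap T| \geq \tfrac12\eps m$. Step one is to show $\Pr[B'] \geq \tfrac12\Pr[B]$. Condition on $B$ and fix a witness $R$ that is measurable with respect to $S$ alone; since $R$ is heavy, each draw of $T$ hits $R$ with probability $\geq \eps$, so $\mathbb{E}[|R\cap T|] \geq \eps m$, and a Chernoff (or even Chebyshev) bound gives $\Pr[|R\cap T| \geq \tfrac12\eps m] \geq \tfrac12$ as soon as $\eps m$ exceeds an absolute constant, which holds since $m \geq \alpha d/\eps$ with $\alpha$ large. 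Hence it suffices to prove $\Pr[B'] \leq \delta/2$.

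The core step bounds $\Pr[B']$ by symmetrizing over the combined sample. View $S$ and $T$ together as $2m$ i.i.d.\ draws $z_1,\dots,z_{2m}$, and realize the pair $(S,T)$ by first sampling this sequence and then choosing a uniformly random balanced partition of the $2m$ positions into an $S$-half and a $T$-half, independently of the $z_i$'s. Condition on the sequence. A set $R\in\Rr$ matters only through which positions it contains, and by Corollary~\ref{cor:ssp-subset} (applied to the $\leq 2m$ sample points) there are at most $\beta(d)\,(2m)^d$ distinct such position-sets. For a fixed relevant $R$ occupying $\ell := \tfrac12\eps m$ or more positions, the probability that all of them land on the $T$-side is $\binom{2m-\ell}{m}\big/\binom{2m}{m}\leq 2^{-\ell}$. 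A union bound gives $\Pr[B'\mid z_1,\dots,z_{2m}] \leq \beta(d)\,(2m)^d\,2^{-\eps m/2}$, hence $\Pr[B'] \leq \beta(d)\,(2m)^d\,2^{-\eps m/2}$. It then remains to choose $\alpha$ so that $m \geq \alpha\tfrac{d}{\eps}\log\tfrac{1}{\delta\eps}$ forces this to be $\leq \delta/2$; after taking logarithms this is the elementary inequality $\tfrac{\eps m}{2} \geq d\log(2m) + \log\beta(d) + \log\tfrac{2}{\delta}$, which holds for a suitable absolute $\alpha$ once one checks that the term $d\log(2m)$ is absorbed into $\tfrac{\eps m}{4}$ as soon as $m = \Omega(\tfrac{d}{\eps}\log\tfrac{d}{\eps})$.

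The only genuinely non-routine step is the symmetrization that replaces the union bound over the possibly enormous family $\Rr$ by a union bound over the at most $\beta(d)(2m)^d$ traces on the combined $2m$-point sample --- this is exactly where bounded VC-dimension, through Sauer--Shelah, turns an exponential count into a polynomial one and therefore yields a near-linear sample size. Everything else is standard concentration plus the final calibration of $\alpha$. I would be careful about two small technical points: the transition between the ``with replacement'' and the ``fixed-size subset'' sampling models, and ensuring the witness $R$ chosen in the $B \Rightarrow B'$ step depends only on $S$, so that conditioning on $S$ and then using the independence of $T$ is legitimate.
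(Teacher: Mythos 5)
The paper does not prove this lemma; it is cited to \cite{CW89} as a known result. Your reconstruction is the standard Haussler--Welzl double-sampling (symmetrization) argument, and the outline you give is the correct one for this statement: reduce to a ghost sample $T$ and the event $B'$ via Chebyshev, then condition on the combined $2m$-point sample and union-bound over its at most $\beta(d)(2m)^d$ traces (Sauer--Shelah), using the exchangeability of the balanced split to get the $2^{-\ell}$ factor. The two technical caveats you flag yourself -- the with-replacement versus fixed-size sampling model, and the measurability of the witness $R$ with respect to $S$ -- are exactly the right places to be careful, and they are handled in the usual way (negative correlation for the former, conditioning for the latter). One small calibration point: you phrase the final step as ``a suitable absolute $\alpha$'' while also saying you need $m = \Omega\bigl(\tfrac{d}{\eps}\log\tfrac{d}{\eps}\bigr)$, which is not literally implied by $m \geq \alpha\tfrac{d}{\eps}\log\tfrac{1}{\delta\eps}$ when $\delta\eps$ is bounded away from $0$. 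In fact a short computation shows an absolute $\alpha$ does suffice here (the $d\log(2m)$ term is bounded by $d\bigl(O(1) + 2\log\tfrac{1}{\delta\eps}\bigr)$ once $\log\tfrac{1}{\delta\eps} \ge 1$, which is the only regime that matters), but it would be cleaner to state that calculation explicitly rather than reduce to the $\tfrac{d}{\eps}\log\tfrac{d}{\eps}$ form; note also that the paper's own downstream use (Corollary~\ref{corollary:eps-net-size}) allows $\alpha$ to depend on $d$ anyway, so nothing in the paper hinges on the constant being absolute.
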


We will use this lemma with some modifications. First, we will employ a traditional notion of high probability, i.e. for a given $c$ we take $\delta = |X|^{-c}$, so the probability of failure is $\frac{1}{poly(|X|)}$. Also, as in \cite{DHV22} we employ this lemma not for the given hypergraph $(X,\mathcal{R})$, but for $(X, \mathcal{R}^*)$, where $\mathcal{R}^* = \{(R' \sdiff R'') : R', R'' \in \mathcal{R}\}$. From Lemma \ref{lemma:join-dim} we know that the VC-dimension of $\mathcal{R}^*$ is $\Oh{d \log d}$. Taking the definition of $\varepsilon$-net into account, we can reformulate Lemma \ref{lemma:eps-net-original} in the following way:

\begin{corollary}
\label{corollary:eps-net-size}
For any positive integers $c$ and $d$ there exists a constant $\alpha = \alpha(c,d)$ such that for any hypergraph $(X, \mathcal{R})$ of VC-dimension at most $d$ and for any $\varepsilon > 0$, any random set $S \subseteq X$ with $|S| \geq \alpha \cdot \frac{1}{\varepsilon} \log \frac{|X|}{\varepsilon}$ has (with probability at least $1 - |X|^{-c}$) the following property: if $R', R'' \in \mathcal{R}$ and $R' \cap S = R'' \cap S$, then $|R' \sdiff R''| \leq \varepsilon \cdot |X|$.
\end{corollary}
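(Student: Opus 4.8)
The plan is to derive Corollary~\ref{corollary:eps-net-size} directly from Corollary~\ref{corollary:eps-net-size}'s two inputs: Lemma~\ref{lemma:join-dim} (to control the VC-dimension of the symmetric-difference hypergraph) and Lemma~\ref{lemma:eps-net-original} (to get an $\eps$-net of the right size with high probability). First I would fix the integers $c$ and $d$ and consider the auxiliary hypergraph $(X, \Rr^*)$ with $\Rr^* = \{R' \sdiff R'' \mid R', R'' \in \Rr\}$. Since symmetric difference is a binary set operator over which intersection distributes, Lemma~\ref{lemma:join-dim} applies and gives us that the VC-dimension of $(X,\Rr^*)$ is at most $d^* := \Oh{d \log d}$; call the implied constant factor part of the definition of $d^*$, so $d^*$ depends only on $d$.

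Next I would invoke Lemma~\ref{lemma:eps-net-original} on the hypergraph $(X, \Rr^*)$ with the "high probability" choice $\delta = |X|^{-c}$. The lemma then says that a random set $S \subseteq X$ of size at least $\alpha_0 \cdot \frac{d^*}{\eps} \log \frac{|X|^c}{\eps} = \alpha_0 \cdot \frac{d^*}{\eps}\left(c \log |X| + \log \frac{1}{\eps}\right)$ is an $\eps$-net for $(X,\Rr^*)$ with probability at least $1 - |X|^{-c}$. Absorbing $d^*$, $c$, and the logarithmic slack into a single constant $\alpha = \alpha(c,d)$, this lower bound on $|S|$ is implied by $|S| \geq \alpha \cdot \frac{1}{\eps} \log \frac{|X|}{\eps}$ (here one uses $\log \frac{|X|^c}{\eps} = \Oh{c \log \frac{|X|}{\eps}}$ for $|X| \geq 2$, which holds since we assume $n \geq 3$), which is exactly the hypothesis of the corollary. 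So with probability at least $1 - |X|^{-c}$, the random set $S$ is an $\eps$-net for $(X, \Rr^*)$.

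Finally I would unpack what being an $\eps$-net for $(X,\Rr^*)$ gives us. Suppose $R', R'' \in \Rr$ satisfy $R' \cap S = R'' \cap S$. Then $(R' \sdiff R'') \cap S = (R' \cap S) \sdiff (R'' \cap S) = \varnothing$, using that intersection distributes over $\sdiff$. Since $R' \sdiff R'' \in \Rr^*$ and it has empty intersection with the $\eps$-net $S$, the defining property of an $\eps$-net (in contrapositive form) forces $|R' \sdiff R''| < \eps \cdot |X|$, hence certainly $\leq \eps \cdot |X|$. This is the claimed property, and it holds on the same high-probability event, completing the argument.

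The only real subtlety — and the step I would be most careful about — is the constant bookkeeping: checking that the hidden $\Oh{d \log d}$ from Lemma~\ref{lemma:join-dim} and the $c$ from the choice $\delta = |X|^{-c}$ can genuinely be folded into a single constant $\alpha(c,d)$ while keeping the size bound in the clean form $\alpha \cdot \frac{1}{\eps}\log\frac{|X|}{\eps}$; this needs the elementary inequality $\log\frac{|X|^c}{\eps} \le c\log\frac{|X|}{\eps}$ for $c \ge 1$ and $|X| \ge 1$, which is straightforward. Everything else is a direct substitution into the two cited results.
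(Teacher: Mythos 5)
Your proof is correct and follows the same route the paper takes: pass to the symmetric-difference hypergraph $\Rr^*$, bound its VC-dimension via Lemma~\ref{lemma:join-dim}, invoke Lemma~\ref{lemma:eps-net-original} with $\delta = |X|^{-c}$, and then read off the conclusion from the $\eps$-net property applied to $R' \sdiff R''$. The paper leaves this corollary without a formal proof, justifying it in the preceding paragraph with exactly the same ingredients, so your write-up is essentially a careful elaboration of the intended argument (your constant-absorption step also implicitly uses $\eps \le 1$, which is harmless since the statement is vacuous otherwise).
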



\section{General algorithm framework}
\label{sec:algorithm}



\subsection{Orders on hypergraphs}

The following lemma is our main tool. It provides an order of the hyperedges of any bounded distance VC-dimension hypergraph such that the difference between consecutive hyperedges is ``sufficiently small''. It corresponds to Theorem 1.2 in \cite{DHV22}, but with one important difference: in our setting ``sufficiently small'' means that the \emph{sum} of all differences is bounded, whereas in the previous work the bounds applied to \emph{every one} of the differences.

\begin{restatable}{lemma}{lemmaHypergraphOrder}
\label{lemma:hypergraph-order}
Let $(X, \Rr)$ be a hypergraph with  VC-dimension at most $d$. There exists an order $R_1, R_2, \ldots, R_{|\Rr|}$ on its hyperedges such that $\sum_{i=1}^{|\Rr|-1}|R_i \sdiff R_{i+1}| = \Oh{|\Rr|^{1-1/d} \cdot |X|}$. If there is an algorithm working in time complexity $P(|X|,|\Rr|)$ which can list, for any given $x \in X$, all $R \in \Rr$ containing $x$, then the desired order $R_1, R_2, \ldots, R_n$ can be computed, with high probability, in time complexity $\Ot{|\Rr|^{1+1/d}+|\Rr|^{1/d} P(|X|, |\Rr|})$.
\end{restatable}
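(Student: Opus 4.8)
The plan is to build the order greedily/recursively using $\varepsilon$-nets, mirroring the divide-and-conquer structure that underlies the Chazelle–Welzl spanning-path construction but with the relaxed ``sum of differences'' objective, which is exactly what makes Corollary~\ref{corollary:eps-net-size} suffice.

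\textbf{The construction.} First I would set up a recursion on $\Rr$. Given the current sub-family $\Rr$ (initially all of $\Rr$, with $|X|$ fixed throughout), pick a random sample $S \subseteq X$ of size $|S| = \Theta\!\left(\tfrac{1}{\varepsilon}\log\tfrac{|X|}{\varepsilon}\right)$ for a parameter $\varepsilon$ to be tuned. By Corollary~\ref{corollary:eps-net-size}, with high probability any two hyperedges $R', R''$ with $R'\cap S = R''\cap S$ satisfy $|R'\sdiff R''|\le \varepsilon|X|$. Group the hyperedges of $\Rr$ into \emph{buckets} according to their trace $R\cap S$; by Corollary~\ref{cor:ssp-subset} there are at most $\beta|S|^d = \Ot{\varepsilon^{-d}}$ nonempty buckets. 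Order the buckets arbitrarily; within each bucket, recurse — but on the recursion we should restrict attention to the ground set actually needed. The key point is that inside a bucket all traces on $S$ agree, so the differences are controlled by $\varepsilon|X|$; we then recurse to order within the bucket, and concatenate the bucket-orders. At the base of the recursion (buckets of constant size, or when $\varepsilon|X|$ drops below a threshold) we order arbitrarily.

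\textbf{The cost analysis.} Concatenating ordered buckets produces at most $\#\text{buckets}-1 = \Ot{\varepsilon^{-d}}$ ``boundary'' adjacencies between different buckets, each contributing at most $|X|$ to the sum (trivial bound), hence $\Ot{\varepsilon^{-d}|X|}$ across boundaries at the top level; the adjacencies \emph{inside} a bucket each cost at most $\varepsilon|X|$ plus whatever the recursion inside that bucket costs. If I do a single level (no recursion) the total is $\Ot{\varepsilon^{-d}|X| + |\Rr|\cdot\varepsilon|X|}$, and balancing $\varepsilon^{-d} = |\Rr|\varepsilon$, i.e. $\varepsilon = |\Rr|^{-1/(d+1)}$, gives $\Ot{|\Rr|^{1/(d+1)}|X| + |\Rr|^{d/(d+1)}|X|} = \Ot{|\Rr|^{1-1/(d+1)}|X|}$. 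To reach the claimed $|\Rr|^{1-1/d}$ one recurses: inside each bucket, which has at most $|\Rr|$ hyperedges, repeat the sampling. With $t$ levels of recursion the bound interpolates to $|\Rr|^{1-1/(d+t)}$-type exponents; more carefully, the right recursion (choosing $\varepsilon$ at each level as a function of the current bucket size) yields a geometric series whose dominant term is $\Oh{|\Rr|^{1-1/d}|X|}$ — this is the standard Chazelle–Welzl exponent. I would write the recurrence $T(r)$ for the sum-of-differences on a family of size $r$ as $T(r) \le \Ot{r\cdot\varepsilon|X|} + \sum_{\text{buckets } B} T(|B|)$ with $\sum|B| = r$ and $\#B = \Ot{\varepsilon^{-d}}$, plus $\Ot{\varepsilon^{-d}|X|}$ for boundaries, and verify by induction that $T(r) = \Oh{r^{1-1/d}|X|}$ with a suitable choice of $\varepsilon = \varepsilon(r)$ (roughly $\varepsilon \sim r^{-1/d}$ so that $r\varepsilon = r^{1-1/d}$ and $\varepsilon^{-d} = r$).

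\textbf{The running time.} Each level needs: sampling $S$ (trivial); computing the trace $R\cap S$ of every hyperedge, which I get by running the listing algorithm for each $x\in S$ — cost $|S|\cdot P(|X|,|\Rr|)$ plus $\Ot{|S|\cdot|\Rr|}$ bookkeeping to accumulate traces and radix-sort hyperedges into buckets; this is $\Ot{\varepsilon^{-1}(P(|X|,|\Rr|) + |\Rr|)}$ per node of the recursion. Summing over all recursion nodes (whose hyperedge-sizes sum to $\Ot{|\Rr|}$ per level, with $\Oh{\log}$ levels) and using $\varepsilon^{-1} = \Ot{|\Rr|^{1/d}}$ at the top, the total is $\Ot{|\Rr|^{1/d}\cdot(P(|X|,|\Rr|)+|\Rr|)} = \Ot{|\Rr|^{1+1/d} + |\Rr|^{1/d}P(|X|,|\Rr|)}$, matching the statement. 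The ``with high probability'' is handled by a union bound: there are only polynomially many applications of Corollary~\ref{corollary:eps-net-size} (one per recursion node, and the recursion tree has polynomial size), so choosing the constant $c$ large enough makes all of them succeed simultaneously with high probability; if any fails we can detect it (the differences will be too large) and restart, giving the Las Vegas guarantee when this is fed into the main algorithm.

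\textbf{Main obstacle.} The delicate point is the recursion bookkeeping: inside a bucket the ``effective'' ground set can still be all of $X$, so $\varepsilon|X|$ does \emph{not} shrink the way it would if we were also shrinking $X$; one must be careful that the per-level difference contribution $r\cdot\varepsilon|X|$ uses the \emph{global} $|X|$, and that the recurrence still closes. I expect the cleanest route is to not recurse on the ground set at all, keep $|X|$ fixed, choose $\varepsilon(r)=\Theta(r^{-1/d})$, and prove $T(r)\le C\,r^{1-1/d}|X|$ by induction, where the inductive step needs the convexity inequality $\sum_B |B|^{1-1/d} \le (\#B)^{1/d}(\sum_B|B|)^{1-1/d} = \Ot{(\varepsilon^{-d})^{1/d} r^{1-1/d}} = \Ot{\varepsilon^{-1}r^{1-1/d}}$, which must be balanced against the boundary term $\varepsilon^{-d}|X|$ and the direct term $r\varepsilon|X|$; verifying that all three are $\Oh{r^{1-1/d}|X|}$ (up to logs) with $\varepsilon = r^{-1/d}$ is the crux of the argument.
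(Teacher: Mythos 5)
Your approach is genuinely different from the paper's and, as you yourself sensed in the ``main obstacle'' paragraph, it has a real gap that I do not think can be patched by the route you suggest.

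The paper does not recurse with fresh samples. It picks a \emph{single} random sample $S$ of size $s = |\Rr|^{1/d}$, arranges it in a random order $(x_1,\ldots,x_s)$, and considers the \emph{nested prefixes} $S_k = \{x_1,\ldots,x_{s/2^k}\}$. It then builds a spanning tree on $\Rr$: maintain a partition of $\Rr$ into components, and for $j=1,\ldots,s$ split every component by whether $x_j$ belongs to the hyperedge, adding one tree edge per split. The crucial accounting is that an edge added between steps $s_k$ and $s_{k-1}$ joins two hyperedges $R',R''$ that still agreed on $S_k$, so $|R'\sdiff R''|\le \varepsilon_k|X|$ with $\varepsilon_k = 2^k\varepsilon$; meanwhile the number of such edges is bounded by the number of components before step $s_{k-1}$, which Sauer--Shelah caps at $\beta(s/2^{k-1})^d$. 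Multiplying and summing over $k$ gives a convergent geometric series $\sum_k \beta(s/2^{k-1})^d\cdot 2^k\varepsilon|X|=\Oh{s^d\varepsilon|X|}=\Ot{|\Rr|^{1-1/d}|X|}$. The few remaining ``secondary'' edges all have agreeing trace on all of $S$ and hence cost $\le\varepsilon|X|$ each. An Euler tour of the spanning tree, pruned to a Hamiltonian sequence, finishes the job. Note there is no boundary-between-buckets term at all: \emph{every} edge of the tree carries a nontrivial bound depending on \emph{when} the two endpoints separated, and this is exactly what makes the calculation close.

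Your recursive-bucketing version loses precisely this. At the top level with $\varepsilon=r^{-1/d}$ you have up to $\beta|S|^d=\Ot{r}$ buckets, and each bucket boundary costs up to $|X|$, giving $\Ot{r|X|}$ -- already far larger than the target $r^{1-1/d}|X|$. Balancing the boundary term against the within-bucket term forces $\varepsilon=r^{-1/(d+1)}$, which yields exponent $1-1/(d+1)$, strictly worse than $1-1/d$. Recursing with fresh samples inside each bucket does not recover the deficit: the sub-buckets inherit no improved bound on pairwise differences from the parent (a fresh sample $S_B$ inside a bucket of size $|B|<r$ is \emph{smaller} than the top-level $S$, hence a weaker net), and the Hölder step $\sum_B |B|^\alpha \le (\#B)^{1-\alpha}(\sum_B|B|)^\alpha$ for $\alpha<1$ shows $\sum_B|B|^\alpha$ can exceed $r^\alpha$ by a factor of $(\#B)^{1-\alpha}$, so the recursion \emph{adds} to the cost rather than shrinking it. Your own aside that ``$t$ levels interpolate to $|\Rr|^{1-1/(d+t)}$'' is a symptom of this: that exponent increases toward $1$, not down toward $1-1/d$. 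The fix is not a smarter $\varepsilon$ schedule but the paper's nested-prefix trick, which gives a \emph{different} bound to each spanning-tree edge according to how early it was created, so that the expensive edges are guaranteed to be few.

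The running-time claim and the union-bound argument for w.h.p.\ correctness are both fine in spirit and match the paper (the paper runs $s=|\Rr|^{1/d}$ sequential splitting steps, each costing $\Oh{|\Rr|+P(|X|,|\Rr|)}$, and applies Corollary~\ref{corollary:eps-net-size} to $\Oh{\log s}$ nested prefixes).
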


\begin{proof}
Consider a weighted, undirected graph $G$ with $\Rr$ as its vertex set. For any $R', R'' \in \Rr$ we define the weight of the edge $(R', R'')$ of $G$ as $|R' \sdiff R''|$. Our immediate goal is to find a spanning tree $\mathcal{T}$ of $G$ having total cost of edges bounded by $\Oh{|\Rr|^{1-1/d} \cdot |X|}$. If we succeed, then it is easy to obtain the desired order on $\Rr$: take an Euler tour $(R_{k_{1}}, R_{k_{2}}, \ldots, R_{k_{2|\Rr|-2}})$ of $\mathcal{T}$. We know that $\sum |R_{k_{i}} \sdiff R_{k_{i+1}}|$ is also $\Oh{|\Rr|^{1-1/d} \cdot |X|}$, as each edge of $\mathcal{T}$ appears twice in an Euler tour. But if we delete some elements from the $R_{k_i}$ sequence, this value can only decrease, as $|A \sdiff C| \leq |A \sdiff B| + |B \sdiff C|$ for any finite sets $A, B, C$. Therefore we prune the sequence, keeping only the first instance of every element of $\Rr$, obtaining a path $(R_1, \ldots, R_{|\Rr|})$ with $\sum |R_j \sdiff R_{j+1}| = \Oh{|\Rr|^{1-1/d} \cdot |X|}$. Moreover, all these operations converting $\mathcal{T}$ to the order need time complexity $\Oh{|\Rr|}$. So we can now focus on finding $\mathcal{T}$.

We will use a randomized algorithm with probability of failure at most $|X|^{-c}$, for a given integer $c$. Pick a random set $S \subseteq X$ with $|S| = s = |\Rr|^{1/d}$ and arrange its elements in random order $(x_1, \ldots, x_s)$. Let $\alpha = \alpha(c+1,d)$ be the constant from Corollary \ref{corollary:eps-net-size} and fix $\varepsilon = \frac{2 \alpha \log |X|}{s}$. Now 
\[\alpha \cdot \frac{1}{\varepsilon} \log \frac{|X|}{\varepsilon} \leq \frac{s}{2 \log |X|} \cdot (\log|X| + \log s) \leq s,\] 

so $S$ and $\varepsilon$ satisfy the assumptions of Corollary \ref{corollary:eps-net-size}. Also, let $q = \lfloor \log_2 s \rfloor$ and for every $k = 0, 1, \ldots, q$ we define $s_k = \frac{s}{2^k}$ and $S_k = \{x_1, \ldots, x_{s_k}\}$. It is easy to see that $S_q \subset S_{q-1} \subset \ldots \subset S_1 \subset S_0 = S$. Observe that every prefix $S_k$ of $S$ is also a random subset of $X$, so we can also apply Corollary \ref{corollary:eps-net-size} to it if we take $\varepsilon_k = 2^k \cdot \varepsilon$ --- indeed, $\alpha \cdot \frac{1}{\varepsilon_k} \log \frac{|X|}{\varepsilon_k} \leq \alpha \cdot \frac{1}{2^k} \cdot \frac{1}{\varepsilon} \log \frac{|X|}{\varepsilon} \leq \frac{s}{2^k}$. The probability of failure for each of $S_k$ is at most $|X|^{-(c+1)}$, so by union bound, with probability greater than $1 - |X|^{-c}$ no failure will happen.

Throughout the algorithm, we maintain the partition $\Rr = \Rr_1 \cup \Rr_2 \cup \ldots \cup \Rr_t$ into disjoint subsets -- \emph{groups}. At each step, we add some edges to $\mathcal{T}$, split some of the groups into smaller parts, and maintain the invariant that $\mathcal{T}$ is a spanning tree on the set of all groups. Initially $\mathcal{T} = \varnothing$ and the partition consists of a single group $\Rr$. Now for every $j = 1, 2, \ldots, s$ we repeat the following subroutine: every group $\Rr_i$ is split into parts $\Rr_i^0 = \{R \in \Rr_i : x_j \notin R\}$ and $\Rr_i^1 = \{R \in \Rr_i : x_j \in R\}$. If both $\Rr_i^0$  and $\Rr_i^1$ are nonempty, we pick any $R_0 \in \Rr_i^0$ and $R_1 \in \Rr_i^1$, add $(R_0, R_1)$ to $\mathcal{T}$, and add both parts as new groups instead of $\Rr_i$. If one of the parts is empty, the other is $\Rr_i$, and we leave it as it is. 
After completing $s$ steps, $\mathcal{T}$ may still not span all vertices in $\mathcal{R}$. Let us call all the edges added so far the \emph{primary} edges, and then proceed to add new arbitrary edges to $\mathcal{T}$ until it becomes a tree. Those later edges we will call \emph{secondary}.

 Now consider the edges added to $\mathcal{T}$ between step $s_k$ and $s_{k-1}$ (strictly after $s_k$, but including $s_{k-1}$). For any such edge $(R',R'')$ there must be $R' \cap S_k = R'' \cap S_k$, as $(R', R'')$ belonged to the same group after step $s_k$. But as $S_k$ is an $\varepsilon_k$-net, this means that $R' \sdiff R'' \leq 2^k \cdot \varepsilon \cdot |X|$. On the other hand, let us count the number of groups before step $s_{k-1}$. For any $R'$ and $R''$ belonging to different groups there must be $R' \cap S_{k-1} \neq R'' \cap S_{k-1}$, so $R'$ and $R''$ induce two different subsets of $S_{k-1}$. But as VC-dimension of $\Rr$ does not exceed $d$, by Corollary \ref{cor:ssp-subset} there can be no more than $\beta |S_{k-1}|^d = \beta (\frac{s}{2^{k-1}})^d$ different subsets of $S_{k-1}$ induced by $\Rr$, for come constant $\beta$. This proves that before step $s_{k-1}$ there are at most $\beta(\frac{s}{2^{k-1}})^d$ edges added to $\mathcal{T}$. The cost of edges added between steps $s_k$ and $s_{k-1}$ can be bounded by $\beta (\frac{s}{2^{k-1}})^d \cdot 2^k \varepsilon |X|$, and the cost of all primary edges is at most
 \[\sum_{k=1}^{q} \beta \left( \frac{s}{2^{k-1}} \right)^d \cdot 2^k \varepsilon |X| = \Oh{q \cdot s^d \cdot \varepsilon |X|} = \Oh{q \cdot s^{d-1} \cdot |X| \cdot \log |X|} = \Ot{|\Rr|^{1-1/d} \cdot |X|}.\]
For the secondary edges, the bound is even simpler: for every such edge $(R',R'')$ we already know that $R'$ and $R''$ ended up in the same group, so $R' \cap S = R'' \cap S$, which means $|R' \sdiff R''| \leq \varepsilon |X|$. As there are at most $|\Rr|$ such edges, we bound their cost by \[|\Rr| \cdot \varepsilon |X| = \Oh{|\Rr| \cdot \frac{\log |X|}{|\Rr|^{1/d}} \cdot |X|} = \Ot{|\Rr|^{1-1/d} \cdot |X|},\]
which completes the proof.
As for the complexity, a single step in the first phase (for primary edges) takes $\Oh{|\Rr| + P(|X|,|\Rr|)}$ time, as it has to go, for a fixed $x_j$, through all $R \in \Rr$ and determine if $x_j \in R$. Adding secondary edges is $\Oh{|\Rr|}$. Therefore, the total complexity is $\Ot{|\Rr|^{1+1/d}+|\Rr|^{1/d} P(|X|, |\Rr|})$.

\end{proof}
\noindent We will apply Lemma \ref{lemma:hypergraph-order} to distance hypergraphs, using its two variants: 
\begin{itemize}
\item The first one (Corollary \ref{corollary:distance-hypergraph-xors}) uses $k$-neighbourhoods as the edges of the hypergraph, so in the resulting order the adjacent vertices have similar $k$-neighbourhoods. We can directly compute the next neighbourhood from the previous one. This will mainly be useful for geometric intersection graphs and other implicitly given graphs.
\item The second one (Corollary \ref{corollary:weighted-average-spanning-path}) uses duality (reverses the role of vertices and their $k$-neighbourhoods), which allows us to express every $k$-neighbourhood as a sum of sublinear number of intervals. This will be useful for the general sparse graph case.
\end{itemize}

\begin{restatable}{corollary}{corDistanceHypergraphXors}
\label{corollary:distance-hypergraph-xors}
Let $G = (V,E)$ be a graph with distance VC-dimension at most $d$ and let $k \in \{1, 2, \ldots, n\}$ There is an order $v_1, \ldots, v_n$ on the vertices of $G$ such that $\sum_{i=1}^{n-1} |N^k[v_i] \sdiff N^k[v_{i+1}]| = \Oh{n^{2-1/d}}$. This order can be computed, with high probability, in time complexity $\Ot{n^{1/d} \cdot T(G)}$, where $T(G)$ is the complexity of a single-source distance finding algorithm (e.g. BFS).
\end{restatable}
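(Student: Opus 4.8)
The plan is to apply Lemma \ref{lemma:hypergraph-order} directly to the $k$-distance hypergraph of $G$, and then translate the statement back into the language of vertices. First I would set $X = V$ and $\Rr = \mathcal{N}^k(G) = \{N^k[v] : v \in V\}$. By the remark following the definition of distance VC-dimension (a consequence of Theorem \ref{thm:vcdim-basic}), the hypergraph $(V, \mathcal{N}^k(G))$ has VC-dimension at most $d$. Note that $|\Rr| \le n = |V|$, since each hyperedge is $N^k[v]$ for some $v \in V$; there may be fewer than $n$ distinct hyperedges if two vertices have identical $k$-neighborhoods. Lemma \ref{lemma:hypergraph-order} then yields an order $R_1, \ldots, R_{|\Rr|}$ on the hyperedges with $\sum_{i=1}^{|\Rr|-1} |R_i \sdiff R_{i+1}| = \Oh{|\Rr|^{1-1/d} \cdot |X|} = \Oh{n^{1-1/d} \cdot n} = \Oh{n^{2-1/d}}$.

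Next I would convert this order on hyperedges into an order on vertices. For each hyperedge $R_i$ pick a representative vertex $w_i$ with $N^k[w_i] = R_i$; the vertices $w_1, \ldots, w_{|\Rr|}$ are distinct (distinct hyperedges), but if $|\Rr| < n$ they do not exhaust $V$. To obtain a genuine order $v_1, \ldots, v_n$ on all of $V$, I would group the vertices by their $k$-neighborhood: for each $i$, let $C_i = \{v \in V : N^k[v] = R_i\}$, list the members of $C_i$ consecutively, and concatenate the blocks $C_1, C_2, \ldots, C_{|\Rr|}$ in the order given by the lemma. In the resulting sequence, two consecutive vertices $v_j, v_{j+1}$ either lie in the same block $C_i$, in which case $N^k[v_j] \sdiff N^k[v_{j+1}] = \varnothing$, or lie in consecutive blocks $C_i, C_{i+1}$, contributing exactly $|R_i \sdiff R_{i+1}|$. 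Summing over all consecutive pairs, $\sum_{j=1}^{n-1} |N^k[v_j] \sdiff N^k[v_{j+1}]| = \sum_{i=1}^{|\Rr|-1} |R_i \sdiff R_{i+1}| = \Oh{n^{2-1/d}}$, as required. (One also needs $k \in \{1, \ldots, n\}$ only so that $N^k[v]$ makes sense as a proper ball; for $k \ge n$ every ball is $V$ and the statement is trivial.)

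For the complexity bound I would instantiate the algorithmic part of Lemma \ref{lemma:hypergraph-order}. The subroutine required there is: given $x \in X = V$, list all $R \in \Rr$ containing $x$, i.e. list all vertices $v$ with $x \in N^k[v]$, equivalently $\dist(x, v) \le k$. This is exactly $N^k[x]$, which can be computed by a single run of a BFS (or general single-source distance algorithm) from $x$, truncated at depth $k$; so $P(|X|, |\Rr|) = \Oh{T(G)}$ where $T(G)$ is the cost of one single-source distance computation. Plugging into the lemma's bound $\Ot{|\Rr|^{1+1/d} + |\Rr|^{1/d} P(|X|, |\Rr|)}$ and using $|\Rr| \le n$ together with the fact that $T(G) = \Om{n}$ (any BFS must at least touch all vertices), the first term $n^{1+1/d}$ is absorbed, giving total time $\Ot{n^{1/d} \cdot T(G)}$. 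Grouping vertices by $k$-neighborhood at the end, and extracting the representatives, costs at most $\Oh{n}$ per hyperedge to hash the neighborhoods, which is dominated. The main obstacle — or rather the main point requiring care — is the bookkeeping around the possible collapse $|\Rr| < n$: one must be careful that "order on hyperedges" does not literally give an order on all of $V$, and handle the multiplicities via the blocking argument above so that no pair of consecutive vertices is charged more than the corresponding hyperedge difference; everything else is a routine substitution into Lemma \ref{lemma:hypergraph-order}.
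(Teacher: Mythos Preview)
Your proposal is correct and follows essentially the same approach as the paper: apply Lemma~\ref{lemma:hypergraph-order} with $X=V$ and $\Rr=\{N^k[v]:v\in V\}$, and use the symmetry $x\in N^k[v]\iff v\in N^k[x]$ to realize the required listing subroutine via a single-source distance computation. You are in fact more careful than the paper about the case $|\Rr|<n$ (vertices with identical $k$-neighborhoods), handling it cleanly via the blocking argument, whereas the paper's one-line proof tacitly identifies hyperedges with vertices.
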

\begin{proof}
    See Appendix \ref{appendix:omitted}.
\end{proof}

Let $G = (V,E)$ be a graph, and let $\sigma = (x_1, \ldots, x_n)$ be some order on vertex set $V$. For any $a, b \in \{1,2, \ldots, n\}$, $a \leq b$ let  $x[a,b]$ be some interval of vertices in this order, i.e. $x[a,b] = \{x_a, x_{a+1}, \ldots, x_b\}$. Every subset $D \subset V$ can be expressed as the sum of such intervals: $D = \bigcup_{i=1}^{s} x[a_i, b_i]$ for some positive integer $s$, and some $a_1, \ldots, a_s, b_1, \ldots, b_s$. There exists exactly one such representation with minimal possible $s$ -- let us call it $I_\sigma(D)$, the \emph{canonical interval representation of $D$} with respect to the order $\sigma$. We will omit $\sigma$ and write $I(D)$ whenever it is clear from the context.

\begin{restatable}{corollary}{corWeightedAverageSpanningPath}
\label{corollary:weighted-average-spanning-path}
Let $G = (V,E)$ be a graph with distance VC-dimension at most $d$, let $k \in \{1, 2, \ldots, n\}$, and let $\alpha : V \to \mathbb{Z}^+$ be any assignment of positive integer weights to vertices. There exists an order $v_1, \ldots, v_n$ on the vertices of $G$ such that, with respect to that order, $\sum_{x \in V} \alpha(x) \cdot |I(N^k[x])| = \Ot{n^{1 - 1/d} \cdot \sum_{j=1}^{n} \alpha(j)}$. This order can be computed, with high probability, in $\Ot{n^{1/d} \cdot T(G)}$ time complexity, where $T(G)$ is the complexity of a single-source distance finding algorithm (e.g. BFS).
\end{restatable}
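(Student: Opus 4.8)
The plan is to reduce the corollary to Lemma~\ref{lemma:hypergraph-order}, following the proof of Corollary~\ref{corollary:distance-hypergraph-xors} but adding two ingredients: a crossing‑counting step that rewrites $\sum_{x}\alpha(x)\,|I(N^k[x])|$ as a sum of symmetric differences of consecutive neighbourhoods, and a weighted ``blow‑up'' of the ground set that encodes $\alpha$. First I would record the elementary fact that, for any order $v_1,\dots,v_n$ on $V$ and any $D\subseteq V$, $|I(D)|$ is the number of maximal blocks of consecutive $v_i$'s lying in $D$, and every such block except possibly the one containing $v_1$ is immediately preceded by an index $i$ with $v_i\notin D$, $v_{i+1}\in D$; hence $|I(D)|\le 1+\#\{\,i\in\{1,\dots,n-1\}:|\{v_i,v_{i+1}\}\cap D|=1\,\}$. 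Taking $D=N^k[x]$ and using the symmetry of distance, $v_i\in N^k[x]\iff x\in N^k[v_i]$, the counted indices are precisely those with $x\in N^k[v_i]\sdiff N^k[v_{i+1}]$. Writing $\alpha(S)=\sum_{x\in S}\alpha(x)$ and summing over $x$ weighted by $\alpha(x)$,
\[
\sum_{x\in V}\alpha(x)\,|I(N^k[x])|\ \le\ \sum_{j=1}^{n}\alpha(j)\ +\ \sum_{i=1}^{n-1}\alpha\!\left(N^k[v_i]\sdiff N^k[v_{i+1}]\right).
\]
Since $d\ge 2$ gives $n^{1-1/d}\ge 1$, it then suffices to find an order with $\sum_{i=1}^{n-1}\alpha(N^k[v_i]\sdiff N^k[v_{i+1}])=\Ot{n^{1-1/d}\sum_j\alpha(j)}$.

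To obtain such an order I would apply Lemma~\ref{lemma:hypergraph-order} not to $(V,\mathcal N^k(G))$ itself but to its $\alpha$‑blow‑up: replace each vertex $x$ by $\alpha(x)$ identical copies, giving a ground set $X_\alpha$ of size $\sum_j\alpha(j)$, and let $\Rr_\alpha$ consist of the sets $\widehat{N^k[v]}$ of all copies of vertices in $N^k[v]$, of which there are at most $n$ distinct ones. Replacing ground‑set elements by identical copies cannot raise the VC‑dimension — a shattered set contains at most one copy of each vertex and projects to a shattered subset of $V$ — so $(X_\alpha,\Rr_\alpha)$ still has VC‑dimension at most $d$, and $|\widehat A\sdiff\widehat B|=\alpha(A\sdiff B)$. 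Lemma~\ref{lemma:hypergraph-order} then orders $\Rr_\alpha$ with total consecutive symmetric difference $O(|\Rr_\alpha|^{1-1/d}\cdot|X_\alpha|)=O\!\big(n^{1-1/d}\sum_j\alpha(j)\big)$; unfolding each hyperedge back to the vertices realizing it (clustering vertices with equal $N^k[\cdot]$ consecutively, which only inserts zero‑difference adjacencies) yields an order $v_1,\dots,v_n$ on $V$ with $\sum_i\alpha(N^k[v_i]\sdiff N^k[v_{i+1}])=O\!\big(n^{1-1/d}\sum_j\alpha(j)\big)$. Combining with the displayed inequality proves the stated bound (indeed with $\Oh{\cdot}$, slightly stronger than $\Ot{\cdot}$).

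For the running time I would mirror Corollary~\ref{corollary:distance-hypergraph-xors}: $X_\alpha$ is never materialized. The algorithm of Lemma~\ref{lemma:hypergraph-order} touches the ground set only by sampling uniform random elements and by listing, for a given element, the hyperedges through it — here a weighted vertex sample (prefix sums of $\alpha$ plus a binary search, $\Ot{1}$ after $\Oh{n}$ preprocessing) and a single BFS from the sampled vertex $x$ to recover $N^k[x]$, respectively; similarly $|X_\alpha|$ and the hyperedge sizes $\alpha(N^k[v])$ come from BFS and summation. Thus $P=\Ot{T(G)}$, and the total cost is $\Ot{n^{1+1/d}+n^{1/d}\,T(G)}=\Ot{n^{1/d}\,T(G)}$ since $T(G)=\Om{n}$; the $\Ot{\cdot}$ absorbs a factor $\log\sum_j\alpha(j)$, harmless as the weights are polynomially bounded in all our applications.

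I expect the only non‑routine part to be the setup rather than any computation: recognizing that $\sum_x\alpha(x)|I(N^k[x])|$ equals, up to the additive $\sum_j\alpha(j)$, an $\alpha$‑weighted stabbing number of the order — which is exactly what makes Lemma~\ref{lemma:hypergraph-order} applicable — and arranging the weighted blow‑up so that it is purely conceptual, so that neither the VC‑dimension bound nor the running time degrades with $\sum_j\alpha(j)$.
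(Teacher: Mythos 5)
Your proposal is correct and takes essentially the same approach as the paper: you construct the $\alpha$-blow-up of the $k$-neighbourhood hypergraph, observe that duplicating ground-set elements cannot increase VC-dimension, apply Lemma~\ref{lemma:hypergraph-order}, and then translate interval endpoints of $I(N^k[x])$ into contributions to consecutive symmetric differences $e(v_i)\sdiff e(v_{i+1})$. Your remark about clustering vertices with identical balls to handle repeated hyperedges, and the explicit note about the implicit (never-materialized) blow-up for the running-time claim, are small but welcome points of care that the paper's write-up leaves tacit.
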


\begin{proof}
    See Appendix \ref{appendix:omitted}.
\end{proof}

\subsection{Algorithm for general sparse graphs}

We now prove our main result for general graphs of bounded distance VC-dimension.

\begin{theorem}
	\label{thm:explicit-diameter-algorithm}
	Let $\mathcal{G}$ be a graph class with distance VC-dimension bounded by $d \geq 2$.
	There exists an algorithm that decides if a graph $G \in \mathcal{G}$ has diameter at most $k$ in $\Ot{kmn^{1-1/d}}$ time with high probability.
\end{theorem}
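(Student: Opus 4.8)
The plan is to combine Corollary~\ref{corollary:weighted-average-spanning-path} with an incremental BFS-style computation, following the overall strategy sketched in the introduction. Recall that $\diam(G) \leq k$ iff $N^k[v] = V$ for every $v \in V$, so it suffices to compute $|N^k[v]|$ for all $v$ (or to detect the first $v$ with $|N^k[v]| < n$). First I would apply Corollary~\ref{corollary:weighted-average-spanning-path} to $G$ with the distance parameter $k$ and with weights $\alpha(x)$ chosen to be the degree $\deg(x)$ of $x$ (or $\deg(x)+1$ to keep them positive); this takes $\Ot{n^{1/d}\cdot T(G)} = \Ot{n^{1/d} m}$ time and produces an order $v_1,\dots,v_n$ together with the guarantee that $\sum_{x} \deg(x)\cdot |I(N^k[x])| = \Ot{n^{1-1/d}\cdot \sum_x \deg(x)} = \Ot{n^{1-1/d} m}$.

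The key structural move is duality: instead of viewing each $N^k[v]$ as a subset of $V$ in the fixed order, we fix a vertex $v$ and look at the set of indices $\{\,i : v \in N^k[v_i]\,\} = \{\,i : v_i \in N^k[v]\,\}$ (using that $u \in N^k[w] \iff w \in N^k[u]$ for undirected graphs). For a fixed $v$, this index set is exactly $I(N^k[v])$ reinterpreted, so its canonical interval representation has $\Oh{|I(N^k[v])|}$ maximal runs. Now I would run, conceptually, $k$ rounds of a simultaneous BFS: maintaining for each vertex $v$ the set $N^j[v]$ represented compactly by the endpoints of its canonical intervals in the order $\sigma$. To go from $N^{j-1}$ to $N^j$, note $N^j[v] = N^{j-1}[v] \cup \bigcup_{u \in N^{j-1}[v]} N[u]$; but rather than expanding vertex-by-vertex, we process the intervals: for each maximal run $[a,b]$ of $N^{j-1}[v]$ we need $\bigcup_{i=a}^{b} N[v_i]$. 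Here is where the low-difference order pays off — $N[v_{i+1}]$ differs from $N[v_i]$ in few places on average, so the union over a run can be computed by starting from $N[v_a]$ and applying the differences $N[v_i]\sdiff N[v_{i+1}]$ along the run; precomputing all consecutive neighbourhood-differences once (using a second application of the order, or observing these differences are the $\sdiff$ quantities controlled analogously) costs $\Ot{\sum_i |N[v_i]\sdiff N[v_{i+1}]|}$. Summing the work over all $v$, all $j \le k$, and all runs, the total is $\Ot{k}$ times $\sum_v |I(N^k[v])|\cdot(\text{cost per run})$, and the per-run cost is charged against the degrees of the vertices swept, which is exactly why the weights $\alpha = \deg$ were chosen: the bound $\sum_x \deg(x)|I(N^k[x])| = \Ot{n^{1-1/d}m}$ absorbs it, giving total running time $\Ot{k m n^{1-1/d}}$.

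The main obstacle — and the step I would spend the most care on — is making the ``process a run by accumulating neighbourhood differences'' step actually run in time proportional to $\sum_{i\in\text{run}}\deg(v_i)$ (times polylog) rather than in time proportional to the sizes of the intermediate unions, which could be much larger. This requires a careful amortization: one must argue that although the partial union $\bigcup_{i=a}^{t} N[v_i]$ may be large, the total number of insertions/updates performed over the whole run, summed over all runs and all $v$, telescopes into $\sum_i \deg(v_i)$ per neighbourhood-layer, using that each element enters a given vertex's neighbourhood set at most once per layer and that the order $\sigma$ was built precisely to bound $\sum_i |N[v_i]\sdiff N[v_{i+1}]|$ (apply Corollary~\ref{corollary:distance-hypergraph-xors} with $k=1$, i.e.\ to the ordinary adjacency hypergraph, which gives $\sum_i |N[v_i]\sdiff N[v_{i+1}]| = \Oh{n^{2-1/d}} \le \Oh{n^{1-1/d}m}$ since $m \ge n-1$). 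A secondary technical point is bookkeeping the canonical interval representations efficiently under union — maintaining sorted endpoint lists and merging them in time near-linear in the number of intervals, not the number of elements — and verifying that $|I(N^j[v])| \le |I(N^k[v])|$-type monotonicity or at least a good enough bound holds at every intermediate layer $j < k$; if monotonicity in $j$ fails one instead bounds each layer's interval count directly via the VC-dimension of $\mathcal{N}^j(G)$ and re-runs the weighted-order argument for each $j$, at the cost of only a $\polylog$ factor. Once these amortization claims are in place, the final complexity $\Ot{kmn^{1-1/d}}$ follows by adding up the contributions as above, and the algorithm is Las Vegas since the order computation succeeds with high probability and its output can be verified (or one simply reports the correctly computed diameter regardless, as the order's correctness only affects running time, not the answer).
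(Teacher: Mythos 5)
Your setup — applying Corollary~\ref{corollary:weighted-average-spanning-path} with degree weights $\alpha(x)=\deg(x)$ to get $\sum_x \deg(x)\cdot|I(N^k[x])| = \Ot{mn^{1-1/d}}$, and then iterating over radii $r = 1,\dots,k$ with interval-coded neighbourhoods — matches the paper's framework. But the central step, how to grow $N^{r-1}$ into $N^r$, is wrong, and the amortization claim you flag as ``the step I would spend the most care on'' does not go through.

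You expand via $N^{j}[v] = \bigcup_{u \in N^{j-1}[v]} N[u]$, processing each maximal run $[a,b]$ of $N^{j-1}[v]$ by walking $i=a,\dots,b$ and accumulating $N[v_i]$. You then claim the total work ``telescopes into $\sum_i \deg(v_i)$ per layer.'' It does not: the index $i$ lies in a run of \emph{every} $v$ with $v_i \in N^{j-1}[v]$, so summing the per-run sweeping cost over all $v$ gives $\sum_i \deg(v_i)\cdot|N^{j-1}[v_i]|$, which is $\Theta(nm)$ as soon as balls have linear size — exactly the quadratic barrier you are trying to avoid. The weighted bound $\sum_x \deg(x)|I(N^k[x])|$ controls the \emph{number of intervals} weighted by degree, not the sum of degrees of the (possibly $\Theta(n)$) vertices \emph{inside} those intervals, so it cannot absorb this cost. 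There is also a secondary confusion: accumulating $\bigcup_{i=a}^{t} N[v_i]$ is a union, not a symmetric difference, so you cannot simply ``apply the differences $N[v_i]\sdiff N[v_{i+1}]$'' without also maintaining multiplicities — but even fixing that, the over-counting above is fatal.

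The paper sidesteps this by using the \emph{other} expansion identity, $N^r[v] = \bigcup_{x\in N[v]} N^{r-1}[x]$: a union over the $\deg(v)$ direct neighbours of $v$, not over the $\Theta(n)$ vertices of the ball. Each summand $N^{r-1}[x]$ is already held as an interval representation $\mathcal{I}^{r-1}_x$, and the sweep-line union costs $\Ot{\sum_{x\in N[v]}|\mathcal{I}^{r-1}_x|}$, which summed over $v$ is exactly $\sum_x \deg(x)|\mathcal{I}^{r-1}_x| = \Ot{mn^{1-1/d}}$ — this is precisely why the degree weights are chosen, and the amortization is clean. The paper then re-derives a fresh order for each $r$ (as you tentatively suggest at the end) and converts between orders by computing, for each position $i$, the set of $v$ whose interval starts or ends at $v^r_i$, which equals $N^r[v^r_i]\setminus N^r[v^r_{i-1}]$; this keeps the conversion within the same budget. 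So the fix is to replace your ball-sweeping step by the neighbour-union step, at which point the rest of your outline goes through.
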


Our algorithm builds on the work of Ducoffe et al. \cite{DHV20, DHV22}, whose algorithm iteratively computes $r$-neighbourhoods $N^r[v]$ for all $v \in V$ and $r \in \{0, \ldots, k\}$. Note that each set $N^r[v]$ can have $\Oh{n}$ elements, so their total size can be $\Oh{kn^2}$. This means that we cannot break the quadratic barrier if we store the vertex sets explicitly.

To alleviate this, \cite{DHV22} uses \emph{spanning paths with low stabbing number}, i.e. arranges vertices in a particular order such that $\max_{x \in V} |I(N^r[x])| = \Ot{n^{1-\eps_d}}$ for a fixed $r$. The authors provide a subquadratic algorithm that finds such an order with $\eps_d$ dependent only on distance VC-dimension $d$.
Then they use interval representations to encode and operate on the $r$-neighbourhoods.
This yields an algorithm with running time $\Ot{kmn^{1-\eps_d}}$.

One of our goals is to improve the constant $\eps_d$. It is known that there always exists a vertex order with $\eps_d = 1/d$ (\cite{CW89}), but there is no known algorithm that computes it in subquadratic time. Instead, we observe that we can relax requirements for the vertex orders.
Namely, it is sufficient to obtain low weighted average instead of maximum over interval representations. This enables us to use the algorithm given by Corollary \ref{corollary:weighted-average-spanning-path}.

\subparagraph*{Ball encoding.} As we assumed $G$ to be connected and non-trivial, we know that $\deg(v) \geq 1$ for each $v \in V$. Let $v^r_1, \ldots, v^r_n$ be a vertex order such that with respect to that order the following holds:
\begin{equation*}
	\sum_{x \in V} \deg(x) \cdot |I(N^r[x])| = \Ot{mn^{1 - 1/d}}.
\end{equation*}
From Corollary \ref{corollary:weighted-average-spanning-path}, using vertex weights $\alpha(v) = \deg(v) \geq 1$, we know that such an order exists and can be computed (with high probability) in $\Ot{n^{1/d} m}$ time complexity.
Our algorithm encodes $r$-neighbourhoods using their canonical interval representations with respect to $v^r_1, \ldots, v^r_n$.
More precisely, we compute sets of intervals $\mathcal{I}^r_v = I(N^r[v])$ for all vertices $v \in V$.

\begin{lemma}
	\label{lem:explicit-algorithm-step}
	Suppose we are given the encoding for $(r-1)$-neighbourhoods, i.e. the vertex order $v^{r-1}_1, \ldots, v^{r-1}_n$ and the representations $\mathcal{I}^{r-1}_v$ for all vertices $v \in V$.
	Then we can compute the encoding for $r$-neighbourhoods in $\Ot{mn^{1-1/d}}$ time with high probability.
\end{lemma}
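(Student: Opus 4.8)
\textbf{Proof plan for Lemma~\ref{lem:explicit-algorithm-step}.}
The plan is to process the new vertex order $v^r_1, \ldots, v^r_n$ from left to right, maintaining $N^r[v^r_i]$ as a sorted list of vertices (equivalently, we will be able to read off $\mathcal{I}^r_{v^r_i}$ from it). Since $N^r[v]=N[N^{r-1}[v]]=\bigcup_{u \in N^{r-1}[v]} N[u]$, the set $N^r[v^r_i]$ is determined by $N^{r-1}[v^r_i]$, which we already have encoded as $\mathcal{I}^{r-1}_{v^r_i}$. First I would bound the cost of recovering $N^{r-1}[v^r_i]$ explicitly: it has at most $n$ elements, but more usefully $|N^{r-1}[v^r_i]| \le$ its interval count times\dots no --- instead I will charge the expansion to the degrees. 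For a vertex $v$, computing $N[N^{r-1}[v]]$ naively means scanning the adjacency lists of all $u \in N^{r-1}[v]$, costing $\sum_{u \in N^{r-1}[v]}\deg(u)$. This can be as large as $\Theta(m)$ per vertex, i.e. $\Theta(mn)$ total, which is too slow. So the key idea must be to \emph{not} expand from scratch each time, but exploit that consecutive neighbourhoods in the order differ little.

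Concretely, I would move along the order and only pay for the \emph{difference} $N^{r-1}[v^r_i] \sdiff N^{r-1}[v^r_{i+1}]$. The catch: $\mathcal{I}^{r-1}_v$ was built with respect to the \emph{old} order $v^{r-1}_1,\ldots,v^{r-1}_n$, not the new one, so consecutive balls in the new order need not have small symmetric difference under the old order's interval encoding. The fix is a two-phase approach. Phase~1: re-express all $(r-1)$-neighbourhoods in a common format --- e.g. produce, for every vertex $u \in V$, the sorted list $L_u$ of indices $i$ such that $u \in N^{r-1}[v^r_i]$; equivalently, the "transpose'' of the incidence structure. Using the old interval representations $\mathcal{I}^{r-1}_v$ this costs $\Ot{\sum_v |\mathcal{I}^{r-1}_v| + \sum_v |N^{r-1}[v]|}$, but the latter sum can be $\Theta(n^2)$, so instead I only materialise it implicitly and use the bound $\sum_v \deg(v)\,|I(N^{r}[v])| = \Ot{mn^{1-1/d}}$ from Corollary~\ref{corollary:weighted-average-spanning-path} applied at radius $r$ to control the \emph{output} size. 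Phase~2: sweep $i = 1,\ldots,n$; to get $N^r[v^r_i] = \bigcup_{u \in N^{r-1}[v^r_i]} N[u]$ incrementally, maintain a counter $c(w)$ for each $w \in V$ equal to $|N[w] \cap N^{r-1}[v^r_i]|$; then $w \in N^r[v^r_i] \iff c(w) > 0$. Going from $i$ to $i+1$, we add the vertices in $N^{r-1}[v^r_{i+1}] \setminus N^{r-1}[v^r_i]$ and remove those in $N^{r-1}[v^r_i]\setminus N^{r-1}[v^r_{i+1}]$, and for each such added/removed $u$ we touch $\deg(u)$ counters.

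The total update cost is therefore $\Ot{\sum_{i}\ \sum_{u \in N^{r-1}[v^r_i] \sdiff N^{r-1}[v^r_{i+1}]}\deg(u)}$. To bound this I would apply Corollary~\ref{corollary:weighted-average-spanning-path} at radius $r-1$ but with a vertex order chosen adaptively: note that $\sum_i \sum_{u \in N^{r-1}[v^r_i]\sdiff N^{r-1}[v^r_{i+1}]}\deg(u) = \sum_{u} \deg(u)\cdot|\{i : u \text{ is an endpoint of an interval of } N^{r-1}[v^r_i] \text{ in the new order}\}| \le 2\sum_u \deg(u)\,|I_\sigma(\{i : u \in N^{r-1}[v^r_i]\})|$, which is exactly a weighted-interval sum for the \emph{dual} hypergraph, controlled by Corollary~\ref{corollary:weighted-average-spanning-path} (this is precisely the duality remark after Lemma~\ref{lemma:hypergraph-order}). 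Hence it is $\Ot{mn^{1-1/d}}$, and this dominates. Finally, to output $\mathcal{I}^r_v$ we must re-sort each $N^r[v]$ according to the order $v^r_1,\ldots,v^r_n$ and collapse into maximal intervals; since $\sum_v |N^r[v]| \le \sum_v \deg(v)\,|I(N^r[v])| = \Ot{mn^{1-1/d}}$ is not quite implied, I would instead carry the lists already in the new order during the sweep (the counters are indexed by vertex, and we emit each $N^r[v^r_i]$ in $v^r$-order directly), so no extra sorting is needed and the bound $\sum_i |\mathcal{I}^r_{v^r_i}| = \Ot{mn^{1-1/d}}$ follows from Corollary~\ref{corollary:weighted-average-spanning-path} at radius $r$. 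I expect the main obstacle to be the bookkeeping that reconciles the three different vertex orders in play (the order for $r-1$, the order for $r$, and the dual order used to bound the difference sum) --- in particular, arguing that a single application of Corollary~\ref{corollary:weighted-average-spanning-path} with weights $\alpha=\deg$ simultaneously bounds both the incremental update cost and the output encoding size, so that no step exceeds $\Ot{mn^{1-1/d}}$.
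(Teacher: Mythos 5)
Your decomposition is the dual of the paper's: you use $N^r[v] = \bigcup_{u \in N^{r-1}[v]} N[u]$ and incrementally sweep over the \emph{new} order $v^r_1,\ldots,v^r_n$ with a counter array, while the paper uses $N^r[v] = \bigcup_{x \in N[v]} N^{r-1}[x]$ and does the expensive union entirely in the \emph{old} order $v^{r-1}_1,\ldots,v^{r-1}_n$, then performs a change-of-basis to the new order as a separate encoding transformation. That difference is not cosmetic; it is exactly where your argument has a gap.

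Your total update cost is, as you write, $\sum_i \sum_{u \in N^{r-1}[v^r_i] \sdiff N^{r-1}[v^r_{i+1}]} \deg(u)$. By the symmetry of balls this equals (up to a factor $2$) $\sum_u \deg(u)\,|I_\tau(N^{r-1}[u])|$ where $\tau$ is the \emph{new} order $(v^r_i)$. But Corollary~\ref{corollary:weighted-average-spanning-path} invoked at radius $r$ only guarantees that $\tau$ makes $\sum_u \deg(u)\,|I_\tau(N^{r}[u])|$ small; it says nothing about the $(r-1)$-balls under $\tau$. The order that controls $(r-1)$-balls is the \emph{old} order $v^{r-1}_1,\ldots,v^{r-1}_n$. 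You acknowledge this tension in your last sentence as ``the main obstacle,'' but a single application of the corollary does not close it: there is no reason a single permutation is simultaneously good for both radii, and the corollary gives no such order. The paper sidesteps this entirely by bounding the cost of step 1 as $\Ot{\sum_x \deg(x)\,|\mathcal{I}^{r-1}_x|}$ --- an expression tied to the \emph{old} order and \emph{old} radius, which is precisely what the previous round's application of the corollary promised --- and only then reconciling with the new order via the sets $A_i = N^r[v^r_i] \setminus N^r[v^r_{i-1}]$ and $B_i$, whose total size is $\sum_x |I_\tau(N^r[x])|$, controlled by the corollary at radius $r$. A secondary issue: your plan to ``emit each $N^r[v^r_i]$ in $v^r$-order directly'' from the counter array still needs a mechanism to produce the interval representation $\mathcal{I}^r_{v^r_i}$ without scanning the full length-$n$ array at each $i$; that again requires an incremental/differential bound, which loops back to the same unproved claim. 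To repair the proof, you would either need to prove a two-radius order lemma (which the paper does not have), or switch to the paper's route of expanding in the old order and then changing basis.
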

\begin{proof}
	The algorithm proceeds as follows.
	\begin{enumerate}
		\item
			For each vertex $v \in V$, compute the interval representation $\mathcal{I}'_v$ of $N^r[v]$ with respect to the old vertex order $v^{r-1}_1, \ldots, v^{r-1}_n$.
			Note that $N^r[v] = \bigcup_{x \in N[v]} N^{r-1}[x]$.
			This means that we can compute $\mathcal{I}'_v$ by summing the representations $\mathcal{I}^{r-1}_x$ over neighbours $x \in N[v]$.
			This can be done using a standard line sweep procedure in $\Ot{\sum_{x \in N[v]}{|\mathcal{I}^{r-1}_x|}}$.
			Overall this step takes time
			$\Ot{\sum_{v \in V} \sum_{x \in N[v]}{|\mathcal{I}^{r-1}_x|}} =
				\Ot{\sum_{x \in V} \deg(x) \cdot |\mathcal{I}^{r-1}_x|} = \Ot{mn^{1 - 1/d}}$.
			The total size of all representations $\mathcal{I}'_v$ is $\Ot{mn^{1 - 1/d}}$ as well.
		\item
			Compute the new vertex order $v^r_1, \ldots, v^r_n$ via Corollary \ref{corollary:weighted-average-spanning-path}.
			To achieve this, we only need to provide an algorithm that lists vertices of $N^r[v]$ efficiently for a given vertex $v \in V$.
			This can be implemented easily in $\Oh{m}$ time using breadth-first search.
			It follows that the vertex order can be computed in $\Ot{mn^{1/d}}$ time.
		\item
			Compute the canonical interval representations $\mathcal{I}^r_v$ with respect to the new vertex order $v^r_1, \ldots, v^r_n$.
			We do this by transforming the representations $\mathcal{I}'_v$ as follows.
			\begin{enumerate}
				\item
					Let $A_i$ be the set of vertices $x \in V$ such that $v^r_i$ is the left endpoint of an interval in $\mathcal{I}^r_x$.
					Consider a vertex $x \in A_i$ for $i \geq 2$.
					We have that $v^r_i \in N^r[x]$ and $v^r_{i-1} \notin N^r[x]$.
					This is equivalent to $x \in N^r[v^r_i]$ and $x \notin N^r[v^r_{i-1}]$.
					It follows that $A_i = N^r[v^r_i] \setminus N^r[v^r_{i-1}]$ for $i \geq 2$.
					We can thus compute $A_i$ from interval representations $\mathcal{I}'_{v^r_{i-1}}$ and $\mathcal{I}'_{v^r_i}$ in $\Ot{|\mathcal{I}'_{v^r_{i-1}}| + |\mathcal{I}'_{v^r_i}| + |A_i|}$ time using line sweep procedure.
					It remains to handle the case when $i = 1$.
					By similar argument, we obtain that $A_1 = N^r[v^r_1]$, so it is enough to list vertices in $\mathcal{I}'_{v^r_1}$.
					Overall, this step takes time $\Ot{\sum_{v \in V} |\mathcal{I}'_v| + \sum_{i=1}^n |A_i|} = \Ot{mn^{1-1/d}}$.
				\item
					Let $B_i$ be the set of vertices $x \in V$ such that $v^r_i$ is the right endpoint of an interval in $\mathcal{I}^r_x$.
					We can compute all these sets in time $\Ot{mn^{1-1/d}}$, similarly as above.
				\item
					Recover the interval representations $\mathcal{I}^r_v$ for all $v \in V$ from the sets $A_1, \ldots A_n$ and $B_1, \ldots, B_n$.
					This step takes time $\sum_{i=1}^n |A_i|+|B_i| = \Ot{mn^{1-1/d}}$.
			\end{enumerate}
	\end{enumerate}
	The total running time is $\Ot{mn^{1/d} + mn^{1-1/d}}$, which becomes $\Ot{mn^{1-1/d}}$ for $d \geq 2$.
\end{proof}

\begin{proof}[Proof of Theorem \ref{thm:explicit-diameter-algorithm}]
	We start with an arbitrary permutation of vertices $v^0_1, \ldots, v^0_n$,
	and trivial interval representation $\mathcal{I}^0_v = I(\{v\})$ for each vertex $v \in V$.
	Then we compute the encodings of all $k$-neighbourhoods inductively using Lemma $\ref{lem:explicit-algorithm-step}$.
	Finally, we check if there a vertex $v \in V$ such that $\mathcal{I}^k_v \neq I(V)$.
	If that is the case, the diameter is larger than $k$.
	Otherwise it is at most $k$.
\end{proof}

\section{Diameter testing for implicit graphs}
\label{sec:implicit}

In this section, we consider the diameter problem for graphs of bounded distance VC-dimension that admit implicit representations.
We propose a diameter testing algorithm that relies on the existence of a certain data structure, but is independent of the number of edges.
In particular, this framework can be applied for geometric intersection graphs. In Section \ref{sec:squares} we show an implementation for unit squares. Please refer to Appendix \ref{appendix:omitted} for a generalization for arbitrary convex polygons.

We begin by introducing a necessary data structure template. The \emph{Neighbour Set Data Structure} (\emph{NSDS} for short) maintains a family $\mathcal{T}$ of vertex subsets of a graph $G$ under the following operations:
\begin{itemize}
	\item $\widetilde{S'} \gets \AddNeighbours(\widetilde{S}, v)$: Given a vertex subset $\widetilde{S} \in \mathcal{T}$ and a vertex $v \in V(G)$, add a new set $\widetilde{S'} = \widetilde{S} \cup N_G[v]$ to the family $\mathcal{T}$.
	\item $D \gets \ListDifferences(\widetilde{S}_1, \widetilde{S}_2)$: Given vertex subsets $\widetilde{S}_1, \widetilde{S}_2 \in \mathcal{T}$, output their symmetric difference $D = \widetilde{S}_1 \sdiff \widetilde{S}_2$.
\end{itemize}
Initially, the family $\mathcal{T}$ contains only the empty vertex set $\varnothing$.
Throughout the whole section, we will use a tilde to mark vertex sets registered within NSDS (e.g. $\widetilde{S}$). Such vertex sets are represented implicitly by references to the data structure, so the time complexity of some operations on them may be a lot smaller than their size.

We say that a graph class $\mathcal{G}$ admits an \emph{efficient implementation} of Neighbour Set Data Structure if the operations can be implemented in the following time complexities:
\begin{itemize}
	\item initialization in $\Ot{n}$ time (given an implicit $\Oh{n}$-size representation of a graph $G \in \mathcal{G}$);
	\item $\AddNeighbours$ in $\Ot{1}$ time;
	\item $\ListDifferences$ in $\Ot{|D|}$ time.
\end{itemize}

The remainder of this section is devoted to proving the following theorem.

\begin{theorem}
	\label{thm:implicit-diameter-algorithm}
	Let $\mathcal{G}$ be a graph class with distance VC-dimension bounded by $d \geq 2$.
	If $\mathcal{G}$ admits an efficient implementation of Neighbour Set Data Structure,
	then there exists an algorithm that decides if a graph $G \in \mathcal{G}$ has diameter at most $k$ in time $\Ot{kn^{2-1/d}}$ with high probability.
\end{theorem}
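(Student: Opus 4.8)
The plan is to mirror the structure of the proof of Theorem~\ref{thm:explicit-diameter-algorithm}, but replace the interval-representation machinery (which depended on $m$) by the Neighbour Set Data Structure and the ``primal'' order from Corollary~\ref{corollary:distance-hypergraph-xors}. Concretely, we compute the sets $N^r[v]$ for all $v \in V$ and all $r \in \{0, 1, \ldots, k\}$ inductively in $r$, but we never store a single $N^r[v]$ explicitly: instead each $N^r[v]$ lives inside $\mathcal{T}$ as some $\widetilde{S}$, obtained by a chain of $\AddNeighbours$ calls, and between consecutive rounds we only ever materialize \emph{symmetric differences} of these sets via $\ListDifferences$. The key quantitative fact driving the whole argument is Corollary~\ref{corollary:distance-hypergraph-xors}: for each fixed $r$ there is a vertex order $v^r_1, \ldots, v^r_n$ with $\sum_{i=1}^{n-1} |N^r[v^r_i] \sdiff N^r[v^r_{i+1}]| = \Oh{n^{2-1/d}}$, computable in $\Ot{n^{1/d}\cdot T(G)}$ time; since $T(G) = \Ot{n}$ whenever we can run BFS from a vertex in $\Ot{n}$ time using the implicit representation (which is implied by the efficient data structure, or can be assumed), this order costs $\Ot{n^{1+1/d}}$, dominated by $\Ot{n^{2-1/d}}$ for $d \ge 2$.

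The inductive step is the heart of the proof and is the analogue of Lemma~\ref{lem:explicit-algorithm-step}. Suppose at the start of round $r$ we hold, for every $v$, a reference $\widetilde{N}^{r-1}_v \in \mathcal{T}$ to $N^{r-1}[v]$, together with the order $v^{r-1}_1, \ldots, v^{r-1}_n$. We want references to $N^r[v]$ for all $v$. Since $N^r[v] = N[N^{r-1}[v]]$... but the data structure only lets us add the neighbourhood of a single vertex, so instead we use $N^r[v] = \bigcup_{x \in N[v]} N^{r-1}[x]$, which is again awkward because $\deg(v)$ can be huge. The resolution — and here is where the ``low-difference order'' is essential — is to build $N^r[v]$ not from scratch but incrementally along the order: we maintain a ``running'' set, and when moving from index $i-1$ to $i$ in the order $v^{r-1}_1,\ldots$, we obtain $N^r[v^{r-1}_i]$ from $N^r[v^{r-1}_{i-1}]$ by applying $\AddNeighbours$ only for the vertices in the symmetric-difference region. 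More precisely: we first compute a reference to $N^r[v^{r-1}_1]$ directly (this one vertex we are allowed to pay $\Ot{n}$ for); then, to go from $N^r[v^{r-1}_{i-1}]$ to $N^r[v^{r-1}_{i}]$, we note $N^r[v^{r-1}_i] = \bigcup_{x\in N[v^{r-1}_i]} N^{r-1}[x]$ and $N^{r-1}[v^{r-1}_{i-1}] \sdiff N^{r-1}[v^{r-1}_i]$ is small, so the neighbourhoods $N[v^{r-1}_{i-1}]$ and $N[v^{r-1}_i]$ differ only on the vertices of that symmetric difference; hence we can reuse the union computed for $i-1$ and only $\AddNeighbours$ over the newly-entering vertices, charging $\Ot{1}$ per entering vertex. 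Summing over $i$, the number of $\AddNeighbours$ calls is $\Oh{\sum_i |N^{r-1}[v^{r-1}_{i-1}]\sdiff N^{r-1}[v^{r-1}_i]|} = \Oh{n^{2-1/d}}$, plus the $\Ot{n}$ for the first vertex, giving $\Ot{n^{2-1/d}}$ per round.

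Once all references $\widetilde{N}^r_v$ are available, we recompute the order for the next level via Corollary~\ref{corollary:distance-hypergraph-xors} (feeding it a BFS-based listing routine), in $\Ot{n^{1+1/d}}$ time. Over $k$ rounds, the total is $\Ot{k\, n^{2-1/d}}$. Finally, to read off the answer, we need to decide whether $N^k[v] = V$ for every $v$; equivalently $|N^k[v]| = n$. This can be tracked by maintaining the cardinality of each set alongside its reference: $\AddNeighbours(\widetilde{S}, v)$ produces a set whose size is $|\widetilde{S}|$ plus the number of newly-added vertices, and the latter can be extracted as $|\ListDifferences(\widetilde{S}', \widetilde{S})|$ right after the add, at cost $\Ot{1 + (\text{increment})}$, which is absorbed into the same $\Oh{n^{2-1/d}}$ budget since the increments telescope along the order; alternatively one $\ListDifferences$ against a reference to the full set $V$ suffices. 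The main obstacle — and the point that needs the most care — is exactly the incremental maintenance in the inductive step: one has to argue that $N[v^{r-1}_{i-1}]$ and $N[v^{r-1}_i]$ really differ only within $N^{r-1}[v^{r-1}_{i-1}] \sdiff N^{r-1}[v^{r-1}_i]$ (they do: $x$ is a neighbour of $v^{r-1}_i$ iff $v^{r-1}_i \in N[x]$, and the relevant sets are $1$-neighbourhoods, so the difference in neighbour-lists is governed by the difference in the $1$-neighbourhoods evaluated at these two vertices — one must set this up with $r-1 \ge 0$ and handle the base case $r=1$ separately, where $N^0[x] = \{x\}$), and that the chains of $\AddNeighbours$ calls stay within the data structure's accounting so that each set is genuinely built in amortized $\Ot{1}$ per new element. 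Everything else is bookkeeping of the same flavour as in the explicit-graph proof.
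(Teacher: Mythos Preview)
Your high-level plan (primal order from Corollary~\ref{corollary:distance-hypergraph-xors}, store each $N^r[v]$ only as a handle in $\mathcal{T}$, materialize only symmetric differences) matches the paper exactly, and the cost accounting for the vertex-order computation and for the final diameter check is fine. The gap is in the inductive step, precisely at the point you yourself flag as ``the main obstacle''.

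You propose to build $\widetilde N^r_{v^{r-1}_i}$ by a linear sweep: start from $\widetilde N^r_{v^{r-1}_{i-1}}$ and ``only $\AddNeighbours$ over the newly-entering vertices''. This cannot work, because $\AddNeighbours$ is monotone: from a handle to $S$ you can only obtain handles to supersets of $S$. But $N^r[v^{r-1}_i]$ is in general \emph{not} a superset of $N^r[v^{r-1}_{i-1}]$. Concretely, if $N^{r-1}[v^{r-1}_{i-1}]=\{a,b,c\}$ and $N^{r-1}[v^{r-1}_i]=\{a,b,d\}$, then $N^r[v^{r-1}_i]=N[a]\cup N[b]\cup N[d]$ need not contain $N[c]$; your sweep would produce $N[a]\cup N[b]\cup N[c]\cup N[d]$, which is wrong. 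There is no removal primitive to undo the contribution of $c$. Your auxiliary claim that ``$N[v^{r-1}_{i-1}]$ and $N[v^{r-1}_i]$ differ only within $N^{r-1}[v^{r-1}_{i-1}]\sdiff N^{r-1}[v^{r-1}_i]$'' is also false in general (consecutive vertices in the order may be far apart in $G$, so their $1$-neighbourhoods can be disjoint even when their $(r{-}1)$-balls nearly coincide), and in any case it would not rescue the monotonicity problem.

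The paper's fix is a genuinely new ingredient you are missing: a divide-and-conquer procedure $\ExpandBalls$ that, given the delta-encoding $D^{r-1}_1,\ldots,D^{r-1}_n$ of the $(r{-}1)$-balls, builds all handles $\widetilde B^r_i=N[D^{r-1}_1\sdiff\cdots\sdiff D^{r-1}_i]$ simultaneously. On a range $D_1,\ldots,D_t$ it first strips from $D_1$ the vertices not occurring in $D_2\cup\cdots\cup D_t$ (these belong to \emph{every} prefix $D_1\sdiff\cdots\sdiff D_i$, so their $\AddNeighbours$ can be done once into a shared base $\widetilde S$), then recurses on the two halves, passing the accumulated prefix into the first set of the right half. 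Because the base $\widetilde S$ is only ever grown, monotonicity is respected; the sharing is what keeps the total number of $\AddNeighbours$ calls to $\Ot{t+\sum_i|D_i|}=\Ot{n^{2-1/d}}$. This recursion is the key lemma of Section~\ref{sec:balls-expansion}, and without it (or an equivalent idea that avoids deletions) the inductive step does not go through.
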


\subsection{Algorithm outline}
\label{sec:implicit-diameter-algorithm-outline}

In this section we describe the high-level idea of the algorithm from Theorem \ref{thm:implicit-diameter-algorithm}, leaving some subprocedures and other technical details to following subsections.
The algorithm iteratively computes $r$-neighbourhoods $N^r[v]$ for all vertices $v \in V$, but in a different way than in Section $\ref{sec:algorithm}$.
In particular, a different encoding of neighbourhoods is used.

\subparagraph*{Balls encoding.} 

Let $v^r_1, \ldots, v^r_n$ be the vertex order for the $r$-neighbourhoods produced by Corollary \ref{corollary:distance-hypergraph-xors}. Observe that to apply this corollary, we only need a single-source shortest path finding algorithm, and we show that the classical BFS algorithm can be simulated using NSDS in $\Ot{n}$ time (see Appendix \ref{appendix:omitted} for details). Using this vertex order, the $r$-balls are now delta-encoded using vertex sets $D^r_1, \ldots, D^r_n$ such that:
\begin{align*}
	N^r[v_i^r] = D^r_1 \sdiff \ldots \sdiff D^r_i &&
	D^r_i =
		\begin{cases}
			N^r[v_1] & \text{for $i = 1$} \\
			N^r[v_{i-1}] \sdiff N^r[v_i] & \text{for $i \in \{2, \ldots, n\}$}
		\end{cases}
\end{align*}
It immediately follows from Corollary \ref{corollary:distance-hypergraph-xors} that total size of all sets $D^r_1, \ldots, D^r_n$ for a fixed $r$ is bounded by $\Ot{n^{2-1/d}}$.
Note that this is essentially a transposition of the encoding used in Section $\ref{sec:algorithm}$, where each ball was represented by a set of intervals. Here, each set $D_i$ is in fact the set of these balls which have one of their intervals ending between $i-1$ and $i$.

\subparagraph*{Algorithm step.}
Suppose we have already computed the encoding for $(r-1)$-balls, i.e. the vertex order $v^{r-1}_1, \ldots, v^{r-1}_n$ and the sets $D^{r-1}_1, \ldots, D^{r-1}_n$.
To compute the encoding for $r$-balls, we proceed as follows.
\begin{enumerate}
	\item
		Build representations $\widetilde{B}^r_1, \ldots, \widetilde{B}^r_n$ of $r$-balls in the NSDS.
		Specifically, we want $\widetilde{B}^r_i = N^r[v^{r-1}_i]$, i.e. we still use the vertex order for $(r-1)$-balls.
		Observe that $N^r[v] = N[N^{r-1}[v]]$.
		To naively compute the set $\widetilde{B}^r_i$, one could invoke $\AddNeighbours$ for each vertex $v \in N^{r-1}[v^{r-1}_i]$.
		Clearly, such approach would be too slow.
		In Subsection \ref{sec:balls-expansion}, we provide a divide-and-conquer algorithm that builds all the representations efficiently in $\Ot{n + \sum_i{|D^{r-1}_i|}} = \Ot{n^{2-1/d}}$ time.
	\item
		Compute new vertex order $v^r_1, \ldots, v^r_n$ using Corollary \ref{corollary:distance-hypergraph-xors} in $\Ot{n^{1+1/d}}$ time.
	\item
		Let $\pi_r$ be a permutation mapping vertex indices for $r$-balls into vertex indices for $(r-1)$-balls, i.e. $\pi_r(i) = j$ if and only if $v_i^r = v_j^{r-1}$.
	\item
		Compute new delta-encoding $D^r_1, \ldots, D^r_n$.
		By definition, $D_r^i = N^r[v^r_{i-1}] \sdiff N^r[v^r_i] = \widetilde{B}^r_{\pi_r(i-1)} \sdiff \widetilde{B}^r_{\pi_r(i)}$ for $i \geq 2$.
		This means that we can compute each set $D_r^i$ by invoking \linebreak $\ListDifferences(\widetilde{B}^r_{\pi_r(i-1)}, \widetilde{B}^r_{\pi_r(i)})$ on computed representations.
		To compute $D^r_1$ we can use $\ListDifferences(\varnothing, \widetilde{B}^r_{\pi_r(1)})$.
		Since $\ListDifferences$ operation is output-sensitive, the overall complexity of this step is $\Ot{n + \sum_i{|D^r_i|}} = \Ot{n^{2-1/d}}$.
\end{enumerate}
The total runtime of a single step is $\Ot{n^{1+1/d} + n^{2-1/d}}$, which for $d \geq 2$ becomes $\Ot{n^{2-1/d}}$.

\subparagraph*{Full algorithm.}
We start with an arbitrary permutation of vertices $v^0_1, \ldots, v^0_n$.
Moreover, we have $D^0_1 = N^0[v^0_1] = \{v^0_1\}$ and $D^0_i = N^0[v^0_{i-1}] \sdiff N^0[v^0_i] = \{v^0_{i-1}, v^0_i\}$ for $i \geq 2$.
Then we compute the encodings of all $k$-neighbourhoods by repeatedly applying the algorithm step.
Finally, we check if $D^k_1 = V$ and $D^k_i = \varnothing$ for all $i \geq 2$.
If that is the case, the diameter is at most $k$.
Otherwise, it is larger than $k$.
We obtain the final time complexity $\Ot{kn^{2-1/d}}$.


\subsection{Ball expansion}
\label{sec:balls-expansion}

We now describe an algorithm that builds representations of $r$-balls in NSDS given a delta-encoding of $(r-1)$-balls.
Specifically, we are given vertex sets $D^{r-1}_1, \ldots, D^{r-1}_n \subseteq V$ and we need to compute representations $\widetilde{B}^r_1, \ldots, \widetilde{B}^r_n$ such that $\widetilde{B}^r_i = N[D^{r-1}_1 \sdiff \ldots \sdiff D^{r-1}_i]$.
A naive approach would be to build all representations separately.
Instead we use a divide-and-conquer scheme that enables us to share common parts between the computed representations.

The recursive procedure takes as input vertex sets $D_1, \ldots, D_t \subseteq V$ and a data structure representation $\widetilde{S}$.
The output of the procedure are representations $\widetilde{B}_1, \ldots, \widetilde{B}_t$ such that $\widetilde{B}_i = \widetilde{S} \cup N[D_1 \sdiff \ldots \sdiff D_i]$.
We set $\widetilde{S} = \varnothing$ for the initial call; it is used later for recursion.

We begin with reduction of common vertices.
Let $C = \bigcup_{i=2}^t D_i$ and consider a vertex $v \in D_1 \setminus C$.
The vertex $v$ appears in all vertex sets of form $D_1 \sdiff \ldots \sdiff D_i$.
This means that $N[v] \subseteq \widetilde{B}_i$ for all $i \in \{1, \ldots, n\}$.
We can thus update $\widetilde{S}$ with $N[v]$ by invoking $\AddNeighbours(\widetilde{S}, v)$ and remove $v$ from $D_1$ without changing the output.
We do this for all vertices $v \in D_1 \setminus C$.
Let $\widetilde{S}'$ be the updated representation $\widetilde{S}$ and $D_1' = D_1 \cap C$ be the reduced set $D_1$.

If $t = 1$ then we are done: we just return the updated $\widetilde{S}' = \widetilde{B}_1$.
Otherwise, we use recursion.
Let $m = \floor{t/2} + 1$.
We split the sequence into halves $D_1, \ldots, D_{m-1}$ and $D_m, \ldots, D_t$.
Computing the representations $\widetilde{B}_1, \ldots, \widetilde{B}_{m-1}$ is straightforward: we recurse with $\widetilde{S}'$ and $D'_1, D_2, \ldots, D_{m-1}$.
Then $\widetilde{B}_i = \widetilde{S}' \cup N[D_1' \sdiff D_2 \sdiff \ldots \sdiff D_i] = \widetilde{S} \cup N[D_1 \sdiff \ldots \sdiff D_i]$.

To compute $\widetilde{B}_m, \ldots, \widetilde{B}_t$, we need to take into account the sets $D_1, \ldots, D_{m-1}$.
We do this by replacing $D_m$ with $D'_m = D_1' \sdiff D_2 \sdiff \ldots \sdiff D_m$.
We recurse with $\widetilde{S}'$ and $D'_m, D_{m+1}, \ldots, D_t$.
Then $\widetilde{B}_i = \widetilde{S}' \cup N[D_m' \sdiff D_{m+1} \sdiff \ldots \sdiff D_i] = \widetilde{S}' \cup N[D_1' \sdiff D_2 \sdiff \ldots \sdiff D_i] = \widetilde{S} \cup N[D_1 \sdiff \ldots \sdiff D_i]$. This completes the description of the procedure.

We provide the pseudocode as Algorithm \ref{alg:balls-expansion}.

\begin{algorithm}[H]
	\caption{Balls expansion procedure.}
	\label{alg:balls-expansion}
	\begin{algorithmic}[1]
		\Function{ExpandBalls}{$D_1, \ldots, D_t$}
			\State \Return $\ExpandBallsRecursively(\varnothing, D_1, \ldots, D_t)$
		\EndFunction
		\State
		\Function{ExpandBallsRecursively}{$\widetilde{S}, D_1, \ldots, D_t$}
			\State $\widetilde{S}' \gets \widetilde{S}$, $C \gets \bigcup_{i=2}^t D_i$
			\ForAll{$v \in D_1 \setminus C$}
				\State $\widetilde{S}' \gets \AddNeighbours(\widetilde{S}', v)$
			\EndFor
			\If{t = 1}
				\State \Return $\widetilde{S}'$
			\EndIf
			\State $m \gets \floor{t/2} + 1$
			\State $D_1' \gets D_1 \cap C$
			\State $D_m' \gets D_1' \sdiff D_2 \sdiff \ldots \sdiff D_m$
			\State $\widetilde{B}_1, \ldots, \widetilde{B}_{m-1} \gets \ExpandBallsRecursively(\widetilde{S}', D_1', D_2, \ldots, D_{m-1})$
			\State $\widetilde{B}_m, \ldots, \widetilde{B}_t \gets \ExpandBallsRecursively(\widetilde{S}', D_m', D_{m+1}, \ldots, D_t)$
			\State \Return $\widetilde{B}_1, \ldots, \widetilde{B}_t$
		\EndFunction
	\end{algorithmic}
\end{algorithm}

\begin{restatable}{lemma}{lemmaExpandBalls}
	The $\ExpandBalls$ procedure works in time $\Ot{t + \sum_{i=1}^{t} |D_i|}$.
\end{restatable}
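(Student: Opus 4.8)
The plan is to analyze the recursion tree of $\ExpandBallsRecursively$ and bound the total work at each level. First I would establish the key structural invariant: in a call with sets $D_1, \ldots, D_t$, after the common-vertex reduction step we have $D_1' \subseteq C = \bigcup_{i=2}^t D_i$, and the vertices removed from $D_1$ (namely $D_1 \setminus C$) are each charged $\Ot{1}$ work via a single $\AddNeighbours$ call; crucially, each such vertex disappears from the subproblem and is never touched again, so the total cost of all $\AddNeighbours$ calls across the whole recursion is $\Ot{\text{(number of vertex occurrences ever removed)}} = \Ot{\sum_i |D_i|}$ plus a geometric overhead I'll address below. The cost of computing $C$, $D_1' = D_1 \cap C$, and $D_m' = D_1' \sdiff D_2 \sdiff \cdots \sdiff D_m$ in a given node is $\Ot{\sum_{i=1}^{t} |D_i|}$ (with the tilde absorbing the logarithmic factors from using balanced BSTs or sorting to compute the symmetric differences and union).

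Next I would bound the total size of the sets passed down the recursion. The subtlety is that $D_m$ gets replaced by $D_m' = D_1' \sdiff D_2 \sdiff \cdots \sdiff D_m$, which could a priori be as large as $\sum_{i=1}^m |D_i|$, so naively the sizes could blow up multiplicatively per level. The saving grace is the reduction step: since $D_1'$ has been intersected with $C$, we have $|D_1'| \le \sum_{i=2}^{t} |D_i|$, so $|D_m'| \le |D_1'| + \sum_{i=2}^m |D_i| \le \sum_{i=2}^t |D_i| + \sum_{i=2}^m |D_i| \le 2 \sum_{i=2}^{t} |D_i|$. Therefore, writing $W(D_1, \ldots, D_t)$ for the total size of sets in a node, the two children have total size at most $W$ (for the left half, whose first set is $D_1'$ with $|D_1'| \le |D_1|$ after... actually $\le \sum_{i \ge 2} |D_i|$) plus at most $2\sum_{i \ge 2}|D_i|$ extra from the $D_m'$ term on the right. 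So the size roughly at most doubles per level, but there are only $\Oh{\log t}$ levels, giving total size $\Ot{\sum_i |D_i|}$ across the whole tree — the $\widetilde{\mathcal O}$ notation swallows both the $\log t$ depth and the per-level logarithmic factors.

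I would then account for the additive $t$ term: the recursion tree has $\Oh{t}$ leaves (each call with $t=1$ does $\Ot{1}$ base work beyond its $D_i$'s) and $\Oh{t}$ internal nodes, and each node does at least constant work even when its sets are empty, contributing $\Ot{t}$ in total; this is why the bound is $\Ot{t + \sum_i |D_i|}$ rather than just $\Ot{\sum_i |D_i|}$. Combining: summing the per-node cost $\Ot{\sum_{i} |D_i^{(\text{node})}|}$ over all nodes, using the level-by-level size bound, and adding the $\Ot{t}$ node-count overhead yields total time $\Ot{t + \sum_{i=1}^t |D_i|}$ as claimed.

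The main obstacle, and the step I would be most careful about, is proving the size of the sets does not blow up through the repeated prefix-symmetric-difference $D_m' = D_1' \sdiff D_2 \sdiff \cdots \sdiff D_m$ — one must really use that the reduction $D_1' = D_1 \cap C$ caps $|D_1'|$ by the combined size of the other sets in the \emph{current} node, and argue inductively down the tree that this keeps the per-level total within a constant factor of the parent's total, so that over $\Oh{\log t}$ levels only a polylogarithmic factor (absorbed by $\Ot{\cdot}$) accumulates. A secondary point to verify is that the correctness argument already sketched in the text (the invariant $\widetilde B_i = \widetilde S \cup N[D_1 \sdiff \cdots \sdiff D_i]$) survives the replacement of $D_m$ by $D_m'$ and the removal of $D_1 \setminus C$, which follows because those removed vertices lie in every prefix symmetric difference and $D_1' \sdiff D_2 \sdiff \cdots \sdiff D_i = D_1 \sdiff \cdots \sdiff D_i$ modulo the vertices already absorbed into $\widetilde S'$.
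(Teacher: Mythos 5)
Your high-level plan (analyze the recursion tree level by level, use the reduction $D_1' = D_1 \cap C$ to control the size of the inherited sets) matches the paper's approach, and the correctness side-remark at the end is fine. But there is a genuine gap in the complexity argument, and it is exactly at the step you flagged as the "main obstacle."

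You argue that the children's total size is at most $W + 2\sum_{i\geq 2}|D_i|$, conclude that the per-level total "roughly at most doubles per level" (and later: "keeps the per-level total within a constant factor of the parent's total"), and then invoke the $\Oh{\log t}$ depth to absorb this into $\Ot{\cdot}$. This reasoning does not go through: if the per-level total really grows by a constant multiplicative factor $c > 1$ from one level to the next, then over $\Oh{\log t}$ levels it accumulates to $c^{\Oh{\log t}} = t^{\Oh{1}}$, a polynomial in $t$, not a polylogarithm. Your bound $|D_m'| \leq 2\sum_{i\geq 2}|D_i|$ is also looser than necessary (both $D_1'$ and $D_m'$ are subsets of $C$, so each has size at most $b := \sum_{i\geq 2}|D_i|$), though this alone would not save the argument.

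What actually makes the bound work is a partitioning observation that your write-up never makes explicit: the "tail" quantity $b$ is split additively between the two children (the left child's tail is $x = \sum_{i=2}^{m-1}|D_i|$, the right child's tail is $b - x$), so the sum of tail sizes across any given level of the recursion tree stays at most $b$ — the original value, not the parent's. Since every node's first set has size bounded by its parent's tail, the total size across level $\ell$ is $\Oh{b}$ uniformly in $\ell$, not $\Oh{c^\ell b}$. Only then does summing over $\Oh{\log t}$ levels yield $\Oh{b\log t}$. The paper makes this precise by defining $f(a,b,t)$ as the worst-case cost with $|D_1| = a$ and $\sum_{i\geq 2}|D_i| = b$, deriving the recurrence $f(a,b,t) \leq \max_x f(b,x,\lfloor t/2\rfloor) + f(b,b-x,\lceil t/2\rceil) + a + b + t$, and proving by induction that $f(a,b,t) \leq a + 3b(\lceil\log_2 t\rceil + 1) + 2t$; the coefficient of $b$ grows linearly in $\log t$ precisely because the $b$-budget is split, while $a$ contributes only once. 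To repair your proof you need to replace "within a constant factor of the parent's total" with "bounded by $\Oh{b}$ where $b$ is the root's tail size," and justify it via this additive partitioning.
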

\begin{proof}
    See Appendix \ref{appendix:omitted}.
\end{proof}

\section{Polygon intersection graphs}
\label{sec:squares}
Here we give a concrete application of our framework to geometric intersection graphs. We start with stating our main theorem:

\begin{theorem}\label{thm:polygons}
There is a Monte Carlo algorithm solving the $k$\textsc{-Diameter} problem for the class of intersection graphs $I(V,\mathcal{F})$, where $\mathcal{F}$ is
    \begin{enumerate}
    \item[a)] a unit square, in $\Ot{k \cdot n^{\frac{7}{4}}}$ time.
    \item[b)] a convex $s$-sided polygon, in $\Ot{k \cdot n^{\frac{7}{4}}}$time, with a constant factor dependent on $s$.
    \end{enumerate}
\end{theorem}

By Lemma \ref{lemma:symmetric} we can assume that polygon $\mathcal{F}$ is centrally symmetric with center of symmetry at $(0,0)$. Recall that all intersection graphs of convex shapes have their distance VC-dimension bounded by 4 (Lemma \ref{lemma:bounded}). Hence, to apply Theorem \ref{thm:implicit-diameter-algorithm} and complete the proof, we just need to supply a proper Neighbour Set Data Structure. This would result in an $\Ot{k\cdot n^{2-1/d}} = \Ot{k \cdot n^{7/4}}$ time algorithm for the $k$\textsc{-Diameter} problem. The rest of this section gives a (very) rough sketch of such a data structure, focusing mainly on the unit-square case. For a more detailed description as well as the generalization to convex polygons, see Appendix \ref{appendix:squares}.

First, observe that the neighbourhood of a vertex (point) $v$ in a unit-square intersection graph $I(V,\square)$ consists simply of all points inside a square of side $2$ centered at $v$. This stays true for any graph $I(V,\mathcal{F})$:

\begin{restatable}{observation}{obsDoubling}
    \label{obs:doubling}
    Vertex $u$ is a neighbour of vertex $v$ in $I(V,\mathcal{F})$ if and only if $u\in v + 2\mathcal{F}$.
\end{restatable}

\begin{proof}
    See Appendix \ref{appendix:squares}.
\end{proof}

Therefore, if we scale all points in $V$ by a factor of 2, any neighbourhood is a simple $\mathcal{F}$-shape centered at a point. For the intersection graphs, we now assume that the desired NSDS stores some family $\mathcal{T}$ of subsets of $V$ (i.e. sets of points). We can now reformulate its operations in the following way:

\begin{itemize}
    \item $\Mark(\widetilde{S}, (x, y))$: Given a set $\widetilde{S} \in \mathcal{T}$ and a point $(x,y) \in \mathbb{R}^2$, add a new set $\widetilde{S'} = \widetilde{S} \cup P$ to the family $\mathcal{T}$, where $P \subseteq V$ contains the points covered by $\mathcal{F}$ centered at the point $(x,y)$.
    \item $\ListDifferences(\widetilde{S}_1, \widetilde{S}_2)$: Given sets $\widetilde{S}_1, \widetilde{S}_2 \in \mathcal{T}$, output their symmetric difference $D = \widetilde{S}_1 \sdiff \widetilde{S}_2$.
\end{itemize}

To fulfill the assumptions of Theorem \ref{thm:implicit-diameter-algorithm}, the $\Mark$ operation should work in $\Ot{1}$ time complexity, and $\ListDifferences$ in $\Ot{|D|}$ time complexity. We also allow initialization in $\Ot{n}$ time, where $n = |V|$ is the number of points.

\begin{restatable}{lemma}{lemmaStructureGeo}
\label{lemma:structure-geo}
The following holds true:
\begin{enumerate}
\item[a)] There exists an efficient implementation of Neighbouring Set Data Structure for the unit-square intersection graphs.
\item[b)] Let $s \in \mathbb{N}_+$ be a constant and $\mathcal{F}$ be a convex $s$-sided polygon with a center of symmetry. There exists an efficient implementation of Neighbouring Set Data Structure for the intersection graphs $I(V,\mathcal{F})$.
\end{enumerate}
\end{restatable}

To complete this section, we provide a high-level overview of the proof of Lemma \ref{lemma:structure-geo}a. For more details and the general version of NSDS, see Appendix \ref{appendix:squares}.

We start by dividing the plane into horizontal strips of height $1$ and focusing only on one such strip -- let $V$ be the set of points in the strip. To store subsets of $V$, we use a data structure called \emph{persistent segment tree}. 

\subparagraph*{Segment trees.} Let $V = \{v_1, v_2, \ldots, v_k\}$, and assume that the points are sorted by their $x$ coordinate (we can make sure of it during initialization). We can also assume that $k$ is a power of $2$, adding dummy points if needed. Let $V[i,j]$ denote the set $\{v_i, v_{i+1}, \ldots, v_{j}\}$ for any $1 \leq i \leq j \leq k.$ A \emph{segment tree} is a complete binary tree in which every node stores information associated with some interval of points $V[i,j]$. The root corresponds to $V[1,k] = V$ and any node associated with interval $V[i,j]$ with $i<j$ has two children corresponding to $V[i,s]$ and $[s+1,j]$, where $s = \lfloor \frac{(i+j)}{2} \rfloor$. The leaves of the tree correspond to single-element intervals. The height of this tree is clearly $\Oh{\log k}$.

A single instance of a tree stores a particular subset $\widetilde{S} \subseteq V$ in the following way: in every node $z$ associated with an interval $[i,j]$ we keep the subset $\widetilde{S} \cap V[i,j]$. We want, however, to minimize the stored information, and instead of the whole subset $\widetilde{S} \cap V[i,j]$ we will only remember one integer -- the \emph{hash} of this subset. Formally, with every element $v \in V$ we associate a random integer (hash) $h(v)$. For every node $z$ the subset $A_z$ stored in this node is replaced by $\bigoplus_{v \in A_x} h(v)$, i.e. the bitwise-XOR of its elements' hashes. 

Now, we must find a way to store multiple distinct sets $\widetilde{S}$, and we will achieve that by employing \emph{persistency}.

\subparagraph*{Persistency and the \ListDifferences{} operation.} Suppose that our tree currently stores a set $\widetilde{S}$, and we want to create and store a new set $\widetilde{S'} = \widetilde{S} \cup \{v_i\}$ by adding a single element. This change requires modifying the subset (i.e. its hash) in the node responsible for $V[i,i]$ and then going up along the path to the root, correcting the subsets in $\log_2 k$ nodes. Instead of modifying these nodes in-place, we employ a standard path copying technique. The nodes are immutable and copied whenever they are updated, with children links adjusted accordingly. In particular, a new copy of the root node will be created, and this copy will correspond to the new set $\widetilde{S'}$. Observe that this allows us, for every set $\widetilde{S}$ ever created, to reconstruct its subset stored in every node. This enables a relatively simple implementation of \ListDifferences($\widetilde{S}$, $\widetilde{S'}$): we start in the root and compare the hashes of $\widetilde{S}$ and $\widetilde{S'}$ in all nodes we visit. For a node $z$ associated with interval $[i,j]$ we compare hashes of $\widetilde{S} \cap V[i,j]$ and $\widetilde{S'} \cap V[i,j]$. If equal, then these subsets are equal with high probability. If not, there is at least one difference between $\widetilde{S}$ and $\widetilde{S'}$ on the interval $[i,j]$, and we recurse on both children of $z$. If $z$ is a leaf, than the difference $\widetilde{S} \sdiff \widetilde{S'}$ is the single element of $z$. An easy analysis shows that the \ListDifferences{} works in $\Oh{|D| \cdot \log k}$ time, where $D$ is the output set -- as required. However, we have only considered simple modifications of subsets (adding one element) and our desired \Mark{} operation needs way more.

\subparagraph*{More node data and the \Mark{} operation.} Recall that we work on a single strip of height $1$. Now we want to be able to modify some subset $\widetilde{S}$ by adding to it a whole unit square centered at some point $[x,y]$ (which does not have to belong to $\widetilde{S}$, and can even lay outside of our strip, having only some part of the square inside). To achieve that, we need more information stored in every node of the tree. Recall that the points are sorted according to their $x$ coordinate, so every node corresponds to some connected part of our strip. 

\begin{figure}
    \begin{center}
        \includegraphics[scale=0.75]{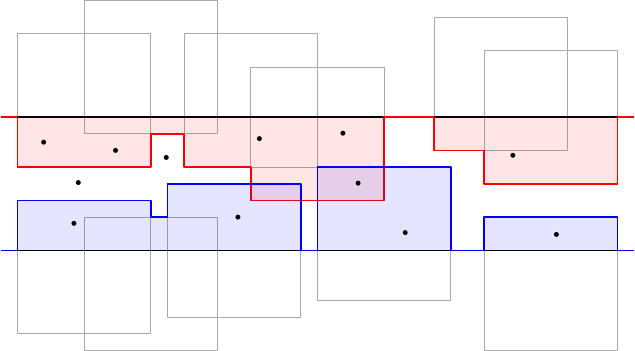}
    \end{center}
    \caption{The top and bottom areas in a node.}
    \label{fig:borders-recap}
\end{figure}

A node starts with an empty set and then more and more points become marked, all points coming from some unit squares. Each square crosses either top or bottom end of the stripe, so let us call the union of the top/bottom squares the \emph{top/bottom area}, respectively (see Figure \ref{fig:borders-recap}, note that the top and bottom areas do not have to be disjoint). Now we keep the hashes of top and bottom areas separately, and the \Mark{} operation hinges on the following observations:
\begin{itemize}
\item If the top and bottom areas are disjoint, then the hash of node's subset can be computed from the hashes of top and bottom areas;
\item If a top square is added and this square covers the node's whole top area, it is easy to update the hash of the top area; the identical fact holds for bottom squares;
\item If the top and bottom area together cover all the node's points, then the hash is trivial;
\item Any situation not falling into above categories happens relatively rarely and adds little to the time complexity of \Mark{}.
\end{itemize}

For the detailed analysis of the \Mark{} operation, see Appendix \ref{appendix:squares}.

\subparagraph*{Joining the stripes.} Finally, we need to gather the information from all the stripes. Observe that any \Mark{} only affects at most 2 stripes, so it will have the same complexity. But \ListDifferences{} is harder -- we are given sets $\widetilde{S}$ and $\widetilde{S'}$ and need to process only the stripes on which marked points are different in these two sets. To do this, we use another persistent segment tree, but with whole stripes as elements of its underlying set. For every stripe we keep the hash of its marked points, which allows us to identify the differing stripes in time complexity proportional to the number of such stripes. Then we invoke \ListDifferences{} on them.

\section{Conclusion and open problems}
\label{sec:conclusion}
\textbf{General Diameter problem.} The algorithms presented above solve $k$-\textsc{Diameter} in subquadratic time, but not general \textsc{Diameter} problem. Therefore, the first important open question is:
\begin{problem}
What is the complexity of \textsc{Diameter} for geometric intersection graphs?
\end{problem}

The paper \cite{DHV22}, on which we based our main algorithm, also provides an algorithm solving \textsc{Diameter} for $K_t$-minor free graphs, in $\Ot{n^{2-\varepsilon_t}}$ time complexity. But unlike $k$-\textsc{Diameter} case, this algorithm does not seem to be easily translated to geometric intersection setting using our technique. We would like to briefly discuss here the obstacles we encountered, as well as some related results.

The algorithm from \cite{DHV22} relies on the notion of \emph{separators}, which it uses to construct \emph{$r$-divisions}. For a graph $G = (V,E)$ with $|V| = n$, a \emph{separator} is a subset of $V$ which has a sublinear size, and removing it would split $G$ into connected components no larger than $\frac{2}{3}n$. An $r$-division is (very roughly speaking) a subset of vertices which also has sublinear size, and splits the graph into clusters of size no larger than $r$.

The core idea of the algorithm is to use a single-source path-finding algorithm (e.g. BFS), on vertices from the $r$-division of $G$, computing all the neighbourhoods of these vertices. This proves sufficient to determine all the other neighbourhoods.

Why does this algorithm not work in our case? It is even more surprising considering the existence of strong results involving separators in geometric intersection graphs (see \cite{BBK20}). However, these separators have an important difference: they are not of sublinear size by themselves, but rather can be expressed as a union of a small number of cliques. Thus we cannot run a BFS from every vertex in such a separator. It is, however, possible to use a multi-source path-finding algorithm, starting from every clique. The computed distances will differ from the exact ones only by an additive constant factor -- this leads to the approximation algorithm described in \cite{CGL24}. To devise an exact algorithm it would be sufficient to solve the following problem: given a geometric intersection graph on the set of points $V$, and given some subset $A \subseteq V$ of points lying very close to each other (e.g. $A$ fitting inside a square of small constant size $\delta$), compute and encode the neighbourhoods of all these vertices, in subquadratic time.

\smallskip 

\noindent \textbf{Unit-disk graphs.} Our data structure works with unit squares and general convex polygons, but leaves open a case of unit-disk intersection graphs:

\begin{problem} 
Is there a subquadratic algorithm for \textsc{Diameter} or $k$-\textsc{Diameter} for unit-disk intersection graphs?
\end{problem}

This time, the main obstacle is the Neighbouring Set Data Structure: our techniques does not seem to generalize to unit disks. We would need a new way of constructing such data structures.

\smallskip 

\textbf{Lower bounds.} Finally, we conjectured in Introduction that $\Ot{n^{1-1/d} \cdot m}$ is a candidate for a tight complexity bound. Let us generalize this question to any lower bounds for \textsc{Diameter}.
\begin{problem}
Are there any (conditional) lower bounds for \textsc{Diameter} and $k$-\textsc{Diameter} for either:
\begin{itemize}
    \item $K_t$-minor-free graphs;
    \item bounded distance VC-dimension graphs;
    \item \ldots or geometric intersection graphs?
\end{itemize}
\end{problem}

\bibliography{main}

\newpage
\appendix

\section{Full proofs and other technical details}
\label{appendix:omitted}
\subsection{Preliminaries}

 Let $(X,\Rr)$, $(X,\Rr')$ be two hypergraphs on the same underlying set $X$. Suppose that both of them have VC-dimension $d$. The following lemma is used in our paper to bound the dimension of the hypergraph $(X, \{R \sdiff R' \mid R \in \Rr, R' \in \Rr'\})$. A bound of $\Oh{d \log d}$ was proven in \cite{EA07}, but for the union operation $\cup$ instead of $\sdiff$ operation; nevertheless, almost the same proof works in the general case and the bound seems to be regarded as folklore. For the sake of completeness, we formulate and show the general result here, for any operator $\circ$ on set such that intersection distributes over $\circ$ (i.e. $A \cap (B \circ B') = (A \cap B) \circ (A \cap B')$ for any sets $A, B, B'$). The union, intersection, set difference and symmetric difference operators all have this property.

\lemmaJoinDim*
\begin{proof}
    Let $S \subset X$ and let $|S| = s$. From Corollary \ref{cor:ssp-subset} we know that $|\{R \cap S | R \in \Rr\}| \leq \beta s^d$ for some constant $\beta$ depending only on $d$. Similarly, $|\{R' \cap S | R' \in \Rr'\}| \leq \beta s^d$. For every $R^* \in \Rr^*$ we have $R^* = R \circ R'$ for some $R \in \Rr$, $R' \in \Rr'$, so by distributivity every subset $S \cap R^* = (S \cap R) \circ (S \cap R')$. Therefore $|\{R^* \cap S | R^* \in \Rr^*\}| \leq  (\beta s^d)^2$. Observe that we can pick a large enough constant $\gamma$ (in fact, $\gamma > 8$ should suffice) such that $\gamma \cdot d \log d > 2d \cdot (\log d + \log \log d + \log \gamma) + 2\beta$, so for $s = \gamma \cdot d \log d$ we have $2^s > \beta^2 \cdot s^{2d}$. Therefore, $\Rr^*$ cannot shatter a set larger than $\gamma \cdot d \log d$.
\end{proof}

\subsection{Geometric intersection graphs}

\lemmaCenterSymmetry*
\begin{proof}
    For any $v_1,v_2\in V$ shapes $v_1 + \mathcal{F}, v_2 + \mathcal{F}$ intersect if and only if
    \[(0,0) \in (\{v_2\} \oplus \mathcal{F}) \oplus -(\{v_1\} \oplus \mathcal{F}) \iff v_2 - v_1 \in \mathcal{F} \oplus (-\mathcal{F}) = 2 \mathcal{H}. \]
    Similarly, shapes $v_1 + \mathcal{H}, v_2 + \mathcal{H}$ intersect if and only if 
    \[v_2 - v_1 \in \mathcal{H} \oplus (-\mathcal{H}) = \mathcal{H} + \mathcal{H} = 2\mathcal{H},\]
    where the first equality follows from the fact that $\mathcal{H}$ is symmetric around $(0,0)$ (from definition) and the second one from the fact that $\mathcal{H}$ is convex. Therefore, both graphs are equivalent and conclusion follows.
\end{proof}

\lemmaBounded*

Before we prove this lemma, we need a preliminary result.  It turns out that $\mathcal{F}$ induces a norm and a metric on $\mathbb{R}^2$ in a natural manner: let $v_\mathcal{F}:\mathbb{R}^2\rightarrow [0,\infty)$ be defined as $v(x) = \min\{r\in \mathbb{R}_{\geq 0} \mid x\in r\cdot \mathcal{F}\}$. Because $\mathcal{F}$ is closed, this function is well defined and $x \in v_\mathcal{F}(x)\cdot \mathcal{F}$ for any point $x$.

\begin{lemma}
    The function $m_\mathcal{F}: \mathbb{R}^2 \to [0, \infty)$ given by $m_\mathcal{F}(x,y) = v_\mathcal{F}(x-y)$ is a metric on $\mathbb{R}^2$. Additionally, for any points $A, B, C \in\mathbb{R}^2$, if $B$ lies on segment $AC$, then we have $m_\mathcal{F}(A,C) = m_\mathcal{F}(A,B) + m_\mathcal{F}(B,C)$.
\end{lemma}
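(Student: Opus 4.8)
The plan is to recognize $v_\mathcal{F}$ as the Minkowski gauge (functional) of the convex body $\mathcal{F}$ and to verify that, under the standing assumptions that $\mathcal{F}$ is closed, bounded, convex and symmetric about the origin, this gauge behaves like a norm, so that the metric axioms for $m_\mathcal{F}$ follow formally. Concretely, I would first record three properties of $v_\mathcal{F}$. \emph{Definiteness}: since $\mathcal{F}$ is bounded, say $\mathcal{F} \subseteq \overline{B}(0,R)$, the inclusion $x \in r\mathcal{F}$ forces $|x| \le rR$, so $v_\mathcal{F}(x) \ge |x|/R$; hence $v_\mathcal{F}(x) = 0$ iff $x = 0$. (It is also finite because $\mathcal{F}$ contains a neighbourhood of $0$; here one should assume $\mathcal{F}$ is full-dimensional, which is automatic in the polygon case.) \emph{Absolute homogeneity}: from the definition, $v_\mathcal{F}(\lambda x) = \lambda\, v_\mathcal{F}(x)$ for $\lambda \ge 0$ via the substitution $r \mapsto r/\lambda$, and $v_\mathcal{F}(-x) = v_\mathcal{F}(x)$ because $\mathcal{F} = -\mathcal{F}$; together these give $v_\mathcal{F}(\lambda x) = |\lambda|\,v_\mathcal{F}(x)$. \emph{Subadditivity}: given $x,y$ with $a = v_\mathcal{F}(x)$, $b = v_\mathcal{F}(y)$ both positive, closedness of $\mathcal{F}$ ensures $\tfrac{x}{a}, \tfrac{y}{b} \in \mathcal{F}$, so by convexity $\tfrac{x+y}{a+b} = \tfrac{a}{a+b}\cdot\tfrac{x}{a} + \tfrac{b}{a+b}\cdot\tfrac{y}{b} \in \mathcal{F}$, whence $v_\mathcal{F}(x+y) \le a+b$; the cases $a = 0$ or $b = 0$ are immediate.

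From these three properties, $m_\mathcal{F}(x,y) = v_\mathcal{F}(x-y)$ is a metric: non-negativity and $m_\mathcal{F}(x,y) = 0 \iff x = y$ come from definiteness; symmetry from $m_\mathcal{F}(x,y) = v_\mathcal{F}(x-y) = v_\mathcal{F}(y-x) = m_\mathcal{F}(y,x)$ using homogeneity with $\lambda = -1$; and the triangle inequality from $m_\mathcal{F}(x,z) = v_\mathcal{F}((x-y)+(y-z)) \le v_\mathcal{F}(x-y) + v_\mathcal{F}(y-z) = m_\mathcal{F}(x,y) + m_\mathcal{F}(y,z)$ using subadditivity.

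For the additivity along a segment, I would parametrize: if $B$ lies on $AC$ then $B = A + t(C-A)$ for some $t \in [0,1]$, so $A - B = t(A-C)$ and $B - C = (1-t)(A-C)$. By homogeneity, $m_\mathcal{F}(A,B) = t\,v_\mathcal{F}(A-C)$ and $m_\mathcal{F}(B,C) = (1-t)\,v_\mathcal{F}(A-C)$, and these sum to $v_\mathcal{F}(A-C) = m_\mathcal{F}(A,C)$.

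The only genuinely non-formal step is subadditivity (the triangle inequality), which is exactly where convexity of $\mathcal{F}$ is used; everything else is bookkeeping around the definition of the gauge. The one care point to keep in mind is that $v_\mathcal{F}$ should be finite-valued and the defining minimum should be attained — this is precisely what full-dimensionality, boundedness, and closedness of $\mathcal{F}$ provide, respectively — so I would state these assumptions explicitly at the start of the proof.
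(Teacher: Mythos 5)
Your proof is correct and is, at the level of the triangle inequality, essentially the same as the paper's: both of you derive $v_\mathcal{F}(x-y) \leq v_\mathcal{F}(x) + v_\mathcal{F}(y)$ from convexity, you via the explicit convex combination $\tfrac{x+y}{a+b} = \tfrac{a}{a+b}\cdot\tfrac{x}{a}+\tfrac{b}{a+b}\cdot\tfrac{y}{b}$ and the paper via the equivalent Minkowski-sum identity $(r+s)\mathcal{F} = r\mathcal{F}\oplus s\mathcal{F}$ together with central symmetry $-f_y\in\mathcal{F}$; these are the same argument in different notation. Where you genuinely diverge is the segment-additivity claim. The paper proves it by a geometric similarity argument: with $A$ at the origin, the boundary of $m(A,B)\mathcal{F}$ passes through $B$ and that of $m(A,C)\mathcal{F}$ through $C$, and since these are scalings of the same shape one reads off $m(A,B)/m(A,C)=|AB|/|AC|$ and similarly at $C$, then sums the ratios. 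You instead isolate positive homogeneity $v_\mathcal{F}(\lambda x)=\lambda v_\mathcal{F}(x)$ for $\lambda\ge 0$ as a named property and parametrize $B=A+t(C-A)$, so $m(A,B)=t\,v_\mathcal{F}(A-C)$ and $m(B,C)=(1-t)\,v_\mathcal{F}(A-C)$ fall out immediately. Your route is shorter and more algebraic, and it avoids invoking Euclidean length and similarity of scaled shapes; it also packages the facts as the standard norm axioms for a Minkowski gauge, which is the cleaner way to see why the lemma is true. Your remark that full-dimensionality of $\mathcal{F}$ is needed for $v_\mathcal{F}$ to be finite is a small but real precision the paper leaves implicit.
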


\begin{proof}
    For clarity, in this prove we use $m$ instead of $m_\mathcal{F}$. First, let us prove that $m_\mathcal{F}$ is a metric:
    \begin{itemize}
    \item $m(x,y) = 0 \iff x=y$ follows from the fact that $\mathcal{F}$ is bounded.
    \item Symmetry follows from the fact that $v(x) = v(-x)$ which is consequence of symmetry of $\mathcal{F}$ around $(0,0)$.
    \item We want $m(x,y) \leq m(x,z) + m(z,y)$. Without loss of generality we may assume $z = (0,0)$, so that it suffices to show $v(x-y) \leq v(x) + v(y)$. Let $r=v(x),s=v(y)$, so that $x\in r\mathcal{F}$ and $y\in s\mathcal{F}$. We need to prove $x-y \in (r+s)\mathcal{F}$. but $x = rf_x, y=sf_y$ for some $f_x,f_y \in \mathcal{F}$ and $(r+s)\mathcal{F} = r\mathcal{F} \oplus s\mathcal{F}$ by convexity. Because $-f_y\in \mathcal{F}$ by symmetry, we get $x - y = rf_x + s(-f_y) \in r\mathcal{F} + s\mathcal{F} = (r+s)\mathcal{F}$. 
    \end{itemize}
    
    For the second part, let $A,B,C \in \mathbb{R}^2$ with $B$ laying on $AC$. Denote by $|PQ|$ Euclidean length of the segment $PQ$ for any $P,Q\in\mathbb{R}^2$. Without loss of generality we can assume $A=(0,0)$. The boundary of $m(A,B)\mathcal{F}$ passes through $B$ and the boundary of $m(A,C)\mathcal{F}$ passes through $C$. By similarity of both shapes, we get: 
    \[\frac{m(A,B)}{m(A,C)} = \frac{|AB|}{|AC|}.\]
    Analogously:
    \[\frac{m(C,B)}{m(C,A)} = \frac{|CB|}{|CA|}.\]
    Therefore we can deduce that:
    \[\frac{m(A,B) + m(B,C)}{m(A,C)} = \frac{m(A,B)}{m(A,C)} + \frac{m(C,B)}{m(C,A)} = \frac{|AB|}{|AC|} + \frac{|CB|}{|CA|} = \frac{|AB| + |BC|}{|AC|} = 1,\]
    and the statement follows.
\end{proof}

We are now ready to show that the distance VC-dimension of intersection graphs is at most 4.

\begin{proof}[Proof of Lemma \ref{lemma:bounded}]
    By Lemma \ref{lemma:symmetric} we can assume that $\mathcal{F}$ is symmetric. Let $d(x,y)$ denote distance between points $x$ and $y$ with respect to the metric $m_\mathcal{F}$, while  $\ell(u,v)$ denotes the distance between vertices $u$ and $v$ in the intersection graph $G = I(V,\mathcal{F})$.
    
    By contradiction, let us assume that there is a set of 5 vertices $Y=\{v_1,v_2,v_3,v_4,v_5\}$ of $V$ which is shattered by the ball hypergraph.
    
    In particular, for every pair of indices $i,j$ such that $1\leq i < j\leq 5$ there is a vertex $x_{ij}$ and an integer $k_{ij}$ such that $N^{k_{ij}}[x_{ij}]\cap Y = \{v_i,v_j\}$.
    For any pair $(i,j)$ let as choose arbitrary shortest-length paths  $x_{ij} \to v_i$ and $x_{ij} \to v_j$ in the intersection graph, and connect their consecutive points with segments (see Figure \ref{fig:squares1}). After doing so for every pair of indices, we are left with a drawing of $K_5$ on plane.
    
    Because of non-planarity of $K_5$, some two paths on this drawing intersect. We can assume without loss of generality that the intersecting paths are $v_1 \to x_{12} \to v_2$ and $v_3 \to x_{34} \to v_4$. Furthermore, we can assume that the specific parts $v_1 \to x_{12}$ and $v_3 \to x_{34}$ intersect. Let $a = x_{12}$, $u_a = v_1$, $b = x_{34}$, $u_b = v_3$ and let $k_a = k_{12} = \ell(a,u_a)$ and $k_b = k_{34} = \ell(b,u_b)$. We know that the paths $a \to u_a$ and $b \to u_b$ cross, but there are two cases: either there is a common vertex $q$ lying on both paths, or there is no such vertex.

    First, we prove that $q$ cannot exist. If it does, we can assume without loss of generality that $\ell(q,u_a) \geq \ell(q,u_b)$. But this implies $\ell(a,u_a) = \ell(a,q) + \ell(q,u_a) \geq \ell(a,q) + \ell(q, u_b) \geq \ell(a, u_b)$. Therefore $k_{12} = d(x_{12}, v_1) \geq \ell(x_{12}, v_3)$, which contradicts $N^{k_{12}}[x_{12}]\cap Y = \{v_1,v_2\}$.
    
    If the paths do not have a common vertex, then there is a point $p$ which is a crossing between some segments $(a',a'')$ and $(b',b'')$ belonging to paths $a \to u_a$ and $b \to u_b$ respectively (see Figure \ref{fig:squares2}). Of course, $d(a',a'') \leq 1$ and $d(b',b'') \leq 1$. From the triangle inequality and the fact that $p$ lies on both segments $a'a''$ and $b'b''$ we deduce:
    \[2\geq d(a',a'') + d(b',b'') = d(a',p) + d(p,a'') + d(b',p) + d(p,b'') \geq d(a',b'') + d(b',a''),\]
    
    We conclude that $d(a',b'') \leq 1$ or $d(b',a'')\leq 1$. Analogously, $d(a',b')\leq 1$ or $d(a'',b'')\leq 1$.

\begin{figure}
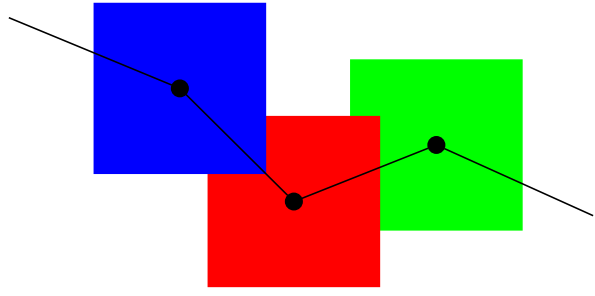

\begin{center}
    
\scalebox{1.5}{\tikzfig{figures/squares1}}
\end{center}
\caption{Connecting centers of squares on a path.}
\label{fig:squares1}
\end{figure} 

    Without loss of the generality we assume that $d(a',b'')\leq 1$. Therefore $a'$ and $b''$ are connected by an edge in $G$. We have the following inequalities:
    \begin{itemize}
        \item $\ell(a,a') + 1+  \ell(a'',u_a) \leq k_a,$
        \item $\ell(b,b') + 1 + \ell(b'',u_b) \leq k_b.$
    \end{itemize}

\begin{figure}
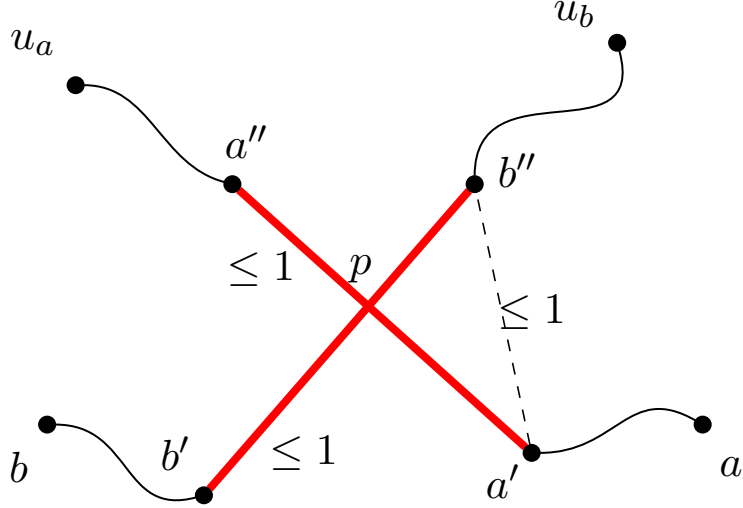


\begin{center}
    
\scalebox{1.5}{\tikzfig{figures/squares2}}
\end{center}
    \caption{Situation in the proof of bound on distance VC-dimension. Edges between vertices $a',a''$ and $b',b''$ imply that there is an edge between vertices $a',b''$ or $b',a''$.}
\label{fig:squares2}
\end{figure}

    Now, if $\ell(a,a') + 1 + \ell(b'',u_b) \leq k_a$ then the existence of the path $a\rightarrow a' \sim b'' \rightarrow u_b$ of length at most $k_a$ proves that set $N^{k_a}(a)$ contains at least 3 vertices from $Y$, contradiction.
    On the other hand, if $\ell(a,a') + 1 + \ell(b'',u_b) > k_a$, this means  $\ell(a,a') + \ell(b'',u_b)\geq k_a$, hence:
    \[\ell(b,u_a) \leq \ell(b,b') + \ell(b',a'') + \ell(a'',u_a) \leq k_b - 1 - \ell(b'',u_b) + 2 + k_a - 1 - \ell(a,a') \leq k_b,\]
    where the inequality $\ell(b',a'')\leq 2$ follows from the fact that either $(a',b')$ or $(a'',b'')$ is an edge in $G$, so we there is a path $b'\sim a'\sim a''$ or $b'\sim b''\sim a''$.
    Finally, we conclude $u_a\in N^{k_b}[b]$ and by contradiction, the statement follows.
\end{proof}

\subsection{General algorithm framework}


\corDistanceHypergraphXors*
\begin{proof}
Directly from Lemma \ref{lemma:hypergraph-order} with $X = V$ and $\Rr = \{N^k[v] : v \in V\}$. Note that for the algorithmic part we use the fact that for any $v \in V$ and $k \in \mathbb{Z}$,  computing the list $\{x \in V : v \in N_k[x] \}$ of hyperedges containing $v$ is equivalent to computing $N_k[v]$, and so can be done in time complexity $T(G)$.
\end{proof}

\corWeightedAverageSpanningPath*
\begin{proof}
We build a new hypergraph $(X, \Rr)$ in the following way: for every vertex $v \in V$ we add $\alpha(v)$ copies of $v$, named $v^{(1)}, \ldots, v^{(\alpha(v))}$ to $X$. This implies $|X| = \sum_{j=1}^{n} \alpha(j)$. Then for every $v \in V$ we create a hyperedge $e(v)$ of $\Rr$ defined as $e(v) = \{x^{(t)} : x \in N^k[v], 1 \leq t \leq \alpha(x)\}$. In other words, $(X, \Rr)$ is the $k$-neighbourhood hypergraph of $V$, but with every vertex multiplied according to its weight. However, multiplication of vertices cannot increase the VC-dimension, as it does not create any new shattered sets. Therefore the VC-dimension of $(X, \Rr)$ is at most $d$ and we can apply Lemma \ref{lemma:hypergraph-order} to it, obtaining some order $e_1, \ldots, e_n$ on $\Rr$. We can now assign names $v_1, v_2, \ldots, v_n$ such that $e_i = e(v_i)$, and we are going to prove that $v_1, \ldots, v_n$ is the desired order on $V$. Computing this order, like in Corollary \ref{corollary:distance-hypergraph-xors} needs $\Oh{n^{1/d}}$ applications of single-source distance finding algorithm. Note that we don't need to explicitly copy vertices, but simply sample original vertices with probability proportional to $\alpha$.

From Lemma \ref{lemma:hypergraph-order} we know that $\sum_{i=1}^{n-1}|e_i \sdiff e_{i+1}| = \Ot{n^{1-1/d} \cdot \sum_{j=1}^{n} \alpha(j)}$. Recall that for every $v \in V$ we denote by $I(N^k[v])$ the interval representation of $N^k[v]$. For a fixed $x$, let us consider right endpoints of intervals in $I(x)$, excluding (if needed) the endpoint at the very last element $v_n$.
Any such endpoint contributes to a situation where some vertex $v_i$ belongs to $N^k[x]$, but $v_{i+1}$ does not, which is equivalent to $x \in N^k[v_i]$ and $x \notin N^k[v_{i+1}]$, which in turn means that $\{x^{(1)}, \ldots, x^{(\alpha(x))}\} \subseteq e(v_i) \setminus e(v_{i+1})$. In a similar manner we can show that any left endpoint $v_j$ of an interval in $I(x)$ contributes $\alpha(x)$ elements to $e(v_j) \setminus e(v_{j-1})$. Summing it over all $x$, by the fact that any interval has one left and one right endpoint, we get that: 
\[\sum_{x \in V} \alpha(x) \cdot |I(N^k[x])| = \sum_{i=1}^{n-1}|e_i \sdiff e_{i+1}| + \ell + r\] 
where $\ell$ and $r$ count the endpoints at the leftmost and rightmost spot, i.e. $\ell$ (resp. $r$) is the sum of $\alpha(y)$ for all $y$ such that $I(y)$ has an endpoint at $v_1$ (resp. $v_n$). But as $\ell$ and $r$ are both bounded by $\sum_{j=1}^{n} \alpha(j)$, so we finally derive: \[\sum_{x \in V} \alpha(x) \cdot |I(N^k[x])| = \Ot{n^{1-1/d} \cdot \sum_{j=1}^{n} \alpha(j)},\]

as desired.

\end{proof}

\subsection{Diameter testing for implicit graphs}

\lemmaExpandBalls*
\begin{proof}
	We first consider the time consumed by the procedure before recursive calls.
	The computation of sets $C = \bigcup_{i=1}^t D_i$, $D_1 \setminus C$, $D_1'$ and $D_m'$ can be easily implemented in $\Ot{t + \sum_{i=1}^{t} |D_i|}$ time.
	There are $|D_1 \setminus C|$ $\AddNeighbours$ operations, each taking $\Ot{1}$ time.
	Therefore the total time spent before recursion is $\Ot{t + \sum_{i=1}^{t} |D_i|}$.
	To simplify further analysis, we define cost of these operations to be $t + \sum_{i=1}^{t} |D_i|$, i.e. we omit the polylogarithmic factors introduced by $\widetilde{\mathcal{O}}$ notation.
	Our goal is to bound the total cost including recursive calls.
	The total running time is then simply the cost multiplied by polylogarithmic factors.

	Let $f(a, b, t)$ be the maximum possible total cost of the recursive procedure if $|D_1| = a$ and $\sum_{i=2}^{t} |D_i| = b$.
	Observe that $D_1', D_m' \subseteq C$, so size of these sets is bounded by $b$.
	Let $x = \sum_{i=2}^{m-1} |D_i|$.
	Then the cost of first recursive call is at most:
	\begin{equation*}
		f(|D_1'|, \sum_{i=2}^{m-1} |D_i|, \floor{t/2}) \leq f(b, x, \floor{t/2})
	\end{equation*}
	and the cost of second recursive call is at most:
	\begin{equation*}
		f(|D_m'|, \sum_{i=m+1}^{t} |D_i|, \floor{t/2}) \leq f(b, b-x, \ceil{t/2}) \text{.}
	\end{equation*}
	For $t \geq 2$, the following relation follows:
	\begin{equation*}
		f(a, b, t) \leq \max_{x=0}^{b} f(b, x, \floor{t/2}) + f(b, b-x, \ceil{t/2}) + a + b + t
	\end{equation*}
	We argue that $f(a, b, t) \leq a + 3b(\ceil{\log_2 t} + 1) + 2t$.
	The base case $f(a, b, 1)$ is trivial.
	We now prove the inductive step for $t \geq 2$:
	\begin{align*}
		f(a, b, t) &\leq \max_{x=0}^{b} f(b, x, \floor{t/2}) + f(b, b-x, \ceil{t/2}) + a + b + t \\
		&\leq \max_{x=0}^{b} b + 3x\ceil{\log_2 t} + 2\floor{t/2} + b + 3(b-x)\ceil{\log_2 t} + 2\ceil{t/2} + a + b + t \\
		&= 3b + 3b\ceil{\log_2 t} + a + 2t = a + 3b(\ceil{\log_2 t} + 1) + 2t
	\end{align*}
	It follows that running time is $\Ot{f(a, b, t)} = \Ot{t + \sum_{i=1}^{t} |D_i|}$.
\end{proof}

\subsection{Simulating BFS using Neighbour Set Data Structure}
\label{sec:explicit-ball-computation-using-ds}

The vertex order computation requires an oracle that explicitly computes the set $N^r[v]$ for a given radius $r$ and vertex $v$.
For completeness, we describe here how to do that in $\Ot{n}$ time if we are equipped with Neighbour Set Data Structure.
More precisely, we show how to simulate Breadth-First Search algorithm on input graph without dependence on edge count.
This enables us to compute distances from vertex $v$ to all other vertices.

We provide the pseudocode of the procedure as Algorithm $\ref{alg:balls-explicit}$.
Classic BFS algorithm maintains a queue of vertices to visit $Q$ and a set of explored vertices $\widetilde{S}$.
The only important change we make is keeping the set $\widetilde{S}$ in a data structure.
This allows us to quickly list only unexplored neighbours for any vertex $x \in V$ in line \ref{line:listing-neighbours}.
The listed vertices are immediately marked as explored.
This guarantees that each vertex will be listed exactly once and the time complexity becomes $\Ot{n}$.

\begin{algorithm}[H]
	\caption{Simulating Breadth-First Search using data structure.}
	\label{alg:balls-explicit}
	\begin{algorithmic}[1]
		\Function{GetNeighbourhood}{$v, r$}
			\State $\widetilde{S} \gets \emptyset$ \Comment{The set of unexplored vertices.}
			\State $Q \gets \text{empty FIFO queue}$ \Comment{The queue of vertices to visit.}
			\ForAll{$x \in V(G) \setminus \{v\}$}
				\State $d_x \gets \infty$ \Comment{Initialize distances to $\infty$.}
			\EndFor
			\State $d_v \gets 0$ \Comment{Source vertex is at distance $0$.}
			\State Push vertex $v$ to the queue $Q$.
			\While{$Q$ is not empty}
				\State $x \gets \text{pop vertex from } Q$
				\State $\widetilde{S'} \gets \AddNeighbours(\widetilde{S}, x)$
				\ForAll{$w \in \ListDifferences(\widetilde{S}, \widetilde{S'}) \setminus \{v\}$}\label{line:listing-neighbours}
					\State $d_w \gets d_x + 1$
					\State Push vertex $w$ to the queue $Q$.
				\EndFor
				\State $\widetilde{S} \gets \widetilde{S'}$
			\EndWhile
			\State \Return $\{ x \in V(G) : d_x \leq r \}$ \Comment{Return $r$-neighbourhood.}
		\EndFunction
	\end{algorithmic}
\end{algorithm}

\section{Neighbouring Set Data Structure for intersection graphs}
\label{appendix:squares}

In this section we provide more details for the Neighbouring Set Data Structure needed for our diameter algorithms. 

Before we introduce the data structures, let us recall the following lemma and complete its omitted proof:

\obsDoubling*

\begin{proof}
    By symmetry of $\mathcal{F}$ we get 
     \[(u + \mathcal{F}) \cap (v + \mathcal{F}) \neq \varnothing \iff u-v \in \mathcal{F} \oplus (-\mathcal{F}) = \mathcal{F} \oplus \mathcal{F} = 2\mathcal{F} \iff u \in v + 2\mathcal{F},\]
 which proves the claim.
\end{proof}

Recall that we are given a shape $\mathcal{F} \subset \mathbb{R^2}$ and a set of points $V \subset \mathbb{R}^2$ and we want the NSDS to store a family $\mathcal{T}$ of subsets of $V$. It should implement the following operations:

\begin{itemize}
    \item $\Mark(\widetilde{S}, (x, y))$: Given a set $\widetilde{S} \in \mathcal{T}$ and a point $(x,y) \in \mathbb{R}^2$, add a new set $\widetilde{S'} = \widetilde{S} \cup P$ to the family $\mathcal{T}$, where $P \subseteq V$ contains the points covered by $\mathcal{F}$ centered at the point $(x,y)$.
    \item $\ListDifferences(\widetilde{S}_1, \widetilde{S}_2)$: Given sets $\widetilde{S}_1, \widetilde{S}_2 \in \mathcal{T}$, output their symmetric difference $D = \widetilde{S}_1 \sdiff \widetilde{S}_2$.
\end{itemize}

In the following, we assume for simplicity that there are no ties when coordinates are compared during the algorithm. We can achieve that e.g. by rotating the plane by a random angle, but it is not necessary: we can keep the original points and add a few edge cases to the following analysis. We will, however, omit them here for clarity. Let us restate the main lemma which needs to be proven here:

\lemmaStructureGeo*

\subsection{Segment trees}
\label{sec:persistent-segment-tree}

\emph{Segment trees} are a family of data structures that facilitate efficient range queries and updates over a sequence of elements. A segment tree is typically build over some sequence of objects $x_1,x_2,\ldots,x_n$ in a recursive manner. If the sequence consists of single object, then segment tree consists of a single \textit{node}. Otherwise, we split the sequence into $x_1,\ldots,x_m$ and $x_{m+1}, \ldots, x_n$ and build segment trees for these subsequences separately. We choose the index $m$ such that sequence is split into equal or nearly-equal parts. (In fact, it is often assumed that $n$ is a power of $2$ and that $m = n/2$) .After that, we create a new node, and declare nodes created by recursive the construction as \textit{left} and \textit{right child} of new node, respectively.

From this definition it follows that every node has 0 or 2 children. If a node has no children, then it is called a \textit{leaf}. If a node $x$ is a child of node $y$ then we say that $y$ is the \textit{parent} of node $x$. There is exactly one node with no parent, called  \textit{root} of the tree. Moreover, we can observe that tree has $\Oh{\log{n}}$ levels where $n$ is length of initial sequence.

To facilitate its operations, a segment tree stores information in its nodes. In particular, if a node is responsible for some interval of the sequence $\{x_a, x_{a+1}, \ldots, x_b\}$, then it stores some cumulative information about objects in that interval.

For our purposes, we also need \textit{persistency}, i.e. we want to have access to past versions (snapshots) of the tree. It can be achieved by copy-on-write mechanism: if we are to modify a node, we instead copy it, and modify this copy. As our implementation of operations on segment tree typically starts with root and then goes down the tree, every modification will create a new root of the tree, which can be later used to see state of the tree when the root was created.

As an example application we now describe a data structure called \emph{Simple Subset Retrieval} (SSR) maintaining a family $\mathcal{T}$ of subsets of $\{1,2,\ldots,n\}$ and supporting following operations:

\begin{itemize}
    \item $\Initialize(n)$: Start with $\mathcal{T} = \{\varnothing\}$ and return the representative of $\varnothing$, in $\Ot{n}$ time complexity;
    \item $\Add(\widetilde{S}, k)$: For a set $\widetilde{S} \in \mathcal{T}$, add a new set $\widetilde{S}' = \widetilde{S} \cup \{k\}$ to $\mathcal{T}$ and return the representative of $\widetilde{S}'$ in $\Ot{1}$ time complexity;
    \item $\ListDifferences(\widetilde{S_1}, \widetilde{S_2})$: Return the symmetric difference $D = \widetilde{S_1} \sdiff \widetilde{S_2}$ in $\Ot{|D|}$ time complexity.
\end{itemize}

The full version of the Neighbour Set Data Structure will use similar ideas -- therefore the SSR will build helpful intuitions -- but also a version of the SSR will be used as part of NSDS.

The Simple Set Retrieval structure consists of a single persistent segment tree. The representatives of sets in $\mathcal{T}$ are the new roots created by the aforementioned persistency mechanism.

To implement $\ListDifferences$ operation we need a fast way of determining if two sets have the same elements. To achieve this, we use hashing.
For every $i \in \{1,\ldots,n\}$ we randomly select a $t$-bit hash $h_i$. The hash of the subset $S$ of $\{1,\ldots,n\}$ is $\bigoplus_{s\in S} h_s$ where $\oplus$ is the bitwise exclusive or (XOR) operation (we assume the result to be 0 if $S$ is empty).

Each node is responsible for elements from some contiguous subset $\{a,a+1,\ldots,b\}$ of $\{1,\ldots,n\}$ and stores only a single field \texttt{hash}. We would like \texttt{hash} to hold a hash of set $\widetilde{S} \cap \{a,\ldots,b\}$. Therefore the value of \texttt{hash} in the leaf responsible for $\{i\}$ is either $h_i$ if the element $i$ is already added to the set or 0 otherwise. If the node is not a leaf, then the \texttt{hash} field can be calculated as xor of \texttt{hash} fields from both of its children. 
As children partition set $\{a,\ldots,b\}$ this calculation is indeed correct.

For node $N$ responsible for points $\{a,\ldots,b\}$ define $S(N)$ as set of added elements amongst $\{a,\ldots,b\}$.

The operations can be performed as follows:
\begin{itemize}
    \item $\Initialize()$. We randomly select $t$-bit numbers $h_i$. Then we build the tree recursively. Let $\InitTree(a,b)$ be a procedure which creates the subtree responsible for interval $\{a,\ldots,b\}$ and returns the root of the created subtree. The procedure works as follows. First, create a new empty node. If $a$ is equal to $b$, then new node is a leaf and we can simply return it. Otherwise we split interval $[a,b]$ in half with $m = \floor{(a+b)/2}$ and recursively create node for intervals $[a,m]$ and $[m+1,b]$. The nodes returned by these recursive calls are now roots of the left and right subtrees. The set is initially empty, therefore all \texttt{hash} values are set to 0. The correctness of this procedure follows from a simple induction on the length of the interval. The node returned by $\InitTree(1,n)$ is the root of the tree which represent the empty set.
    \item $\Add(\widetilde{S}, k)$. The representative of $\widetilde{S}$ is simply one of the copies of the root. Our goal is to change value in the leaf responsible for $\{k\}$ while maintaining persistency. We can do so recursively as follows. If a node $N$ has children, then we can determine the child $C$ which is responsible for the element $k$. We proceed recursively on the child $C$. In return we get a new node $C'$: copy of the original child $C$ but now its hash includes the element $k$. We create a copy $N'$ of $N$. We link the node $C'$ returned by the recursive call as the appropriate child of $N'$ and recalculate the \texttt{hash} value of $N'$ using values stored in its children (one of the children is new, the other one is from the original node $N$).
    \item $\ListDifferences(\widetilde{S_1}, \widetilde{S_2})$.
    Now the representatives $\widetilde{S_1}$ and $\widetilde{S_2}$ are the roots of two snapshots of our data structure. We again proceed recursively. Suppose we now consider nodes $N_1$ and $N_2$. We want to return symmetric difference $S(N_1)\Delta S(N_2)$.
    If the hashes in the nodes $N_1$ and $N_2$ are equal, then (w.h.p.) the sets $S(N_1)$ and $S(N_2)$ are the same. In this case we can return $\varnothing$ as our result. Suppose now that hashes in $N_1$ and $N_2$ are different. If the nodes $N_1$ and $N_2$ are leaves, then they are responsible for a single element $x$ and $\widetilde{S_1}$ and $\widetilde{S_1}$ differ on $x$, so we should return $\{x\}$. Otherwise, we simply use recursive call on both children and return a union of the sets returned from the both recursive calls. 
\end{itemize}

The proof of complexity of above procedures is very similar to one given in next part, therefore it is omitted here.

\subsection{Single Stripe Data Structure}
\label{sec:stripes-structure}

Here we introduce an auxiliary data structure which behaves similarly to NSDS, but assumes that all points lie in a single horizontal stripe of plane $[0;1]\times \mathbb{R}$. We will call it the Single Stripe Data Structure (SSDS).
In this and the following subsections, we work on Lemma \ref{lemma:structure-geo}a), so we assume that our shape $\mathcal{F}$ is a unit square. In Subsection \ref{sec:full-geo-structure} we generalize our approach to any $s$-sided convex polygon.

Let $V$ be a set of $n$ points from a stripe $[0;1]\times \mathbb{R}$. The structure maintains a family $\mathcal{T}$ of subsets of $V$. The structure supports following operations.

\begin{itemize}
    \item $\Initialize(V)$: Adds the empty set $\varnothing$ to the family $\mathcal{T}$ and return its representative in $\Ot{n}$  time;
    \item $\Mark(\widetilde{S},(x,y))$: Let $P$ be a set of points from $V$ contained within a unit square centered at the point $(x,y)$. This operation adds the set $\widetilde{S}' = \widetilde{S} \cup P$ to the family $\mathcal{T}$ and return the representative of the set $\widetilde{S}'$ in $\Ot{1}$ time;
    \item $\ListDifferences(\widetilde{S_1},\widetilde{S_2})$: Returns the symmetric difference $D = \widetilde{S_1} \sdiff \widetilde{S_2}$ in $\Ot{|D|}$ time.
\end{itemize}

We start by sorting the set $V$ by the $x$ coordinate, so we assume that $V = v_1,v_2,\ldots,v_n$ follows this order. This way, points between any two vertical lines correspond to an interval in $v_1,\ldots,v_n$.

The SSDS data structure is a single persistent segment tree. Each node is responsible for points $v_a,v_{a+1},\ldots v_b$ for some $a,b$ and each leaf is responsible for a single point from the set $V$. As before, we use copies of the root as representatives of sets.

During $\Initialize()$, we also randomly select numbers $h_i$ consisting of $t$ bits for each point $v_i$. Let $S_I = \{v_i \mid i \in I\}$. Then hash of the set $S_I$ is defined as $H(S_I) = \bigoplus_{i\in I} h_i$.

Observe that all relevant unit squares from $\Mark$ queries have to intersect either the bottom or the top side of the stripe. We call a unit square \textit{bottom} if it intersects the bottom side of the stripe. Otherwise we call such square \textit{top}. In particular, if the square intersects both sides, we assume that it is a bottom one. The \textit{top boundary} is the bottom border of the shape consisting of a union of all top squares and top side of the stripe. Analogously we define the \textit{bottom boundary}. We can see that the marked points are exactly those above the top boundary or below the bottom boundary (see Figure \ref{fig:borders}).

\begin{figure}
    \begin{center}
        \includegraphics[scale=1.3]{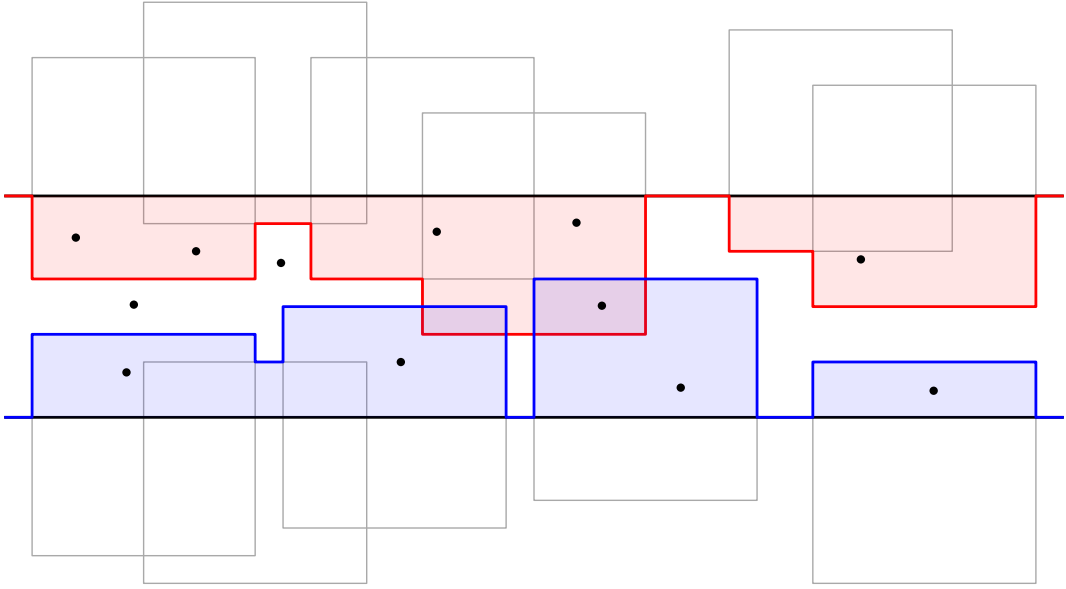}
    \end{center}
    \caption{Boundaries implied by squares overlapping with the stripe. The red and blue lines mark top and bottom boundaries respectively. The boundaries can intersect.}
    \label{fig:borders}
\end{figure}

Our goal is to calculate, for each node, a hash of the set of marked points for which this node is responsible for. The operation $\Mark$ can potentially affect many (even $\Theta(n)$) points from $V$, therefore we cannot afford to update every leaf directly. To address this problem we employ the technique of \textit{lazy propagation}. Namely, we do not have to keep accurate information in every node. Instead, we need to have correct information in a node only when we are accessing it.
We can update nodes in a lazy manner, i.e. mark a node if its children can potentially have inaccurate information. If needed, before entering any of the children, we update their values to correct state.

To facilitate the updates, in every node we store the following information (see Figure \ref{fig:values}):

\begin{enumerate}
    \item \texttt{points}: the array storing points $v_a,\ldots,v_b$, sorted by $y$ coordinate; let $j_0=0$ and $j_1,j_2,\ldots,j_{b-a+1}$ be indices $i$ of $v_i$ in this order; note that this array is never modified, so we do not need make any copies of it, instead keeping a single static version;
    \item \texttt{pref\_hash}: the array of hashes of the prefixes of \texttt{points}, i.e. \texttt{pref\_hash[0]} $=0$ and \texttt{pref\_hash[m]} $= \bigoplus_{i=1}^m h_{j_i}$; again, we only need one such array for all copies of the node;
    \item \texttt{top\_min}: the lowest $y$ coordinate on top boundary for $x\in [v_a^{(x)},v_b^{(x)}]$;
    \item \texttt{top\_max}: the highest $y$ coordinate on top boundary for $x\in [v_a^{(x)},v_b^{(x)}]$;
    \item \texttt{bot\_min}: the lowest $y$ coordinate on bottom boundary for $x\in [v_a^{(x)},v_b^{(x)}]$;
    \item \texttt{bot\_max}: the highest $y$ coordinate on bottom boundary for $x\in [v_a^{(x)},v_b^{(x)}]$;
    \item \texttt{top\_hash}: the hash of set of points from $v_a,\ldots,v_b$ above top boundary;
    \item \texttt{bot\_hash}: the hash of set of points from $v_a,\ldots,v_b$ below bottom boundary;
    \item \texttt{hash}: the hash of the set of all marked points from $v_a,\ldots,v_b$;
    \item \texttt{top\_lazy}, \texttt{bot\_lazy}: boolean flags used for \textit{lazy propagation}.
\end{enumerate}

\begin{figure}
    \begin{center}
        \includegraphics[width=\textwidth]{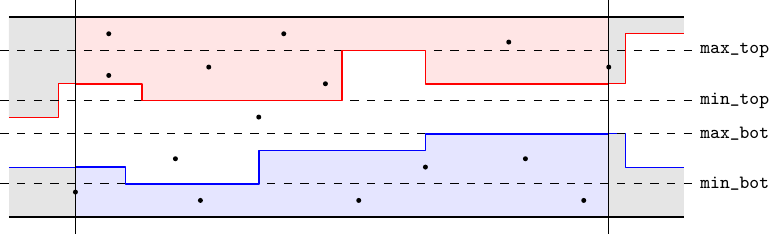}
    \end{center}
    \caption{Information about the boundaries stored by a segment tree's node. The \texttt{top\_hash} is the hash of the points in the red area. The \texttt{bot\_hash} is the hash of the points in the blue area. The \texttt{hash} is the hash of the points in the union of red and blue area.}
    \label{fig:values}
\end{figure}
\noindent
We need some nomenclature regarding types of nodes. We call a node $N$:
\begin{itemize}
    \item \textit{bottom-} or \textit{top-lazy} if flag \texttt{bot\_lazy} or \texttt{top\_lazy} is set to true, respectively;
    \item \textit{lazy} if it is either bottom-lazy or top-lazy;
    \item \textit{bottom-} or \textit{top-simple} if \texttt{bot\_min} = \texttt{bot\_max} or \texttt{top\_min} = \texttt{top\_max} respectively (the bottom or the top boundary is a horizontal line at this segment);
    \item \textit{$y$-disjoint} for some $y\in\mathbb{R}$ if the horizontal line at level $y$ is disjoint from both top and bottom boundary;
    \item \textit{split} if top and bottom boundaries are disjoint;
    \item \textit{bottom-outdated} (\textit{top-outdated}) if there is bottom-lazy (top-lazy) node on path from root to node $N$, excluding $N$;
    \item \textit{outdated} if it is either bottom-outdated or top-outdated.
\end{itemize}

\noindent
We are going to keep following invariants:

\begin{enumerate}
    \item After a node is created, information stored inside it never changes (persistency),
    \item If a node has an incorrect value of \texttt{bot\_min}, \texttt{bot\_max} or \texttt{bot\_hash}, then it must be bottom-outdated,
    \item If a node has an incorrect value of \texttt{top\_min}, \texttt{top\_max} or \texttt{top\_hash}, then it must be top-outdated,
    \item If a node has in incorrect value of \texttt{hash}, then it must be outdated,
    \item If we enter a node (using either operation), then it is not outdated,
    \item If a node is bottom-lazy (top-lazy) then it is bottom-simple (top-simple) and split.
\end{enumerate}

Observe that we can use \texttt{pref\_hash} array to calculate hash of any subarray of \texttt{points}. Let $I_{l,r} = \{j_l,j_{l+1},\ldots,j_r\}$. Then $H(I_{l,r}) = H(I_{0,r}) \oplus H(I_{0,l-1})$, with both these values stored in \texttt{pref\_hash} array. We use this property to recalculate hashes for nodes that become bottom- or top-simple during updates (possibly lazy updates).

We are now going to describe operations supported by SSDS. We introduce an auxiliary procedure $\Push$ which is part of the lazy propagation technique. Its purpose is to make sure that a node is not outdated before we use it.

\subparagraph*{\Push{} operation.} We use this operation a lazy node $N$ with potentially outdated children. The procedure returns an equivalent node $N'$ that is not lazy and its children are not outdated. As everything is persistent, this operation creates copies of the children, updates them, and returns a copy of the original node with new children.

If $N$ is top-lazy, then by invariants 5 and 6 it is not outdated and additionally it is top-simple and split. Therefore, correct states of its children are also top-simple and split. We can copy values \texttt{top\_min}, \texttt{top\_max} from $N$ to its children and mark them as top-lazy. 
Moreover, we can compute the \texttt{top\_hash} fields for the children of $N$ using their \texttt{pref\_hash} arrays. Observe that after this operation every node in the subtree of $N$, excluding its children, is still outdated (as it should be). Additionally, the children are no longer top-outdated.

Now we recalculate the \texttt{hash} fields for children. There are two cases. If the top boundary is below the bottom boundary, then every point is marked, so  we can set the \texttt{hash} field to the hash of all points. Otherwise, the points marked by the bottom boundary are different from the ones marked by the top boundary, in which case \texttt{hash} $=$ \texttt{bot\_hash} $\oplus$ \texttt{top\_hash}.

We proceed similarly if the node $N$ is bottom-lazy.
It is straightforward to check that all invariants are maintained by this operation.


\begin{algorithm}
    \caption{$\Push$ operation}
    \begin{algorithmic}
        \Function{Push}{$N$}
        \State $N \gets$ copy of $N$ 
        \If{$N.\texttt{top\_lazy}$}
        \State $N \gets \PushTop(N, \texttt{top\_min})$
        \State $N.\texttt{top\_lazy} \gets \texttt{false}$
        \EndIf
        \If{$N.\texttt{bot\_lazy}$}
        \State $N \gets \PushBot(N, \texttt{bot\_min})$
        \State $N.\texttt{bot\_lazy} \gets $ \texttt{false}
        \EndIf
        \State \Return $N$
        \EndFunction
        
\\
        
        \Function{PushTop}{$N, y$} \Comment{Returns new version of $N$ with updated children}
        \State $L \gets $ copy of left child of $N$
        \State $R \gets $ copy of right child of $N$
        \State $L.\texttt{top\_min} \gets y$, $L.\texttt{top\_max} \gets y$, $L.\texttt{top\_lazy} \gets \texttt{true}$
        \State $R.\texttt{top\_min} \gets y$, $R.\texttt{top\_max} \gets y$, $R.\texttt{top\_lazy} \gets \texttt{true}$
        \State Recalculate \texttt{top\_hash} and \texttt{hash} for $R$ and $L$.
        \State $N \gets $ copy of $N$
        \State $N.\texttt{left} \gets L$, $N.\texttt{right} \gets R$
        \State $N.\texttt{top\_lazy} \gets \texttt{false}$
        \State \Return $N$
        \EndFunction

\\

        \Function{PushBot}{$N, y$} \Comment{Same as above, omitted for clarity}
        \EndFunction
    \end{algorithmic}
\end{algorithm}

\subparagraph*{\Mark{} operation. }
Let $v_l,\ldots,v_r$ be the points between left and right side of the query square, i.e. the points potentially affected by this update. If there are no such points, then we ignore this update. Assume without loss of generality that the query square is a \textit{bottom} square, so we need to update the bottom boundary.

As usual, we proceed recursively. Let $\UpdateBot(N,[l,r],y)$ be a procedure which updates the boundary of the node $N$ with the update square. Let the node $N$ be responsible for the points $v_a,\ldots,v_b$.

If $[a,b]\cap [l,r] = \varnothing$, then this update does not affect the boundary in that node, therefore we can safely return.

If $[a,b]\subseteq [l,r]$ and the node is $y$-disjoint (we can check that using the stored values), then we can directly update the values stored in the node $N$. Let us introduce the names $B, T,Y$ for the  bottom boundary, top boundary and horizontal line at $y$ level, respectively. We need these names for a few cases that we will consider, and these cases depend on relative locations of $B$, $T$ and $Y$:

\begin{enumerate}
    \item If $Y < B$ (i.e. $y < \texttt{bot\_min}$), then we are marking a region that is below the bottom boundary, so we do not need to update any information and we can safely return.
    \item If $B < Y < T$ (i.e. $\texttt{bot\_max} < y < \texttt{top\_min}$) then the new bottom boundary is going to be $Y$. We can set $\texttt{bot\_min} = \texttt{bot\_max} = y$ and calculate $\texttt{bot\_hash}$ using \texttt{pref\_hash} table (marked points are prefix of \texttt{points} array). We calculate \texttt{hash} as $\texttt{bot\_hash} \oplus \texttt{top\_hash}$ (marked regions are disjoint in this case). We mark the node $N$ as bottom-lazy, because nodes in its subtree might not have correct values. Node $N$ is now bottom-simple and split, so invariants are maintained.
    \item If $B,T < Y$ (i.e. $\texttt{bot\_max}, \texttt{top\_max} < y$) then the new bottom boundary is above the top boundary, so every point is marked and we can set \texttt{hash} to the hash of all points.
\end{enumerate}

We can differentiate between the cases using the information from the node as shown above. After these calculations all invariants are maintained and we can safely return.

If none of the above conditions holds, we need to use recursion on children. However, children might be outdated if the node $N$ is lazy. In that case, we first execute $\Push(N)$ to fix values in the children and only then we proceed recursively.
Later, we merge information about the boundaries and marked points. We use the following formulas:

\begin{itemize}
    \item $\texttt{top\_min} = \min(L.\texttt{top\_min}, R.\texttt{top\_min})$,
    \item $\texttt{top\_max} = \max(L.\texttt{top\_max}, R.\texttt{top\_max})$,
    \item $\texttt{bot\_min} = \min(L.\texttt{bot\_min}, R.\texttt{bot\_min})$,
    \item $\texttt{bot\_max} = \max(L.\texttt{bot\_max}, R.\texttt{bot\_max})$,
    \item $\texttt{top\_hash} = L.\texttt{top\_hash} \oplus R.\texttt{top\_hash}$,
    \item $\texttt{bot\_hash} = L.\texttt{bot\_hash} \oplus R.\texttt{bot\_hash}$,
    \item $\texttt{hash} = L.\texttt{hash} \oplus R.\texttt{hash}$,
    \item $\texttt{top\_lazy}$ and $\texttt{bot\_lazy}$ are already \texttt{false}, as we used $\Push$ on them if they were not.
\end{itemize}

\begin{algorithm}
    \caption{$\Mark$ operation}
    \begin{algorithmic}
        \Function{Mark}{$\widetilde{S}, (x,y)$}
        \If {square centered at $(x,y)$ cannot affect any point}
        \State \Return $\widetilde{S}$
        \EndIf
        \State Calculate $l,r$ such that points $v_l,\ldots,v_r$ are between left and right side of query square.
        \If{$y \leq 0.5$} \Comment{Query square intersects bottom side of the stripe.}
        \State \Return $\UpdateBot(\widetilde{S},[l,r],y+0.5)$
        \Else \Comment{Query square intersects top side of the stripe.}
        \State \Return $\UpdateTop(\widetilde{S},[l,r],y-0.5)$
        \EndIf
        \EndFunction
        \Function{UpdateBot}{$N, [l,r], y$}
        \State Let $[a,b]$ be the interval associated with node $N$.
        \If{$[a,b]\cap [l,r] = \varnothing$}
            \State \Return $N$
            \EndIf
            \State $N \gets $ copy of $N$
            \If{$[a,b]\subseteq [l,r]$ and $N$ is $y$-disjoint}
            \State Update values of $N$.
            \State \Return $N$.
            \EndIf
            \If {$N$ is lazy}
            \State $N \gets \Push(N)$
            \EndIf
            \State $N.left \gets \UpdateBot(N.\texttt{left}, l, r, y)$
            \State $N.right \gets \UpdateBot(N.\texttt{right}, l, r, y)$
            \State Merge information from children to $N$.
            \State \Return $N$
        \EndFunction
    \end{algorithmic}
\end{algorithm}

\subparagraph*{\ListDifferences{} operation.}
Given representatives of two sets (i.e. roots of some two versions of the tree), we want to list the symmetric difference between sets of marked points. To compare content of the sets we use hashes stored in the nodes. We proceed by concurrently descending both trees. If the hashes of the sets are equal then the symmetric difference is empty (with high probability). Now suppose that the hashes are different. If the node is a leaf with a single-element interval $v_i$ then we add $v_i$ to the result. Otherwise, we split recursively and return the union of sets returned from recursive calls.

To make our hashes work, we set $t$ (number of bits in hashes) to be $c\log_2{n}$. The well-known arguments shows that some two different sets may have the same hash with probability no larger than $2^{-c\log_2{n}} = n^{-c}$, i.e. we can correctly identify sets as different with high probability.
Therefore, if hashes are the same, then w.h.p. sets are the same.


\begin{algorithm}
    \caption{$\ListDifferences$ algorithm}
    \begin{algorithmic}
        \Function{ListDifferences}{$N_1,N_2$}
        \If {$N_1$ is lazy}
        \State $N_1 \gets \Push(N_1)$
        \EndIf
        \If {$N_2$ is lazy}
        \State $N_2 \gets \Push(N_2)$
        \EndIf
        \If{$N_1.hash = N_2.hash$} \Comment{Sets are the same, symmetric difference is empty.}
        \State \Return $\varnothing$
        \EndIf
        \If{$N_1,N_2$ are leaves} \Comment{Leaves responsible for point $P$.}
        \State \Return $\{P\}$ \Comment{One set is empty, the other is $\{P\}$.}
        \EndIf
        \State $D_1 \gets \ListDifferences(N_1.\texttt{left}, N_2.\texttt{left})$
        \State $D_2 \gets \ListDifferences(N_1.\texttt{right}, N_2.\texttt{right})$
        \State \Return $D_1 \cup D_2$
        \EndFunction
    \end{algorithmic}
\end{algorithm}

\subsubsection*{Time complexity analysis}

\subparagraph*{Initialization.}
The height of the tree is $\Oh{\log{n}}$ and there are $\Oh{n}$ nodes (as in every binary tree with $n$ leaves), so the initialization of \texttt{points} and \texttt{pref\_hash} arrays (whose size is $\Oh{n\log{n}}$ in total) can be done in $\Ot{n}$ time.
We set all lazy flags to false, all \texttt{top\_min/max} to 1, all \texttt{bot\_min/max} to 0, all \texttt{hash}, \texttt{bot\_hash} and \texttt{top\_hash} to 0 (hash of empty set). This takes $\Ot{n}$ time in total.

\subparagraph*{\Push{} operation.}
This operation uses constant number of basic operations plus manipulations of hashes, therefore it is $\Oh{t} = \Ot{1}$. (Recall that $t$ is the hash length, so it will come up in every operation. However, as $t = \Oh{\log n}$, it does not change the complexity meaningfully).

\subparagraph*{\Mark{} operation.}
The affected interval $[l;r]$ can be found in $\Ot{1}$ using binary search. It remains to establish the time complexity of the calls to $\UpdateBot$ or $\UpdateTop$.
We analyze complexity of $\UpdateBot$ operation as $\UpdateTop$ is analogous.
Let $Y$ denote the top side of the query square and $y$ denote its $y$-coordinate.

Excluding recursive calls, we make $\Oh{t}$ operations inside $\UpdateBot$ function, so we need only to calculate how many recursion calls we perform. There are two cases when we use recursion: when $[a,b] \cap [l,r]$ is neither $[a,b]$ nor $\varnothing$ and when $[a,b] \subseteq [l,r]$ and the node is not $y$-disjoint. We will calculate how many times we use recursion in the first and in the second case.

\begin{lemma}
    The $\UpdateBot$ procedure visits $\Oh{\log{n}}$ nodes with $[l,r] \cap [a,b]$ being neither $[a,b]$ nor $\varnothing$.
\end{lemma}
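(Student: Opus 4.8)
The plan is to prove the standard fact that a segment tree query on an interval $[l,r]$ touches only $O(\log n)$ nodes whose associated interval $[a,b]$ is \emph{properly} split by $[l,r]$, i.e. satisfies $[a,b] \cap [l,r] \neq \emptyset$ and $[a,b] \not\subseteq [l,r]$. The key structural observation is that such a node must contain exactly one of the two endpoints $l-1/l$ or $r/r+1$ of the query boundary strictly inside its interval, or more precisely, the query interval's left or right boundary ``cuts through'' $[a,b]$. I would first set up notation: let $\texttt{Proper}$ denote the set of nodes visited by $\UpdateBot$ with $[a,b] \cap [l,r] \neq [a,b], \emptyset$, and recall that $\UpdateBot$ only recurses into a node's children when either that node is in $\texttt{Proper}$, or when $[a,b] \subseteq [l,r]$ but the node is not $y$-disjoint (the latter case is handled in a separate lemma).

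Next, I would argue by a root-to-leaf path argument. Observe that once a node $N$ with interval $[a,b] \subseteq [l,r]$ is reached, recursion from $N$ never produces a descendant in $\texttt{Proper}$, since all descendants also have intervals contained in $[l,r]$. Similarly, once a node with $[a,b]\cap[l,r]=\emptyset$ is reached, the procedure returns immediately. Hence every node in $\texttt{Proper}$ lies on one of the two root-to-leaf paths: the path tracking the ``left boundary'' $l$ and the path tracking the ``right boundary'' $r$. Concretely, I would show: at each level of the tree there are at most two nodes whose interval properly straddles $[l,r]$ — one is an ancestor of the leaf for $v_l$ that does not entirely lie in $[l,r]$, and one is an ancestor of the leaf for $v_r$ with the same property — and moreover these nodes' children are split so that at most one child per side continues to be ``properly split'' while the other child is either fully inside or fully outside. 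Since the tree has height $O(\log n)$, this yields $|\texttt{Proper}| = O(\log n)$.

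The slightly delicate point is to make precise why, at a node $N$ that is properly split, \emph{at most one} child is again properly split while the other child either is fully contained in $[l,r]$ or fully disjoint from it — because a priori both children could be properly split. The resolution is the usual one: if $N$'s interval $[a,b]$ properly intersects $[l,r]$ with $a \le l \le b$ but also $a\le r\le b$ (so both endpoints fall inside), then $N$ is the \emph{unique} such ``top'' node, and below it the left-boundary and right-boundary searches separate into the left and right subtrees respectively; whereas if only one of $l, r$ falls strictly inside $[a,b]$, then exactly one child contains that endpoint and is the only candidate to be properly split, the sibling being wholly inside or wholly outside $[l,r]$. Formalizing this dichotomy — the ``unique fork node'' plus the two descending chains — is the main obstacle, but it is the textbook segment-tree analysis and I would carry it out by induction on the depth of $N$, counting at most $O(1)$ properly-split nodes per level and summing over the $O(\log n)$ levels. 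Finally I would note that the $\Push$ calls and the $O(t)$ hash manipulations at each such node contribute only a polylogarithmic factor, so the total work over properly-split nodes is $\Ot{1}$, consistent with the claimed bound.
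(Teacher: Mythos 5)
Your proposal is correct and takes essentially the same approach as the paper: both observe that a properly-split node's interval must contain one of the query boundary indices $l$ or $r$, so at each of the $\Oh{\log n}$ levels (where intervals at the same level are pairwise disjoint) at most two such nodes can exist. The paper states this in two sentences; your longer ``fork node plus two descending chains'' formalization is the same counting argument spelled out in more detail.
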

\begin{proof}
    Notice that the nodes at the same level are always responsible for disjoint intervals of points. The conditions imply that if we use recursion at node $N$, then the interval which $N$ is responsible for must contain one of $l,r$. Therefore there are at most $2$ nodes at the same level for which we use recursion. Conclusion follows. 
\end{proof}

Now we analyze structure of boundaries to bound the number of times they can intersect with a horizontal segment (non-$y$-disjoint node implies intersection of horizontal line with boundary).

\begin{lemma}
    For any $r\in\mathbb{R}$, the bottom boundary is weakly bitonic when restricted to the interval $[r,r+1]$, i.e. there are no 3 points $p_1,p_2,p_3$ on the boundary\,\footnote{Note that the points in the lemma are any boundary points, not necessarily points stored in the structure.} such that $r \leq p_1^{(x)} < p_2^{(x)} < p_3^{(x)} \leq r+1$ and $p_1^{(y)} < p_2^{(y)} > p_3^{(y)}$.

\end{lemma}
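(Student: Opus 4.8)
The plan is to take the picture literally: the bottom boundary is the upper envelope of unit-width plateaus (one per bottom square) resting on the floor $y=0$, and a \emph{strict} local maximum of such an envelope cannot fit inside a window of width $1$, because the plateau realizing that maximum already has width $1$.

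The first step is to make the object precise. Recall that a bottom square is a unit square, so if it has centre $(c,h)$ then its $x$-projection $[c-\tfrac12,c+\tfrac12]$ has length exactly $1$ and its top edge is at height $h+\tfrac12$; and recall that the bottom boundary is the graph of the profile function $f$, where $f(x)$ is the maximum of $0$ and of the top-edge heights $h+\tfrac12$ of the (finitely many) bottom squares whose $x$-projection contains $x$, so that whenever $f(x)>0$ this maximum is attained by some such square. In particular every boundary point $p$ satisfies $p^{(y)}=f(p^{(x)})\ge 0$. The only delicate point of the whole argument is precisely this reading of ``$p$ lies on the bottom boundary'' as ``$p^{(y)}=f(p^{(x)})$'': vertical jump segments of the topological frontier of the region beneath the envelope must \emph{not} be counted as boundary points, since otherwise a single tall bottom square spanning the whole window would already realize the forbidden configuration. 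Past pinning this down I foresee no obstacle.

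The second step is a short contradiction argument. Suppose $p_1,p_2,p_3$ lie on the boundary with $r\le p_1^{(x)}<p_2^{(x)}<p_3^{(x)}\le r+1$ and $p_1^{(y)}<p_2^{(y)}>p_3^{(y)}$. Then $p_2^{(y)}>p_1^{(y)}\ge 0$, so $f(p_2^{(x)})=p_2^{(y)}>0$ is attained by some bottom square $S$ with centre $(c,h)$, meaning $|p_2^{(x)}-c|\le\tfrac12$ and $h+\tfrac12=p_2^{(y)}$. If $p_1^{(x)}$ belonged to $[c-\tfrac12,c+\tfrac12]$ then $S$ would cover it, forcing $p_1^{(y)}=f(p_1^{(x)})\ge h+\tfrac12=p_2^{(y)}$ and contradicting $p_1^{(y)}<p_2^{(y)}$; together with $p_1^{(x)}<p_2^{(x)}\le c+\tfrac12$ this yields $p_1^{(x)}<c-\tfrac12$. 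By the symmetric reasoning (using $p_3^{(x)}>p_2^{(x)}\ge c-\tfrac12$ and $p_3^{(y)}<p_2^{(y)}$) we get $p_3^{(x)}>c+\tfrac12$. Hence $p_3^{(x)}-p_1^{(x)}>(c+\tfrac12)-(c-\tfrac12)=1$, contradicting $p_1^{(x)},p_3^{(x)}\in[r,r+1]$. The analogous statement for the top boundary (no strict local minimum inside a width-$1$ window) follows by the same argument with $\max$ replaced by $\min$ and the floor $y=0$ by the ceiling $y=1$.
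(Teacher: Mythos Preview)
Your proof is correct and follows essentially the same argument as the paper: both take the bottom square realizing the boundary at $p_2$, observe that its $x$-projection has width $1$, and conclude that it must contain the $x$-coordinate of $p_1$ or $p_3$ (you phrase this contrapositively as $p_3^{(x)}-p_1^{(x)}>1$), which forces that point's boundary height to be at least $p_2^{(y)}$. Your write-up is more careful than the paper's in two respects---you make explicit why $p_2^{(y)}>0$ so that a realizing square actually exists, and you pin down that the boundary is the graph of the envelope function rather than the full topological frontier---but these are clarifications rather than a different approach.
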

\begin{proof}
    Suppose that this is not the case. Then the distance between $p_1^{(x)}$ and $p_3^{(x)}$ is at most 1. Now, the square added by $\Mark$ covering $p_2$ had width 1, therefore it must have covered also $p_1$ or $p_3$. But then, as $p_2$ is strictly higher than both of them, $p_1$ or $p_3$ cannot be on the boundary -- a contradiction. 
\end{proof}

Analogously we can prove a similar claim for the top boundary.
\begin{observation}
\label{obs:intersections}
    In a single \Mark call, any boundary can intersect the top segment $Y$ of the query square in at most 2 points.
\end{observation}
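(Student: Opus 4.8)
The plan is to obtain Observation~\ref{obs:intersections} as an immediate consequence of the bitonicity lemma just established. First I would record the one geometric fact that makes that lemma applicable: the segment $Y$ is always a horizontal side of a \emph{unit} square issued by some $\Mark$ query — the top side of the query square when updating the bottom boundary, and symmetrically the bottom side when updating the top boundary — so its projection onto the $x$-axis is an interval $[r, r+1]$ of length exactly $1$. Applying the previous lemma to the bottom boundary, and its stated analogue to the top boundary, the boundary restricted to this $x$-range is weakly bitonic; traced from left to right as a curve, its height first weakly increases to a single peak and then weakly decreases. A unimodal curve of this kind meets a fixed horizontal level in at most two places — at most one crossing while ascending and at most one while descending — which is precisely the claimed bound, for either boundary.

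The only subtlety is the degenerate case in which $Y$ lies at exactly the height of a horizontal stretch of the boundary, so that the intersection is a sub-segment rather than a pair of isolated points. I would dispose of this using the standing general-position convention fixed at the start of the section: the height of $Y$ equals $y_0 \pm \tfrac{1}{2}$ for the query center $(x_0, y_0)$, whereas every flat portion of the boundary sits at a height $y_i + \tfrac{1}{2}$ (bottom boundary) or $y_i - \tfrac{1}{2}$ (top boundary) inherited from an earlier query, or at the stripe heights $0$ or $1$; under the no-ties assumption all of these differ from the height of $Y$, so $Y$ crosses the boundary only transversally and ``at most two points'' holds literally. If one preferred not to lean on general position, the same argument caps the number of maximal contact pieces (points or whole segments) at two, which is the form actually needed downstream, where every such contact drives the $\UpdateBot$/$\UpdateTop$ recursion into at most one extra child per level on each side.

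I do not anticipate a genuine obstacle here: once the length-$1$ observation is in place the statement is a one-line corollary of the bitonicity lemma, and the flat-overlap case is routine bookkeeping. The only mild care required is to state ``intersects $Y$ in at most two points'' precisely enough that it plugs cleanly into the subsequent count of non-$y$-disjoint nodes visited by the update procedure, namely that at most $\Oh{\log n}$ such nodes are encountered.
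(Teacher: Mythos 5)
Your argument is essentially the paper's: the key facts are that the segment $Y$ has length exactly $1$ (as the top or bottom side of a unit query square), so the preceding bitonicity lemma applies on $Y$'s $x$-range, and a curve with a single extremum on that range meets a horizontal level at most twice; the paper phrases this as a contradiction on the three leftmost crossings, you phrase it directly, but both routes hinge on the same lemma. One small inaccuracy worth flagging: the lemma forbids a strict local \emph{maximum} within any width-$1$ window, so the bottom boundary there is valley-shaped (weakly decreasing then weakly increasing), not peak-shaped as you wrote --- and symmetrically the top boundary has no strict local minimum. This reversal is harmless for the count, since a horizontal line crosses a curve that is unimodal in either sense at most twice, but it's worth stating the direction correctly. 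Your treatment of the degenerate flat-overlap case via the stated no-ties convention is a reasonable addition that the paper leaves implicit.
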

\begin{proof}
    Assume that it is the bottom boundary and note that $Y$ has length $1$, so we can apply the previous lemma. Suppose by the contrary that 3 leftmost intersections are at points $p_1,p_2,p_3$ where $p_1^{(x)} < p_2^{(x)} < p_3^{(x)}$. Now, if any boundary point $p$ at interval $(p_1^{(x)};p_2^{(x)})$ is above $Y$ segment, then $p_1^{(x)} < p^{(x)} < p_2^{(x)}$ and $p_1^{(y)} < p^{(y)} > p_2^{(y)}$, contradiction. We proceed similarly if any point at interval $(p_2^{(x)};p_3^{(x)})$ is above $Y$ segment. Therefore any boundary points $p\in (p_1^{(x)};p_2^{(x)}), q\in (p_2^{(x)};p_3^{(x)})$ are below $Y$ segment. But then $p^{(x)} < p_2^{(x)} < q^{(x)}$ and $p^{(y)} < p_2^{(y)} > q^{(y)}$ -- a contradiction.
\end{proof}

For any $\Mark$ operation we have at most $4$ intersections of horizontal segment $Y$ with both boundaries. We conclude that at every level of the tree there are at most 4 non-$y$-disjoint nodes. In total, there are $\Oh{\log{n}}$ non-$y$-disjoint nodes. 

Non-$y$-disjoint nodes for which $[a,b]\subseteq [l,r]$ (i.e. nodes where we use recursion) are subset of non-$y$-disjoint nodes. But non-$y$-disjoint nodes are exactly those which contain an intersection of $Y$ with some boundary, therefore there are at most $\Oh{\log{n}}$ such nodes.

In total, we use recursion $\Oh{\log{n}}$ times, so time complexity of $\Mark$ operation is $\Oh{t\log{n}} = \Ot{1}$.

\subparagraph*{\ListDifferences{}.} To analyze the complexity of a single call of this operation, let us put a token on all leaves which contribute to resulting symmetric difference. There are exactly $|D|$ such leaves, therefore there are $\Oh{|D|\log{n}}$ nodes that are ancestors of at least one on them. Observe that we will use recursion only when invoked on such nodes, otherwise we will simply return. Therefore, the time complexity of $\ListDifferences$ operation is $\Oh{t|D|\log{n}} = \Ot{|D|}$.

\subsection{Full data structure for unit squares}
\label{sec:squares-geo-structure}

Here we prove the Lemma \ref{lemma:structure-geo}a), i.e. for $\mathcal{F}$ being an axis-aligned unit square, therefore proving Theorem \ref{thm:polygons}a) for intersection graph of unit squares.

\begin{proof}[Proof of Lemma \ref{lemma:structure-geo}a)]

Let us divide the plane into stripes of height $1$. Let us consider only the stripes containing at least one point from $V$, denoting them $Q_1, Q_2,\cdots, Q_r$. For every such stripe, we construct its own Single Stripe Data Structure from Subsection \ref{sec:stripes-structure}. This step takes a total of $\Ot{n}$ time, as initialization of every stripe takes time proportional to number of points inside this stripe and there are $n$ points in total.

Consider an operation $\Mark()$ with some square $H$. The square $H$ intersects at most $2$ stripes from our collection. We can update these stripes separately to mark appropriate points inside them in the same complexity as $\Mark()$ from SSDS, i.e. $\Ot{1}$.

There is, however, a problem with the $\ListDifferences$ operation. Given two collections of stripes from different moments of time we would like to list the symmetric difference between the sets of marked points. However, we cannot iterate over all stripes and run $\ListDifferences$ on every one of them, as it would take $\Ot{r + |D|}$ time, where $r$ is number of stripes. This can be $\Omega(n)$ in worst case, which is way more than we need. Hence, we require some way to identify the stripes where there is at least one point belonging to the resulting symmetric difference.

To achieve that, we use an auxiliary structure virtually identical to Simple Set Retrieval from Section \ref{sec:persistent-segment-tree}. We associate the leaves of the segment tree (i.e. elements of the underlying set) with the stripes $Q_1, \ldots, Q_r$. Each leaf stores a pointer to the representative of a stripe and the hash of its marked points. This way we can efficiently compare two sets of stripes and identify leaves with different hashes, which are the stripes with non-empty symmetric difference between two versions. This allows us to determine stripes we need in time $\Ot{|D|}$. We also need to update the SSR structure appropriately after each $\Mark$ operation.

\end{proof}

\subsection{Data structure for convex polygons}
\label{sec:full-geo-structure}

Here we prove the Lemma \ref{lemma:structure-geo}b) for $\mathcal{F}$ being an $s$-sided convex polygon with center of symmetry.
By Lemma \ref{lemma:symmetric} the Theorem \ref{thm:polygons}b) follows.

\begin{proof}[Proof of Lemma \ref{lemma:structure-geo}b]

The symmetry implies that the sides of $\mathcal{F}$ can be split into parallel pairs. Assume that longest side of $\mathcal{F}$ has length 1 and rotate the whole plane so that the chosen side is parallel to the $OY$ axis. Note that the height of $\mathcal{F}$ is now at most $s$.

We want to further transform plane before proceeding. Let $l,l'$ be pair of sides parallel to $OY$ axis. Consider affine transformation which sends $l,l'$ to sides of an axis-aligned unit square. Such an affine transformation preserves length of segments parallel to $OY$, therefore image of $\mathcal{F}$ under this transformation also has height at most $s$. Additionally, affine transformations preserve central symmetry. Now we can assume that $\mathcal{F}$ has center of symmetry, height at most $s$, two sides $l,l'$ of length 1 parallel to $OY$ axis and contains an unit square with sides $l,l'$. 


We now partition the polygon $\mathcal{F}$ into a set of $\Oh{s}$ trapezoids using vertical lines.
More precisely, we apply a vertical cut going through vertex $v$ for each vertex $v$.
Since the transformed shape contains an axis-aligned unit square, non-vertical sides of each trapezoid cannot intersect the same stripe. 

We use the same division of plane into horizontal stripes as in the unit-square case. The boundaries of stripes partition each trapezoid into few ($\Oh{s}$ for every trapezoid) parts. In total, there are at most $\Oh{s^2}$ parts of the polygon and let us assume that \Mark operation will deal with every part of the polygon separately. But each part has now a similar property as before -- it can be assigned either to bottom or top boundary of some stripe. Each part is either a rectangle with both sides at the boundaries of the stripe or a trapezoid with single side at boundary of stripe and two sides perpendicular to that boundary (see Figure \ref{fig:trapezoids}).

\begin{figure}
    \begin{center}
        \includegraphics[scale=1]{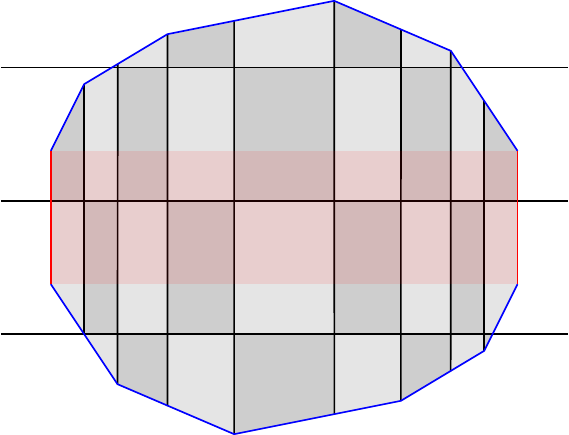}
    \end{center}
    \caption{Partition of a polygon into trapezoids. For clarity the picture was stretched along $OX$ axis. (In particular, the red rectangle is in fact a unit square.)}
    \label{fig:trapezoids}
\end{figure}

\begin{figure}
    \begin{center}
        \includegraphics[scale=1]{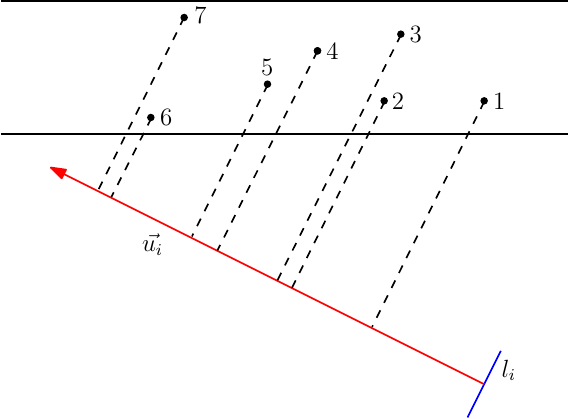}
    \end{center}
    \caption{Order of points with respect to direction $\vec{u_i}$ determined by side $l_i$.}
    \label{fig:dir-sorting}
\end{figure}

Let $l_1 ,l_2, \ldots, l_s$ be sides of $\mathcal{F}$. Let $\vec{u_1}, \vec{u_2}, \ldots,\vec{u_s}$ be vectors perpendicular to the sides $l_1, \ldots, l_s$ respectively, pointing outside of $\mathcal{F}$.
We modify nodes of the Single Stripe Data Structure from Section \ref{sec:stripes-structure} to hold the following information:

\begin{itemize}
\item For every $1\leq i\leq s$, each node stores list of its points sorted along $\vec{u_i}$ (Figure \ref{fig:dir-sorting}). For every such list we also keep appropriate list with prefix hashes;
\item For every $1\leq i\leq s$, each node maintains the extremes of its bottom and top boundary in direction $\vec{u_i}$. 
\end{itemize}

Additionally, we need to slightly change the definition of a simple node: a node is called bottom-simple (top-simple) if its bottom boundary (top-boundary) is a segment (not necessarily horizontal, only parallel to some side of $\mathcal{F}$).

The initialization and $\ListDifferences$ operation can be easily adapted to the new setting.
The non-trivial part is the $\Mark$ operation, which is now given a trapezoid.
We handle them similarly: we descend recursively until we find nodes that can be directly updated.
We just need a way to decide whether the side of trapezoid is disjoint from both boundaries.
This can be tested by comparing the extreme points of the both boundaries for direction $\vec{u_i}$ perpendicular to the trapezoid's side.

We need to show that time complexity of all operations is still within expected bounds. Operation $\Push$ is affected only by the fact that it needs to update $\Oh{s}$ values now, but complexity stays $\Ot{1}$ for constant $s$. It follows that $\ListDifferences$ also stays at $\Ot{|D|}$ time complexity.

Now we analyze complexity of $\Mark$. Inspection of time complexity proof from Section \ref{sec:stripes-structure} shows that we only need to check if number of intersections of boundary with any trapezoid is still constant (maybe dependent on $s$). Indeed, this is the case.

Let $s=2g$ and denote consecutive vertices of $\mathcal{F}$ as $w_1,w_2,\ldots,w_{2g}$. For $1 \leq i \leq g$ let $l_i$ be side between vertices $w_i$ and $w_{i+1}$ translated so that $w_i$ coincides with $(0,0)$. Now $(0,0) \in l_i$ for each $i$. Simple induction shows that \[\mathcal{F} \equiv l_1\oplus l_2 \oplus \ldots \oplus l_g.\]
For every $1 < j \leq g$ we get $l_1 \oplus l_j \subseteq l_1\oplus l_2 \oplus \ldots \oplus l_g$. We can now proceed to prove that each trapezoid intersects in at most $2$ points with any boundary.

Suppose that non-vertical side of trapezoid is congruent to $l_j$ for some $j$. Without loss of generality assume that this side intersects bottom boundary at segment originating from side $l_1$ of some already marked copy of $\mathcal{F}$ and $\mathcal{F}$ is at right side of line containing $l_1$. We will show that right end of side of the trapezoid is contained in marked area below bottom boundary. From previous paragraph parallelogram spanned by $l_1$ and $l_j$ placed alongside of $l_1$ of $\mathcal{F}$ is fully contained inside $\mathcal{F}$. Clearly this parallelogram contains right end of side of trapezoid. By convexity, every point on side of trapezoid from intersection point to its right end is already marked. This shows that side can intersect with any boundary at most 2 times.

This implies that complexity of $\Mark$ operation for $s$-sided polygons remains $\Ot{1}$. Conclusion follows.

\end{proof}

\end{document}